\documentclass[a4paper,11pt]{article}
\pdfoutput=1
\usepackage[utf8]{inputenc}
\usepackage[T1]{fontenc}
\usepackage[english]{babel}
\usepackage{csquotes}
\usepackage[margin=2.5cm]{geometry}

\usepackage{amsmath,amsthm}
\usepackage{amssymb}
\usepackage{mathtools}
\usepackage{xfrac}
\allowdisplaybreaks[2]

\usepackage{newtxtext,newtxmath}
\usepackage{microtype}
\usepackage{setspace}
\usepackage{graphicx}
\usepackage{tikz}
\usetikzlibrary{arrows.meta}
\usetikzlibrary{positioning}
\graphicspath{{./figures/}}
\usepackage{float}
\usepackage[section]{placeins} %
\usepackage[labelfont=bf,labelsep=period]{caption}
\usepackage{subcaption}

\usepackage{array}
\usepackage{booktabs}

\usepackage{xcolor}
\usepackage{tcolorbox}
\tcbuselibrary{theorems,skins,breakable}

\tcbset{
  protocol/.style={
    enhanced,
    boxrule=0.6pt,
    colback=gray!5,
    colframe=black,
coltitle=black,
    title={#1},
    sharp corners,
    before upper={\vspace{2mm}\parindent15pt}, after title={\par\smallskip},
    attach boxed title to top left={
      yshift=-3mm, xshift=4mm
    },
    boxed title style={
      colback=gray!5,
      colframe=black,
      size=small,
      boxrule=0.6pt,
    }
  }
}

\usepackage[shortlabels]{enumitem}
\usepackage{fmtcount}
\setlist[itemize]{itemsep=0pt,parsep=0pt,topsep=0pt,partopsep=0pt}

\newcommand{\step}[1]{\noindent\underline{\textsc{#1}}\par}
\newcommand{\stepmin}[1]{\vspace{-0.5\baselineskip}\noindent\underline{#1}\par}

\newlist{codelines}{enumerate}{1}
\setlist[codelines,1]{
  label=\texttt{\padzeroes[2]{\arabic*}:}, leftmargin=2.5em,
  labelwidth=0em,
  labelsep=1.0em,
  itemsep=0pt,
  parsep=0em,
  font=\ttfamily
}

\usepackage{url}
\usepackage[colorlinks=true,linkcolor=black,citecolor=blue,urlcolor=blue]{hyperref}
\usepackage{bookmark}
\usepackage{orcidlink}
\usepackage{fontawesome5}

\usepackage[style=alphabetic,backend=biber,maxbibnames=99]{biblatex}
\usepackage{xspace}
\usepackage[nodayofweek]{datetime}
\usepackage{authblk}

\usepackage{todonotes}

\floatstyle{ruled}
\newfloat{algo}{htbp}{algo}
\floatname{algo}{Algorithm}
\newfloat{interface}{htbp}{interface}
\floatname{interface}{Interface}
\theoremstyle{definition}
\newtheorem{definition}{Definition}[section]
\newtheorem{example}[definition]{Example}

\theoremstyle{plain}
\newtheorem{theorem}[definition]{Theorem}
\newtheorem{lemma}[definition]{Lemma}

\newtheorem{corollary}[definition]{Corollary}

\usepackage{etoolbox}
\AtBeginEnvironment{definition}{\begin{samepage}}
\AtEndEnvironment{definition}{\end{samepage}}

\newcommand{\etal}{\emph{et al.}\xspace}
\renewcommand{\paragraph}[1]{\par\medskip\noindent\textbf{#1}\enspace}

\newcommand{\set}[1]{\ensuremath{\{#1\}}}

\newcommand{\true}{\ensuremath{\texttt{TRUE}}\xspace}
\newcommand{\false}{\ensuremath{\texttt{FALSE}}\xspace}
\newcommand{\consnum}[1]{\ensuremath{\mathsf{cons}(#1)}\xspace}
\newcommand{\op}[1]{\ensuremath{\textsl{#1}}\xspace}
\newcommand{\tr}[1]{\ensuremath{\op{transfer}(#1)}\xspace}
\newcommand{\rd}[1]{\ensuremath{\op{read}(#1)}\xspace}

\newcommand{\att}{\ensuremath{\mathsf{\mathcal{E}}}\xspace}
\newcommand{\cuat}{\ensuremath{\mathsf{CUAT}}\xspace}
\newcommand{\luat}{\ensuremath{\mathsf{LUAT}}\xspace}

\setenumerate[1]{label={(\arabic*)}}
\setitemize[1]{label=\textbullet}

\usepackage{silence}
\title{\textbf{The Consensus Number of Untraceable Cryptocurrencies}}
\author[1]{Christian Cachin\,\orcidlink{0000-0001-8967-9213}}
\author[1]{David Lehnherr\,\orcidlink{0000-0002-4956-4064}}
\author[1]{Juan Villacis\,\orcidlink{0009-0006-0110-8613}}
\author[1]{François-Xavier Wicht\,
  \href{mailto:francois-xavier.wicht@unibe.ch}{\textcolor{black}{\footnotesize\faEnvelope}}\,
  \orcidlink{0009-0005-6090-7901}}
\affil[1]{Institute of Computer Science, University of Bern, Switzerland}
\date{}

\begin{document}
\maketitle

\begin{abstract}\noindent
Sender untraceability hides the account spent by a cryptocurrency transfer
among a set of candidates, its \emph{masking set}.  What a transfer does to
that set separates two designs: classical schemes retain the whole set and
append a nullifier marking the spent account, so the ledger grows with every
transfer; constant-state schemes instead consume and replace the entire set.
We ask how this choice affects synchronization.

We formalize the two designs as the \emph{linear untraceable asset transfer}
(LUAT) and \emph{constant untraceable asset transfer} (CUAT) objects and locate
them in the consensus hierarchy.  In LUAT, transfers from distinct accounts
commute.  Its consensus number is $2$, compared with $1$ for standard asset
transfer, independently of the masking-set size and of the untraceability
notion, and LUAT is starvation-free.  Partitioning the accounts into fixed
masking sets lets exhausted sets be garbage-collected without increasing that
number.

In CUAT, a transfer consumes and replaces every account of its masking set, so
two transfers whose sets intersect cannot both take effect.  We formalize this
with the \emph{conflict graph} on masking sets, whose edges join sets sharing an
account.  Under weak untraceability, which protects a transaction in isolation,
the consensus number is unbounded already for one-round protocols.  Under strong
untraceability, which protects against an observer of the complete history,
untraceability holds on a history exactly when any two accounts sharing a
masking set occur in the same number of the masking sets in it.  This uniform
incidence bounds the conflict graph, and matching constructions attain it, so
the consensus number is determined exactly and grows quadratically in the
masking-set size.  Finally, CUAT is not starvation-free.  The two objects
therefore pay for the same privacy differently: LUAT in storage, CUAT in
synchronization and fairness.
\end{abstract}

\newpage
\tableofcontents
\newpage

\section{Introduction}
\label{sec:intro}

Cryptocurrencies enable peer-to-peer value transfer without trusted
intermediaries. At their core, these systems maintain a \emph{ledger}, a
replicated data structure recording the transaction history and the current
state of all accounts. While systems such as Bitcoin provide
pseudonymity, privacy-preserving cryptocurrencies aim for stronger guarantees.
In particular, \emph{sender untraceability} prevents an external observer from
identifying which account initiated a transaction within a set of candidate
accounts, called here the \emph{masking set}. Achieving this property while
preventing double spending poses a
fundamental challenge for system design.

The standard approach to sender untraceability relies on two complementary data
structures: an \emph{allow-set} containing all potentially unspent accounts and a
\emph{deny-set} of nullifiers marking spent accounts. To transfer value, a user
proves in zero knowledge that an account in the allow-set corresponds to a
previously unused nullifier, without revealing which one. The limitation of this
approach is indefinite state growth: removing an account may reveal that it was
spent, whereas removing its nullifier may permit a double spend. Partitioning
the accounts into pre-declared masking sets of bounded size recovers part of the
loss. A set of $k$ accounts admits at most $k$ transfers, after which it is
exhausted and both the set and its nullifiers can be
collected~\cite{DBLP:conf/csfw/ChowELRW23,DBLP:journals/popets/CachinW25}. This
slows the growth without bounding it, and it adds a second shared object to the
system, whose synchronization cost we also determine.

Recent work avoids this growth through account re-randomization. Quisquis and
related systems~\cite{DBLP:conf/asiacrypt/FauziMMO19,DBLP:journals/iacr/AlupothaGC24,DBLP:conf/asiacrypt/MadathilS25}
invalidate and replace every account in a fixed-size masking set whenever one of
them transfers value. Cryptographic randomization hides which balances changed,
and the ledger retains only the current account representations. The resulting
state is bounded by the number of active accounts, but two transfers whose
masking sets overlap cannot both succeed. Three questions follow, and the paper
answers all three.
\begin{enumerate}
  \item \emph{What synchronization power does each design provide?}
  \item \emph{What does untraceability itself cost?}
  \item \emph{What is traded when unbounded state is replaced by masking-set
    contention?}
\end{enumerate}

The synchronization consequences of this distinction are not captured by the
cryptographic security definitions alone.  We study them through the consensus
hierarchy~\cite{DBLP:journals/toplas/Herlihy91}, building on the shared-memory
formulation of asset transfer by Guerraoui
\etal~\cite{DBLP:journals/dc/GuerraouiKMPS22}.  The consensus number of a shared
object is the largest number of processes that can solve wait-free consensus
using instances of that object and atomic registers.  Standard asset transfer
has consensus number one: transfers on independently owned accounts require no
agreement on their relative order.  Consensus numbers therefore provide a
natural language for asking whether sender untraceability preserves this
independence, and how the answer changes when the anonymity mechanism is also
used to keep the state bounded.

This perspective also captures the parallelism available to a cryptocurrency
implementation. Systems such as Sui and Zef exploit the consensus number one of
standard asset transfer~\cite{DBLP:journals/dc/GuerraouiKMPS22} by processing
independent accounts without a global total order and invoking stronger
coordination only when accounts are shared.
For private transfers, a higher consensus number identifies a larger potential
contention structure and therefore limits the same form of parallel commit and
shardable verification. Our bounds measure how much of this independence is
preserved by the two untraceability mechanisms.

We formalize the two designs as shared objects parameterized by the masking-set
size $\lambda$.  The \emph{linear untraceable asset transfer} object
$\luat_\lambda$ models the allow-set/deny-set approach.  A transfer consumes one
sender account, creates previously unused account representations, and names the remaining members of its
masking set only as decoys.  The \emph{constant untraceable asset transfer}
object $\cuat_\lambda$\footnote{Pronounced \textit{Coo-at}.} models account
re-randomization.  A transfer atomically consumes and replaces every account in
its masking set, although only the sender and recipient balances change.  In
both objects the sender is hidden among $\lambda$ accounts; the distinction is
that masking-set overlap is inert in $\luat_\lambda$ and constitutes a conflict
in $\cuat_\lambda$.

This distinction already determines the behavior of the linear-state object.
We prove that a $\luat_\lambda$ object has consensus number exactly
$2$, independently of $\lambda$.  Transfers spending distinct accounts commute
and both succeed, so increasing the masking-set size creates no additional
contention.  The object nevertheless exceeds the consensus number of standard
asset transfer because a successful operation observes the resulting deny-set
and thereby learns its position among concurrent transfers.  When $k$ processes
share an account, contention over its unique nullifier raises the consensus
number to $\max(k,2)$.  

The same separation persists in the progress and storage properties of
$\luat_\lambda$.  A transfer that is eligible when invoked cannot be invalidated
by activity on accounts owned by other processes, and the object is therefore
starvation-free.  Its cost is the monotone growth of the allow-set and deny-set.
We formalize the partitioning object used to garbage-collect exhausted,
pre-declared masking sets and determine its consensus number.  Along the
executions relevant to garbage collection, the object is implementable from
fetch-and-add and implements two-process consensus; its consensus number is
therefore exactly $2$ for every part size at least two.  Partitioning does not
increase the synchronization power required by the linear-state design.

The analysis of $\cuat_\lambda$ is governed instead by the intersection pattern
of its masking sets.  We associate with a protocol a conflict graph whose
vertices are the masking sets it may use and whose edges join sets that share an
account index.  Adjacent transfers cannot both succeed, whereas non-adjacent
transfers commute.  For \emph{one-round} protocols, in which every process
invokes one transfer, this graph gives an exact characterization: a clique of
size $d$ implements $d$-process consensus by an attempt-and-adopt protocol, and
the masking sets of a one-round consensus protocol form a clique at a critical
configuration.  Hence the one-round consensus power equals the largest
admissible clique.  This regime captures the synchronization within a single
payment attempt, where the masking set is fixed before submission.

Unrestricted protocols can use $\cuat_\lambda$ in several rounds and choose
later masking sets as a function of earlier outcomes.  The relevant structure
at a critical configuration is then weaker than a clique: pending transfers
from opposite valency classes must intersect, producing a complete bipartite
subgraph of the conflict graph, but transfers within one class need not
intersect.  This distinction is essential.  It separates the synchronization
required by a single transaction from the full consensus power of the object,
and it allows multi-round protocols to solve consensus among strictly more
processes than any one-round protocol.

We consider two notions of sender untraceability, distinguished by the
adversary's view.  Weak untraceability protects each transaction in isolation.
We write $\cuat_\lambda^-$ for CUAT under this requirement and
$\cuat_\lambda^+$ for CUAT under strong untraceability.
Weak untraceability imposes no global restriction on masking-set incidence, and a single hub
account can occur in arbitrarily many masking sets.  These sets form
arbitrarily large cliques, so $\consnum{\cuat_\lambda^-}=\infty$ already for
one-round protocols.  Strong untraceability protects against an adversary that
observes the complete transaction history and uses repeated account appearances
to update its beliefs.  We prove that strong untraceability holds on a history
exactly when any two accounts sharing a masking set appear in the same number of
them; on the pairwise-intersecting families a consensus protocol uses, this
makes the incidence equal across every account occurring in the history.  This uniform-incidence characterization turns the
privacy requirement into a combinatorial constraint on the conflict graph.

For one-round protocols, uniform incidence together with pairwise intersection
bounds the number of masking sets by $\lambda^2-\lambda+1$.  When
$\lambda-1$ is a prime power, the lines of $\mathrm{PG}(2,\lambda-1)$, the
projective plane of order $\lambda-1$, meet the bound: every line contains $\lambda$ points,
every two lines intersect, and every point has the same incidence.  For general
$\lambda\ge3$, we construct a cyclic difference cover with
$\lfloor\lambda^2/2\rfloor$ translates.  It yields a uniformly incident,
pairwise-intersecting family and therefore a one-round protocol of that size.
Thus the one-round consensus number under strong untraceability lies between
$\lfloor\lambda^2/2\rfloor$ and $\lambda^2-\lambda+1$, with equality at the
upper bound whenever $\lambda-1$ is a prime power.

For unrestricted protocols, we determine the consensus number exactly:
$\consnum{\cuat_\lambda^+}=4\lfloor\lambda/2\rfloor\lceil\lambda/2\rceil$ for
every $\lambda\ge3$, that is, $\lambda^2$ when $\lambda$ is even and
$\lambda^2-1$ when it is odd.  The upper bound follows from a refined incidence
count over the complete bipartite intersection at a critical configuration.
For the matching lower bound, two groups of processes first run the cyclic
one-round protocol independently and then reconcile their group decisions on a
common object.  The second round uses a cross-cover family in which every set
from one group intersects every set from the other while all accounts retain
uniform incidence.  A grid supplies such a family for even $\lambda$; for odd
$\lambda$ we give a cyclic construction that alternates the two possible
balanced incidence patterns.  The resulting protocol attains the upper bound
without any arithmetic condition on $\lambda$.  In particular, the
multi-round consensus number strictly exceeds the one-round upper bound for
every $\lambda\ge3$.

Finally, the conflict structure that gives $\cuat_\lambda$ its synchronization
power also prevents starvation-freedom under asynchronous scheduling.  A
process can repeatedly issue zero-value transfers that re-randomize an
overlapping masking set before a competitor succeeds, while remaining eligible
indefinitely.  This establishes the denial-of-service behavior conjectured for
Quisquis~\cite{DBLP:conf/fc/BunzAZB20}.  The contrast with $\luat_\lambda$ is
again structural: a linear-state transfer cannot invalidate a masking set that
it merely names, whereas a constant-state transfer invalidates every account in
the set it uses.

The two objects therefore satisfy the same uniform-incidence condition for
strong untraceability but pay for it differently.  In $\luat_\lambda$,
uniformity constrains the masking sets without affecting synchronization.  In
$\cuat_\lambda$, the masking sets simultaneously provide anonymity, bound the
state, and determine which transfers conflict; the exact consensus number is
quadratic in $\lambda$.  The additional synchronization cost is thus not a
consequence of sender privacy alone.  It arises when the mechanism that hides
the sender is also made responsible for reclaiming state.

From an operational perspective, the distinction is local rather than global.
Disjoint CUAT transfers still commute and may be processed independently; the
additional coordination appears only among transfers connected in the conflict
graph. The parameter $\lambda$ nevertheless controls both the size of the
anonymity set and the worst-case synchronization power of that contention
structure. LUAT separates these two roles at the cost of retaining historical
state, whereas CUAT couples them in order to reclaim it. The consensus hierarchy
therefore exposes a design choice that is hidden at the cryptographic layer:
private payment systems may pay for sender untraceability in storage, or use the
same masking sets to bound storage and pay instead in synchronization and
progress guarantees. We treat the two cases in parallel in
Sections~\ref{sec:luat} and~\ref{sec:ruat}, over the untraceability notions that
Section~\ref{sec:untraceability} defines and characterizes by uniform
masking-set incidence, and Section~\ref{sec:parallel-fairness} turns the same
conflict structure on parallelization and fairness.

\section{Related work}\label{sec:rel-work}
Our analysis of CUAT's synchronization power connects four research areas:
consensus numbers of cryptocurrency objects and privacy-preserving primitives,
quorum systems and combinatorial design theory, multi-word compare-and-swap
primitives, and the storage cost of authenticated data structures.
Table~\ref{tab:results} collects the consensus numbers of the objects discussed
below, together with the ones this paper establishes.

\begin{table}[!ht]
  \centering
  \small
  \begin{tabular}{@{}lllcl@{}}
    \toprule
    Object & System & Untraceability & $\mathsf{cons}$ & Source \\
    \midrule
    Asset transfer & Bitcoin, Ethereum & none & $1$ & \cite{DBLP:journals/dc/GuerraouiKMPS22} \\
    $k$-shared asset transfer & Sui, Zef & none & $k$ & \cite{DBLP:journals/dc/GuerraouiKMPS22} \\
    ERC20 token & Tornado Cash, Railgun & whole pool & $m$ & \cite{DBLP:conf/icdcs/AlposCMZ21} \\
    Allow-list & identity, e-voting & --- & $1$ & \cite{DBLP:conf/wdag/FreyGR23} \\
    Deny-list & identity, e-voting & --- & $d$ & \cite{DBLP:conf/wdag/FreyGR23} \\
    \midrule
    $\luat_\lambda$ & Zcash, Monero & weak or strong & $2$ & Theorem~\ref{thm:uat-consensus-number} \\
    $\luat_\lambda$, $k$-shared account & shared account & weak or strong & $\max(k,2)$ & Theorem~\ref{thm:k-uat-consensus-number} \\
    $k$-partitioning & garbage collection & --- & $2$ & Theorem~\ref{thm:part-consensus-number} \\
    \midrule
    $\cuat_\lambda^-$ & \emph{Quisquis} and successors & weak & $\infty$ & Theorem~\ref{thm:weak-untraceability} \\
    $\cuat_\lambda^+$ & \emph{Quisquis} and successors & strong & $4\lfloor\lambda/2\rfloor\lceil\lambda/2\rceil$ & Theorem~\ref{thm:multi-cuat-strong-tight} \\
    \bottomrule
  \end{tabular}
  \caption{Consensus numbers of asset-transfer objects. The parameters are
  distinct: $k$ processes share an account, $m$ spenders are approved on an
  ERC20 token, $d$ processes access the deny-list, $\lambda$ is the masking-set
  size, and the $k$ of $k$-partitioning is a part size, for which the value is
  $1$ when it equals one. The value for $\cuat_\lambda^+$ holds for every
  $\lambda\ge3$ and needs multi-round protocols; one-round protocols reach only
  between $\lfloor\lambda^2/2\rfloor$ and $\lambda^2-\lambda+1$
  (Corollary~\ref{cor:one-round-sandwich}). Only $\cuat_\lambda$ is not
  starvation-free (Theorem~\ref{thm:no-fairness}).}
  \label{tab:results}
\end{table}

Guerraoui \etal~\cite{DBLP:journals/dc/GuerraouiKMPS22} challenge the
assumption that consensus is necessary for cryptocurrencies. By modeling
asset transfers as concurrent objects, they show the consensus
number of standard asset transfer is $1$, which means that transactions can be
processed asynchronously without total ordering. Building on this insight,
modern cryptocurrencies like Sui, Libra/Diem,
and Zef adopt their $k$-shared asset transfer model, where only the
$k$ users sharing an account need consensus, thereby reducing synchronization overhead.
As Alpos \etal~\cite{DBLP:conf/icdcs/AlposCMZ21} observe, ERC20 tokens let an
account owner approve several spenders, which makes the consensus number depend
on how many are authorized. Consensus requirements in smart
contracts are therefore state-dependent rather than fixed.

Privacy-preserving systems, however, face distinct synchronization challenges.
Frey \etal~\cite{DBLP:conf/wdag/FreyGR23} analyze allow-list and
deny-list objects, showing that while allow-lists need no
synchronization, deny-lists require consensus among verifiers
performing set-non-membership proofs.
More broadly, their
framework applies to decentralized identity management and e-voting.

CUAT's pairwise-intersecting masking sets bear structural similarities to
quorum systems~\cite{DBLP:journals/tods/Thomas79,DBLP:conf/sosp/Gifford79}.
Both use intersections among sets of elements to provide synchronization
in distributed systems. The key distinction is that quorum systems
are designed to tolerate process failures and must satisfy constraints
derived from failure assumptions, whereas masking sets in CUAT need not
account for such failures. We show that CUAT requires two conditions beyond
pairwise intersection. First, to implement consensus, the intersections must be
structured so that a losing transfer can identify the winner. Second, to
maintain sender untraceability across a history, incidence must be uniform: any
two accounts sharing a masking set appear in the same number of them.

A CUAT transfer also acts like a multi-word compare-and-swap (MCAS), the
primitive that Harris \etal~\cite{DBLP:conf/wdag/HarrisFP02} implement from
single-word CAS. Both update several locations atomically, here the account
representations of a masking set, and commit only if every location matches its
expected value. The difference is again privacy: an MCAS reveals the locations
it touches, whereas a CUAT transfer must keep its sender untraceable.

\begin{table}[!b]
  \centering
  \small
  \begin{tabular}{@{}llll@{}}
    \toprule
    System & Anonymity mechanism & Ledger state & Object \\
    \midrule
    Zerocoin~\cite{DBLP:conf/sp/MiersG0R13}    & commitment set, serial numbers   & linear   & $\luat_\lambda$ \\
    Zcash~\cite{DBLP:conf/sp/Ben-SassonCG0MTV14} & note commitments, nullifiers   & linear   & $\luat_\lambda$ \\
    Monero~\cite{zerotomonero2020}             & ring signatures, key images      & linear   & $\luat_\lambda$ \\
    Anon.\ Zether~\cite{DBLP:conf/fc/BunzAZB20} & anonymity set, per-epoch nonces & bounded & $\luat_\lambda$ \\
    \midrule
    \emph{Quisquis}~\cite{DBLP:conf/asiacrypt/FauziMMO19} & masking set, rerandomization & constant & $\cuat_\lambda$ \\
    \cite{DBLP:journals/iacr/AlupothaGC24,DBLP:conf/asiacrypt/MadathilS25} & masking set, rerandomization & constant & $\cuat_\lambda$ \\
    \midrule
    Mixers, mixnets, tumblers & off-ledger shuffling or credentials & --- & out of scope \\
    \bottomrule
  \end{tabular}
  \caption{Untraceable payment designs by the effect of a transfer on its
  masking set. The $\luat_\lambda$ designs reference their masking set and
  consume one account; the $\cuat_\lambda$ designs consume the whole set.
  Anonymous Zether discards its deny-set at each epoch boundary, so its state is
  bounded by one epoch of activity rather than fixed, and it is the only listed
  design whose state does not grow with the transaction count without replacing
  accounts.}
  \label{tab:taxonomy}
\end{table}

Christ and Bonneau~\cite{DBLP:conf/fc/ChristB23} introduce revocable proof
systems (RPS) as a unifying abstraction for authenticated data structures,
including stateless blockchains and set commitments. They prove a fundamental
trade-off: any scheme with sublinear global state must incur near-linear
proof-update costs when statements are revoked (e.g., when coins are spent).
This information-theoretic lower bound focuses on storage and witness
maintenance rather than concurrency, but it delimits the same design space we
classify.

Within that space, what a transfer does to its masking set separates
our two objects and sorts the deployed systems. A transfer that consumes one
account and only names the rest of its masking set as decoys leaves a growing
allow-set and deny-set behind, the design of Zerocoin, Zcash and Monero, which
we model as $\luat_\lambda$; a transfer that consumes and replaces its entire
masking set keeps the state bounded, the design of \emph{Quisquis} and its
successors, which we model as $\cuat_\lambda$. Table~\ref{tab:taxonomy} places
the main systems on this division. Anonymity obtained outside the ledger falls
under neither object: mixing services and mixnets relocate the unlinkability to
a shuffling party, and in the credential-based variants the deny-set becomes a
spent-credential set held by a coordinator, so the object carrying the
synchronization is the coordinator rather than the payment system.

\section{Preliminaries}\label{sec:preliminaries}

This section establishes the theoretical foundations for our
analysis. We present the system model, consensus numbers, and the consensus
hierarchy theorem.

\paragraph{Notation.}
We use $\gets$ for assignment in algorithms, $=$ for equality
testing, $\coloneqq$ for definitions, and standard set notation
($\in$ for membership, $\subseteq$ for subset, $\cap$ for
intersection, $\cup$ for union). We write $\binom{S}{k} \coloneqq
\{\,A\subseteq S \mid |A|=k\,\}$ for the family of $k$-element subsets of $S$.

\paragraph{System model.}
We consider $N$ asynchronous sequential processes $p_1, \dots, p_N$
in a set $\mathcal{P}$ that may crash (crash-failure model). Throughout, $\Pi$ denotes a protocol; the process set is $\mathcal{P}$. Processes interact
through shared objects supporting operations that produce responses.
Objects satisfy
\textit{linearizability}~\cite{DBLP:conf/popl/HerlihyW87}, where
operation histories appear as if executed in a single, consistent
order respecting real-time constraints. A \emph{register} is a
shared-memory abstraction supporting $\op{read}$ and $\op{write}$
operations. We assume \emph{atomic registers} with linearizable
semantics, where each operation appears to take effect
instantaneously between invocation and response.

\paragraph{Wait-freedom.}
An implementation of a shared object is \emph{wait-free} if every
operation invoked by a correct process completes in a finite number
of steps, regardless of the behavior of other processes.

\paragraph{Consensus numbers and the consensus hierarchy.}
The consensus number $\consnum{O}$ of an object $O$ is the largest
number $n$ such that it is possible to wait-free implement a
consensus object from atomic registers and objects of type $O$, in a
system of $n$ processes~\cite{DBLP:journals/toplas/Herlihy91}. The
consensus hierarchy shows that objects with higher consensus numbers
are strictly more powerful and cannot be implemented from objects
with lower consensus numbers.

\paragraph{Valency.}
Following the standard
terminology~\cite{DBLP:journals/toplas/Herlihy91}, a configuration
$C$ of a binary consensus protocol is \emph{$v$-valent} (with $v\in\{0,1\}$)
if every execution extending $C$ decides~$v$, \emph{univalent} if it
is $v$-valent for some $v$, and \emph{bivalent} if both decision
values are still reachable from $C$. We write $C\cdot p$ for the
configuration reached when process $p$ takes one step from $C$, and
chain steps as $C\cdot p\cdot q$ for the configuration reached when
$p$ then $q$ each take one step. A bivalent configuration is
\emph{critical} if every immediate successor $C\cdot p$ is univalent.

\paragraph{Ownership.}
Both objects we study attach an owner to each account through a map
$\mu:\mathcal{A}\to 2^{\mathcal{P}}\setminus\{\emptyset\}$, and only a
process in $\mu(a)$ may spend $a$ or read its balance. An account $a$ is
\emph{$k$-shared} if $|\mu(a)|=k$, and \emph{unshared} if $|\mu(a)|=1$;
an object is unshared if all of its accounts are. Following Guerraoui
\etal~\cite{DBLP:journals/dc/GuerraouiKMPS22}, sharing is the mechanism by
which several processes contend for one account.

\paragraph{Account-representation uniqueness.}
Every transfer of either object creates account representations. We require
each newly created representation to differ from every representation already
used by the system, and assume that distinct processes never generate the same
representation. In a concrete instantiation, account representations embed
independent randomness, so a collision occurs only with negligible probability
in the security parameter. This condition is required for security, and in
particular for integrity: reusing a representation could conflate distinct
ownership and balance histories or make an old account representation valid
again. It also keeps the concurrency model
faithful to the payment abstraction, since otherwise two processes could
contend by proposing the same representation rather than by operating on the
same account. In particular, the account sets created by concurrent transfers
are disjoint.

\section{Sender untraceability}
\label{sec:untraceability}
We consider an external adversary \att that eavesdrops on network ciphertexts,
records shared-object states, and analyzes process interactions to infer
relationships among accounts. We model \att as computationally unbounded while
idealizing the cryptographic primitives: in particular, the zero-knowledge
proof reveals no information about its witness. Thus, for each transfer, \att
observes the masking set but not the sender account hidden by the witness.

The definitions below apply to any \emph{untraceable asset-transfer object}: a
shared object whose transfers each name a masking set of accounts, one of which
funds the transfer.

\begin{definition}[Masking sets and $\lambda$-masking]
  \label{def:masking-set}
  A \emph{masking set} is a set $S\subseteq\mathcal{A}$ named by a transfer,
  exactly one of whose members funds it; the remaining members are
  \emph{decoys}. An object is \emph{$\lambda$-masking} if every masking set it
  uses has size exactly $\lambda$.
\end{definition}

We assume throughout that masking sets have at least $\lambda\ge2$ members.
Nothing else about the object matters here; in particular it is irrelevant
whether a transfer leaves its masking set intact or replaces it, and
Sections~\ref{sec:luat} and~\ref{sec:ruat} introduce one object of each kind.

Untraceability prevents \att from identifying the funding account. We
distinguish two views: a single transfer and the complete transfer history,
including failed transfers. We write $P(a_i\mid\cdot)$ for \att's posterior
probability that account $a_i$ is the sender account of the transfer under
consideration.

\begin{definition}[Weak untraceability]
  \label{def:weak-untraceability}
  \label{def:luat-weak-untraceability}
  An untraceable asset-transfer object satisfies \emph{weak untraceability} if,
  for every transfer $\tau$ with masking set $S$ and every $a_i\in S$,
  \[
    P(a_i\mid\tau)=1/|S|.
  \]
\end{definition}

The first notion fixes \att's view to one transfer in isolation. Widening it to
everything \att has recorded, including the masking sets of transfers that
failed, gives the second.

\begin{definition}[Strong untraceability]
  \label{def:strong-untraceability}
  \label{def:luat-strong-untraceability}
  An untraceable asset-transfer object satisfies \emph{strong untraceability}
  if, for every transfer $\tau$ with masking set $S$ in every history $H$ and
  every $a_i\in S$,
  \[
    P(a_i\mid H)=1/|S|.
  \]
\end{definition}

Weak untraceability is the standard guarantee when a transaction is considered
in isolation~\cite{DBLP:conf/sp/MiersG0R13,DBLP:conf/sp/Ben-SassonCG0MTV14};
strong untraceability additionally protects against statistical inference from
repeated account appearances~\cite{DBLP:journals/popets/MoserSHLHSHHMNC18,DBLP:journals/popets/EggerLRWY22}.
The second notion implies the first: averaging $P(a_i\mid H)=1/|S|$ over the
histories consistent with an observed $\tau$ gives $P(a_i\mid\tau)=1/|S|$.
Since every masking set has at least $\lambda$ members, both notions bound the
adversary's success probability by $1/\lambda$. For an object $O$ we write $O^-$
for its weak variant and $O^+$ for its strong one; the notion in force restricts
the admissible executions of the object, and no other component of its
specification changes.

Strong untraceability admits a combinatorial characterization that depends on
the masking sets alone. For a history $H$, let $f_a(H)$ denote the number of masking sets in
which account $a$ appears, and let
$\mathcal{A}(H)=\{a:f_a(H)\ge1\}$ denote the accounts occurring in the history.
We use a symmetric worst-case adversary model. Each account has an unknown
probability of funding a transfer and, in objects whose masking sets also
contain the recipient, an unknown probability of receiving one. Decoys are
sampled uniformly. Before observing $H$, the
adversary treats all accounts alike; after observing it, the adversary updates
its beliefs using the account-incidence counts. Transfers are issued
independently. Side channels such as network metadata and timing are outside the
model.

\begin{theorem}[Uniformity]
  \label{thm:untraceability-uniform}
  \label{thm:luat-untraceability-uniform}
  Under the adversary model above, an untraceable asset-transfer object satisfies
  strong untraceability on a history $H$ if and only if $f_i(H)=f_j(H)$ for every
  two accounts $a_i,a_j$ lying in a common masking set of $H$.
\end{theorem}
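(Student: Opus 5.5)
The plan is to make the adversary model explicit as a Bayesian computation and reduce the posterior to a function of incidence counts. Fix a history $H$ and a transfer $\tau$ with masking set $S$. The adversary starts from a symmetric prior over the unknown per-account funding (and, where relevant, receiving) propensities, with decoys sampled uniformly and transfers independent. The first step is to write the likelihood of the observed sequence of masking sets $H$ as a function of these propensities. Under uniform decoy sampling, the probability of a given masking set depends only on which account funds it, and the decoy contribution is the same for every choice of funder within $S$. After marginalizing over the propensities, the posterior weight that $a_i$ funds $\tau$ should depend on $H$ only through $f_i(H)$: an account that appears in more masking sets is, under the adversary's updated belief, more likely to be an active spender. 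I would make this precise by showing that the posterior is $P(a_i\mid H)=g(f_i(H))/\sum_{a_j\in S}g(f_j(H))$ for some function $g$ that is strictly increasing in the incidence count. Exchangeability of accounts under the symmetric prior gives the form; the strict monotonicity comes from the adversary updating its estimate of an account's propensity from its appearances.

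Given this form, the two directions are short. \emph{If} $f_i(H)=f_j(H)$ for all $a_i,a_j$ sharing a masking set, then within each $S$ all weights $g(f_i(H))$ coincide, so $P(a_i\mid H)=1/|S|$ for every transfer and every $a_i\in S$. This is strong untraceability in the sense of Definition~\ref{def:strong-untraceability}. \emph{Conversely}, suppose $a_i,a_j$ lie in a common masking set $S$ of some transfer $\tau$ in $H$ with $f_i(H)\ne f_j(H)$. Strict monotonicity of $g$ then gives $g(f_i(H))\ne g(f_j(H))$, so the posterior over $S$ is not uniform and some member has probability different from $1/|S|$. Note that failed transfers are included in $H$ and hence in $f$, which matches the definition.

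The main obstacle is the first step: justifying that the posterior depends only on incidence and that $g$ is \emph{strictly} increasing, rather than merely nondecreasing. I would first try to model each account's funding propensity as an unknown $\theta_a$ with an exchangeable prior, for instance i.i.d.\ or Dirichlet. Each appearance of $a$ in a masking set is either as funder (rate $\propto\theta_a$) or as a uniform decoy (rate independent of $\theta_a$). The posterior mean of $\theta_a$ given $f_a$ appearances is then strictly increasing in $f_a$, by a monotone likelihood ratio argument: the likelihood in $\theta_a$ for $f$ appearances is a mixture whose weight on large $\theta_a$ grows with $f$. To guard against pathological priors, I would state this monotonicity as the content of the ``worst-case adversary'' clause, since the worst-case adversary may choose any symmetric prior. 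It then suffices to exhibit one such prior with strict monotonicity for the ``only if'' direction. The ``if'' direction needs only exchangeability, which every symmetric prior provides.
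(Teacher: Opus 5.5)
Your proposal is essentially the paper's argument: a symmetric prior updated through incidence counts makes equal-count accounts interchangeable, which gives the `if' direction, and a posterior funding propensity strictly increasing in $f_a(H)$ (driven, as you say, by decoys contributing equally to every count) gives the `only if' direction. The one real difference is your closed form $g(f_i(H))/\sum_{a_j\in S}g(f_j(H))$, which the paper does not use and which you do not need; in your $\theta$-model the funding posterior is $\mathbb{E}[\theta_{a_i}/\sum_{b\in S}\theta_b\mid H]$, an expectation of a ratio, so that form is not available in general. The paper instead gets the `if' direction by transposing the two equal-count accounts and normalizing over $S$, and the `only if' direction from strict monotonicity alone; your use of the worst-case clause to exclude a degenerate prior, under which `only if' would fail, makes explicit an assumption the paper leaves tacit.
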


\begin{proof}
Write $A=\mathcal{A}(H)$. An account enters the masking set $S_\tau$ of a
transfer $\tau$ in exactly one of three ways: it funds $\tau$, which has unknown
probability $q_a$; it receives $\tau$, of unknown probability $r_a$, on objects
that place the recipient in the masking set; or it is one of the decoys, which
the sender draws uniformly, so that this last event has the same probability for
every account alike. Transfers are issued independently, and an account's
identity bears on these events only through $(q_a,r_a)$, so the probability of
$H$ depends on an account only through the number $f_a(H)$ of masking sets
containing it:
\[
  \Pr\big[H\mid (q_a,r_a)_{a\in A}\big]=\Pr\big[H'\mid (q_a,r_a)_{a\in A}\big]
  \qquad\text{whenever } f_a(H)=f_a(H')\ \text{for every } a\in A.
\]
Thus $f_a(H)$ is a sufficient statistic for $(q_a,r_a)$. The prior is symmetric,
so relabelling the accounts by any bijection that leaves every count unchanged
leaves the adversary's beliefs unchanged as well; two accounts with the same
count are therefore interchangeable to it. Two accounts with different counts are
not: decoy draws contribute equally to every count, so of two accounts the one
occurring more often is the one the adversary rates more likely to have funded a
transfer, and $\mathbb{E}[q_a\mid H]$ grows strictly with $f_a(H)$.

\emph{Uniform incidence implies strong untraceability.} Suppose
$f_a(H)=f_b(H)$ for all $a,b\in A$. Fix $\tau\in H$ and $a_i,a_j\in S_\tau$, and
note $S_\tau\subseteq A$. Let $\pi$ swap $a_i$ and $a_j$ and fix every other
account. It maps $S_\tau$ to itself and $H$ to a history carrying the same
counts, which the adversary therefore judges as it judges $H$, while taking the
event ``$a_i$ funded $\tau$'' to ``$a_j$ funded $\tau$''. The two events have
equal posterior probability. Exactly one member of $S_\tau$ funded $\tau$, so
these $|S_\tau|$ probabilities sum to $1$ and each equals $1/|S_\tau|$. Strong
untraceability holds on $H$.

\emph{Strong untraceability implies uniform incidence.} Suppose strong
untraceability holds on $H$, and fix $\tau\in H$ with $a_i,a_j\in S_\tau$. Both
have posterior probability $1/|S_\tau|$ of having funded $\tau$. Were
$f_i(H)\ne f_j(H)$, strict monotonicity of $\mathbb{E}[q_a\mid H]$ in $f_a(H)$
would give the two accounts different posterior funding propensities, hence
different posterior probabilities of having funded $\tau$, and some
$a^\star\in S_\tau$ would satisfy $P(a^\star\mid H)>1/|S_\tau|$. Hence
$f_i(H)=f_j(H)$: incidence is constant on every masking set of $H$.

\end{proof}

The condition is local: it compares two accounts only when some transfer places
them in the same masking set. It becomes the global condition used throughout
the paper as soon as the masking sets are linked.

\begin{corollary}[Uniform incidence]
  \label{cor:uniform-incidence}
  Let $H$ be a history whose masking sets are \emph{connected under
  intersection}, meaning that their intersection graph is connected; this holds
  in particular when they pairwise intersect. Then an untraceable asset-transfer
  object satisfies strong untraceability on $H$ if and only if $f_i(H)=f_j(H)$
  for all $a_i,a_j\in\mathcal{A}(H)$.
\end{corollary}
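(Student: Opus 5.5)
The plan is to reduce the corollary to Theorem~\ref{thm:untraceability-uniform}, which already states that strong untraceability on $H$ is equivalent to the local condition $f_i(H)=f_j(H)$ for every two accounts lying in a common masking set of $H$. It therefore suffices to show that, under connectivity of the intersection graph, this local condition is equivalent to global uniformity of $f$ on $\mathcal{A}(H)$. One direction is immediate. If $f$ is constant on all of $\mathcal{A}(H)$, it is in particular constant on each masking set, since every masking set of $H$ is contained in $\mathcal{A}(H)$. The theorem then gives strong untraceability.

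For the converse, assume strong untraceability. The theorem yields that $f$ is constant on each masking set $S$ of $H$; call this common value $c(S)$. First, take two masking sets $S,S'$ that are adjacent in the intersection graph, so that they share an account $a$. Then $c(S)=f_a(H)=c(S')$. Next, since the intersection graph is connected, any two masking sets are joined by a path of adjacent sets. Inducting along that path shows that $c$ is constant on all masking sets, say equal to $c$. Finally, every $a\in\mathcal{A}(H)$ satisfies $f_a(H)\ge1$, so $a$ lies in some masking set $S$, and therefore $f_a(H)=c(S)=c$. This gives $f_i(H)=f_j(H)$ for all $a_i,a_j\in\mathcal{A}(H)$.

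The remaining point is the parenthetical claim. If the masking sets pairwise intersect, then any two vertices of the intersection graph are adjacent. The graph is therefore complete and in particular connected, and the degenerate case of at most one masking set is trivially connected. There is no real obstacle here; the only care needed is to note that $\mathcal{A}(H)$ is exactly the union of the masking sets. That fact is what lets constancy over the sets transfer to constancy over all occurring accounts.
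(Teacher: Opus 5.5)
Your proof is correct and follows the paper's argument. The forward direction comes straight from Theorem~\ref{thm:untraceability-uniform}, and for the converse you propagate the per-set constant along a chain of intersecting masking sets from a set containing one account to a set containing the other, which is exactly the paper's route.
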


\begin{proof}
If incidence is constant on $\mathcal{A}(H)$ it is constant on every masking set
of $H$, and Theorem~\ref{thm:untraceability-uniform} gives strong
untraceability. Conversely, let strong untraceability hold on $H$ and let
$a,b\in\mathcal{A}(H)$. The theorem makes incidence constant on each masking set
of $H$. Each of $a,b$ lies in some masking set, and by hypothesis there are
masking sets $S^{(1)},\dots,S^{(m)}$ of $H$ with $a\in S^{(1)}$, $b\in S^{(m)}$
and $S^{(l)}\cap S^{(l+1)}\ne\emptyset$ for each $l<m$. Consecutive sets share an
account at which their two constants agree, so one value runs along the whole
chain and $f_a(H)=f_b(H)$.
\end{proof}

Connectedness cannot be dropped. If $H$ uses two disjoint masking sets, one once
and the other twice, incidence is constant on each of them, so every transfer
keeps its members equiprobable and strong untraceability holds, yet incidence
takes two values on $\mathcal{A}(H)$.

The masking sets of every execution analyzed in Section~\ref{sec:ruat} pairwise
intersect (Lemma~\ref{lem:single-tau-upper}), so the corollary applies there
throughout.

The theorem quantifies over $\mathcal{A}(H)$, not over $\mathcal{A}$. Every
transfer creates previously unused account representations, and an account that has never
appeared in a masking set has incidence zero. Such an account is not a candidate
sender in $H$, and quantifying over it would leave strong untraceability
unsatisfiable. Restricted this way, the condition constrains the masking sets
alone, and is therefore indifferent to whether a transfer keeps its masking set
or replaces it. The two objects of the next sections differ in exactly that
respect: under the same condition, one keeps a consensus number of $2$ while the
other's grows quadratically in $\lambda$.

\section{Linear untraceable asset transfer}
\label{sec:luat}

Zcash~\cite{DBLP:conf/sp/Ben-SassonCG0MTV14} and Monero~\cite{zerotomonero2020}
retain spent accounts: they publish a growing \emph{allow-set} of potentially
unspent accounts together with a \emph{deny-set} of nullifiers marking the spent
ones, and a transfer proves in zero knowledge that some account of a masking set
has been nullified, without revealing which. The two systems differ in how a
transfer names that set. In Zcash it is implicit: the proof establishes
membership in the whole note-commitment tree, so the masking set is every
account on the ledger when the transfer is issued and $\lambda$ grows with it.
In Monero it is explicit: a ring signature names a small set of decoys chosen by
the sender, and $\lambda$ is the ring size, fixed by the protocol; full-chain
membership proofs~\cite{monerofcmp2024} would move Monero to the implicit form
as well. The distinction sets $\lambda$ but not the semantics: in both cases the
transfer names its masking set and consumes one account of it. We model this
family as the \emph{linear untraceable asset transfer} ($\luat$) object, whose
state grows linearly in the number of transfers.

Here we give the formal $\luat_\lambda$ model,
determine its consensus number for unshared and for shared accounts, show that
neither untraceability notion changes it, and analyze the garbage collection
forced by its state growth. Section~\ref{sec:ruat} then treats the
constant-state family, whose transfers consume and replace the whole masking
set.

\subsection{Formal model}
\label{subsec:luat-model}

The object maintains an \emph{allow-set} $\mathcal{A}_{\text{allow}}$ of all
potentially valid accounts and a \emph{deny-set} $\mathcal{N}$ of nullifiers for
spent accounts. Accounts are single-use: each transfer consumes one account and
creates new ones. To spend an account, a process proves knowledge of a secret
key for some account of a \emph{masking set} (a subset of the allow-set) without
revealing which, and the transfer publishes a nullifier that invalidates exactly
one account of that set. A \emph{nullifier} is a value
$\nu\in\mathcal{N}_{\text{space}}$ derived deterministically from an account via
a collision-resistant function
$\mathsf{Nullify}:\mathcal{A}\to\mathcal{N}_{\text{space}}$; distinct accounts
therefore have distinct nullifiers.

\begin{definition}[$\luat_\lambda$ object]
  \label{def:uat-lambda}
  Let $\mathcal{A}$ be a finite set of accounts and
  $\mu:\mathcal{A}\to 2^{\mathcal{P}}\setminus\{\emptyset\}$ the owner map. Fix
  $\lambda\ge 2$. The $\luat_\lambda$ object is the tuple
  $(Q,q_0,O,R,\Delta,U)$, where:
  \begin{itemize}
    \item \textbf{States:} $Q$ consists of triples
      $q=(\beta,\mathcal{A}_{\text{allow}},\mathcal{N})$ where
      $\mathcal{A}_{\text{allow}}\subseteq\mathcal{A}$ is the allow-set,
      $\beta:\mathcal{A}_{\text{allow}}\to\mathbb{N}$ assigns balances, and
      $\mathcal{N}\subseteq\mathcal{N}_{\text{space}}$ is the deny-set. The
      initial state is $q_0=(\beta_0,\mathcal{A}_0,\emptyset)$.

    \item \textbf{Operations:} $O=\{\mathsf{read}(a):a\in\mathcal{A}\}\cup
      \{\mathsf{transfer}(S,a_s,\nu,\mathcal{A}_{\text{new}},\beta_{\text{new}}):
      S\subseteq\mathcal{A},\,a_s\in S,\,\nu\in\mathcal{N}_{\text{space}},\,
      \mathcal{A}_{\text{new}}\subseteq\mathcal{A},\,
      \beta_{\text{new}}:\mathcal{A}_{\text{new}}\to\mathbb{N}\}$.

    \item \textbf{Responses:} $R=\{(\mathcal{A}_{\text{allow}},\mathcal{N})\}
      \cup\mathbb{N}\cup\{\bot\}$: a transfer returns the resulting allow-set
      and deny-set, and a read returns a balance or $\bot$.

    \item \textbf{Transitions:} For a state
      $q=(\beta,\mathcal{A}_{\text{allow}},\mathcal{N})$, a process
      $p\in\mathcal{P}$, an operation $o\in O$, a response $r\in R$, and a new
      state $q'=(\beta',\mathcal{A}'_{\text{allow}},\mathcal{N}')$, we have
      $(q,p,o,q',r)\in\Delta$ if and only if one of the following holds:
      \begin{itemize}
        \item $o=\mathsf{read}(a)\land q'=q\land
          r=\begin{cases}\beta(a) & \text{if } a\in\mathcal{A}_{\text{allow}}
          \text{ and } p\in\mu(a)\\ \bot &
          \text{otherwise}\end{cases}$;
        \item $o=\mathsf{transfer}(S,a_s,\nu,\mathcal{A}_{\text{new}},\beta_{\text{new}})\land
          S\subseteq\mathcal{A}_{\text{allow}}\land|S|\ge\lambda\land a_s\in S\land
          p\in\mu(a_s)\land\mathsf{Nullify}(a_s)=\nu\land\nu\notin\mathcal{N}\land
          \mathcal{A}_{\text{new}}\cap\mathcal{A}_{\text{allow}}=\emptyset\land
          \sum_{a\in\mathcal{A}_{\text{new}}}\beta_{\text{new}}(a)=\beta(a_s)\land
          (\forall a\in\mathcal{A}_{\text{new}}:\beta'(a)=\beta_{\text{new}}(a))\land
          (\forall a\in\mathcal{A}_{\text{allow}}:\beta'(a)=\beta(a))\land
          \mathcal{A}'_{\text{allow}}=\mathcal{A}_{\text{allow}}\cup\mathcal{A}_{\text{new}}\land
          \mathcal{N}'=\mathcal{N}\cup\{\nu\}\land
          r=(\mathcal{A}'_{\text{allow}},\mathcal{N}')$;
        \item otherwise, $q'=q\land r=(\mathcal{A}_{\text{allow}},\mathcal{N})$.
      \end{itemize}
    \item \textbf{Untraceability:} $U$ is the untraceability notion in force,
      giving $\luat_\lambda^-$ under Definition~\ref{def:luat-weak-untraceability}
      and $\luat_\lambda^+$ under Definition~\ref{def:luat-strong-untraceability}.
      As for the constant-state object, $U$ restricts the admissible executions.
  \end{itemize}
\end{definition}

A transfer $\mathsf{transfer}(S,a_s,\nu,\mathcal{A}_{\text{new}},\beta_{\text{new}})$
by $p$ succeeds exactly when $p$ owns $a_s$ inside the masking set $S$, the
nullifier $\nu$ is unused, and the new account representations have not been
used before. It then distributes
the balance of $a_s$ over $\mathcal{A}_{\text{new}}$ and adds $\nu$ to the
deny-set. We record three consequences of the definition.

First, invalidated accounts stay in $\mathcal{A}_{\text{allow}}$ with their
original balance. Removing them would reveal which account a nullifier
invalidated, so retaining them is what makes the allow-set monotone. Second, the
deny-set prevents double spending: once $\nu$ appears in $\mathcal{N}$, the
condition $\nu\notin\mathcal{N}$ fails in every later state. Third, two transfers
spending \emph{distinct} accounts do not interfere. Their nullifiers differ, the
allow-set only grows, and account-representation uniqueness makes their new
account sets disjoint,
so each succeeds irrespective of the other. Contention in $\luat$ thus requires
two processes to spend the \emph{same} account, hence an account that is shared.

\subsection{Consensus number}
\label{subsec:luat-cons-num}

We now determine the consensus number of $\luat_\lambda$. Protocols may use any
finite number of $\luat_\lambda$ instances together with atomic registers. The
analysis splits on ownership: Section~\ref{subsubsec:luat-unshared} treats
unshared objects, Section~\ref{subsubsec:luat-shared} objects with a $k$-shared
account, and Section~\ref{subsubsec:luat-untr-cons} determines the effect of the
untraceability notion in force, which we show to be none.

Both bounds follow from a single commutation property.

\begin{lemma}[Commuting transfers]
\label{lem:luat-commute}
Let $O$ be a $\luat_\lambda$ object in state $q$, and let $\tau_p$ and
$\tau_{p'}$ be transfers pending at distinct processes $p$ and $p'$ that spend
distinct accounts, so that $\nu_p\ne\nu_{p'}$. If both succeed when applied alone
at $q$, then both succeed in either order, and the two orders yield the same
state: applying $\tau_p$ then $\tau_{p'}$ at $q$ gives the same state as applying
$\tau_{p'}$ then $\tau_p$.
\end{lemma}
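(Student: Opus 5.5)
We argue directly from the transition relation of Definition~\ref{def:uat-lambda}. Write $\tau_p=\mathsf{transfer}(S_p,a_p,\nu_p,\mathcal{A}_p,\beta_p)$ and $\tau_{p'}=\mathsf{transfer}(S_{p'},a_{p'},\nu_{p'},\mathcal{A}_{p'},\beta_{p'})$, and let $q=(\beta,\mathcal{A}_{\text{allow}},\mathcal{N})$. Since each transfer succeeds alone at $q$, every conjunct of the success clause holds at $q$ for each of them. The argument has two steps. First, we show that after applying $\tau_p$ the success conditions of $\tau_{p'}$ still hold, and symmetrically. Second, we compute the state in both orders and compare.

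For the first step, let $q_p=(\beta^p,\mathcal{A}_{\text{allow}}\cup\mathcal{A}_p,\mathcal{N}\cup\{\nu_p\})$ be the state after $\tau_p$. We check the conjuncts for $\tau_{p'}$ one at a time.
\begin{itemize}
\item $S_{p'}\subseteq\mathcal{A}_{\text{allow}}\subseteq\mathcal{A}_{\text{allow}}\cup\mathcal{A}_p$, because the allow-set only grows.
\item The conditions $|S_{p'}|\ge\lambda$, $a_{p'}\in S_{p'}$, $p'\in\mu(a_{p'})$ and $\mathsf{Nullify}(a_{p'})=\nu_{p'}$ do not depend on the state.
\item $\nu_{p'}\notin\mathcal{N}\cup\{\nu_p\}$, since $\nu_{p'}\notin\mathcal{N}$ and $\nu_{p'}\ne\nu_p$.
\item The balance condition holds because $a_{p'}\in\mathcal{A}_{\text{allow}}$ and the clause $\forall a\in\mathcal{A}_{\text{allow}}:\beta^p(a)=\beta(a)$ gives $\beta^p(a_{p'})=\beta(a_{p'})$.
\item Freshness needs $\mathcal{A}_{p'}\cap(\mathcal{A}_{\text{allow}}\cup\mathcal{A}_p)=\emptyset$. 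The part $\mathcal{A}_{p'}\cap\mathcal{A}_{\text{allow}}=\emptyset$ is given. The part $\mathcal{A}_{p'}\cap\mathcal{A}_p=\emptyset$ comes from the account-representation uniqueness assumption of Section~\ref{sec:preliminaries}, since $p\ne p'$ and distinct processes never generate the same representation.
\end{itemize}
By symmetry $\tau_p$ succeeds after $\tau_{p'}$.

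For the second step, both orders yield the allow-set $\mathcal{A}_{\text{allow}}\cup\mathcal{A}_p\cup\mathcal{A}_{p'}$ and the deny-set $\mathcal{N}\cup\{\nu_p,\nu_{p'}\}$. For the balances, we partition the final allow-set into the three disjoint pieces $\mathcal{A}_{\text{allow}}$, $\mathcal{A}_p$ and $\mathcal{A}_{p'}$. The transition rules give value $\beta$ on $\mathcal{A}_{\text{allow}}$, $\beta_p$ on $\mathcal{A}_p$ and $\beta_{p'}$ on $\mathcal{A}_{p'}$, whichever order is used. For instance, in the order $\tau_p$ then $\tau_{p'}$, the second transition preserves $\beta^p$ on $\mathcal{A}_{\text{allow}}\cup\mathcal{A}_p$, and $\beta^p$ equals $\beta$ on $\mathcal{A}_{\text{allow}}$ and $\beta_p$ on $\mathcal{A}_p$. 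The final states therefore coincide.

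The only delicate point is that the transition relation pins $\beta'$ down only on the old allow-set and the new accounts. We must make sure that no account is assigned two different balances in the two orders. This is exactly where the disjointness $\mathcal{A}_p\cap\mathcal{A}_{p'}=\emptyset$ from account-representation uniqueness is needed: it keeps the three pieces disjoint, so the final balance function is uniquely determined and order-independent. Nothing else in the argument uses the untraceability component $U$, because $U$ only restricts which executions are admissible, and both orders are admissible whenever the individual transfers are.
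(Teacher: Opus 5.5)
Your proof is correct and follows the paper's argument essentially step for step: you re-check each success conjunct of $\tau_{p'}$ at the intermediate state (the monotone allow-set, $\nu_{p'}\ne\nu_p$, balances preserved on $\mathcal{A}_{\text{allow}}$, and freshness via account-representation uniqueness), then compare the final allow-set, deny-set and balance map over the three pairwise disjoint domains. Your explicit listing of the state-independent conjuncts and your remark on $U$ are harmless additions that the paper leaves implicit.
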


\begin{proof}
Suppose $\tau_p$ is applied first, taking $q=(\beta,\mathcal{A}_{\text{allow}},\mathcal{N})$
to $q_p=(\beta_p,\mathcal{A}_{\text{allow}}\cup\mathcal{A}_{\text{new},p},
\mathcal{N}\cup\{\nu_p\})$. We check each precondition of $\tau_{p'}$ at $q_p$.
Its masking set satisfies $S_{p'}\subseteq\mathcal{A}_{\text{allow}}\subseteq
\mathcal{A}_{\text{allow}}\cup\mathcal{A}_{\text{new},p}$, since the allow-set
only grows. Its nullifier satisfies $\nu_{p'}\notin\mathcal{N}\cup\{\nu_p\}$,
because $\nu_{p'}\notin\mathcal{N}$ by hypothesis and $\nu_{p'}\ne\nu_p$. Its new
accounts satisfy $\mathcal{A}_{\text{new},p'}\cap(\mathcal{A}_{\text{allow}}
\cup\mathcal{A}_{\text{new},p})=\emptyset$, since
$\mathcal{A}_{\text{new},p'}\cap\mathcal{A}_{\text{allow}}=\emptyset$ by
hypothesis and $\mathcal{A}_{\text{new},p'}\cap\mathcal{A}_{\text{new},p}=\emptyset$
by account-representation uniqueness. Finally
$\beta_p(a_s^{p'})=\beta(a_s^{p'})$, because
$\tau_p$ preserves balances on $\mathcal{A}_{\text{allow}}$, so the balance
condition $\sum_{a}\beta_{\text{new},p'}(a)=\beta(a_s^{p'})$ still holds. Hence
$\tau_{p'}$ succeeds at $q_p$, and symmetrically $\tau_p$ succeeds after
$\tau_{p'}$.

Applying both in either order yields the allow-set
$\mathcal{A}_{\text{allow}}\cup\mathcal{A}_{\text{new},p}\cup\mathcal{A}_{\text{new},p'}$,
the deny-set $\mathcal{N}\cup\{\nu_p,\nu_{p'}\}$, and the balance map that agrees
with $\beta$ on $\mathcal{A}_{\text{allow}}$, with $\beta_{\text{new},p}$ on
$\mathcal{A}_{\text{new},p}$, and with $\beta_{\text{new},p'}$ on
$\mathcal{A}_{\text{new},p'}$. Union is commutative and the three domains are
pairwise disjoint, so the resulting state does not depend on the order.
\end{proof}

\subsubsection{Unshared accounts}
\label{subsubsec:luat-unshared}

On an unshared object, every account has one owner, so any two processes
necessarily spend distinct accounts and Lemma~\ref{lem:luat-commute} applies to
every pair. The object nevertheless has non-trivial power: a process learns
\emph{how many} transfers preceded its own by reading the size of the deny-set.

\begin{lemma}[Lower bound]
  \label{lem:uat-lower-bound}
  For any $\lambda\ge 2$, an unshared $\luat_\lambda$ object implements
  wait-free $2$-process consensus. Hence $\consnum{\luat_\lambda}\ge 2$.
\end{lemma}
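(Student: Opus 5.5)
We give an explicit two-process protocol in the standard announce-then-decide style. Processes $p_0,p_1$ share one unshared $\luat_\lambda$ object $O$ and two single-writer atomic registers $R_0,R_1$, initially $\bot$. The initial state $q_0=(\beta_0,\mathcal{A}_0,\emptyset)$ is chosen as follows.
\begin{itemize}
\item $\mathcal{A}_0$ contains at least $\lambda$ accounts.
\item $p_0$ owns an account $a_0$ and $p_1$ owns an account $a_1\ne a_0$, so the object is unshared.
\item A fixed masking set $S\subseteq\mathcal{A}_0$ with $|S|=\lambda$ contains both $a_0$ and $a_1$. This uses $\lambda\ge2$.
\end{itemize}
Process $p_i$ with input $v_i$ runs three steps.
\begin{enumerate}
\item It writes $v_i$ to $R_i$.
\item It invokes $\mathsf{transfer}(S,a_i,\mathsf{Nullify}(a_i),\mathcal{A}_{\text{new},i},\beta_{\text{new},i})$. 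Here $\mathcal{A}_{\text{new},i}$ is a fresh singleton carrying the whole balance $\beta_0(a_i)$, and freshness holds by account-representation uniqueness.
\item It inspects the returned deny-set $\mathcal{N}$. If $\mathcal{N}=\{\mathsf{Nullify}(a_i)\}$, i.e.\ $|\mathcal{N}|=1$, it decides $v_i$. Otherwise it decides the value in $R_{1-i}$.
\end{enumerate}

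First we check that each transfer succeeds in every interleaving. Each preconditon of $p_i$'s transfer holds at $q_0$: $S\subseteq\mathcal{A}_0$, $|S|\ge\lambda$, $a_i\in S$, $p_i\in\mu(a_i)$, $\nu_i\notin\emptyset$, the new account is fresh, and the balances sum correctly. Since $a_0\ne a_1$ gives $\nu_0\ne\nu_1$, Lemma~\ref{lem:luat-commute} shows that each transfer still succeeds after the other has been applied. Hence the transfer of $p_i$ always succeeds, and the returned deny-set is $\{\nu_i\}$ if it is linearized first and $\{\nu_0,\nu_1\}$ if second. No other process writes to $O$, so no other nullifier ever appears.

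Agreement and validity follow from the linearization order of the two transfers. Let $p_w$ be the process whose transfer is linearized first; it sees $|\mathcal{N}|=1$ and decides $v_w$. If the other process $p_{1-w}$ also completes its transfer, it sees $|\mathcal{N}|=2$ and reads $R_w$. The key point is that $p_w$ wrote $R_w$ before invoking its transfer, and that transfer precedes $p_{1-w}$'s transfer in the linearization. So $R_w=v_w$ by the time $p_{1-w}$ reads it, and $p_{1-w}$ also decides $v_w$. Validity is immediate because every decided value is some process's input. Wait-freedom holds because each process executes a constant number of steps. A process running alone sees $|\mathcal{N}|=1$ and decides its own input, which also covers crashes of the other process.

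The only delicate step is the claim that the loser always finds a non-$\bot$ value in $R_w$. This rests on real-time ordering: $p_w$'s write precedes its transfer in program order, and that transfer is linearized before $p_{1-w}$'s transfer, which precedes $p_{1-w}$'s read in program order. I would state this explicitly with linearizability. The remaining subtlety is confirming that no failed or no-op response can arise; this is exactly what Lemma~\ref{lem:luat-commute} rules out. Together these give $\consnum{\luat_\lambda}\ge2$.
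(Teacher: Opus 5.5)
Your proposal is correct and follows essentially the paper's proof: it uses the same write-proposal, transfer-with-own-nullifier, decide-by-$|\mathcal{N}|$ protocol of Figure~\ref{fig:2-consensus-luat}, and argues agreement from the linearization order of the two transfers together with the fact that the winner wrote its register before transferring. The only cosmetic difference is in showing that both transfers succeed: you invoke Lemma~\ref{lem:luat-commute}, whereas the paper checks the preconditions directly (the nullifiers are distinct, both are absent from the initially empty deny-set, and the new representations are previously unused).
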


\begin{proof}
  Let $p_1,p_2$ have inputs $v_1,v_2$, let $a_1,a_2\in\mathcal{A}_0$ be accounts
  with $\mu(a_i)=\{p_i\}$, and let
  $S=\{a_1,a_2,a_3,\dots,a_\lambda\}\subseteq\mathcal{A}_0$ be a masking set of
  size $\lambda$ containing both. The protocol is given in
  Figure~\ref{fig:2-consensus-luat}.

  \begin{figure}[ht!]
  \begin{tcolorbox}[protocol={2-process consensus from LUAT}]
    \step{\textbf{Shared state:}}
    \begin{codelines}
    \item $R[1], R[2]$ atomic registers (proposals)
    \item A $\luat_\lambda$ object with accounts $\mathcal{A}_0 = \{a_1, \ldots, a_\lambda\}$
    \item Masking set $S = \{a_1, \ldots, a_\lambda\}$, where $\mu(a_i) = \{p_i\}$ for $i \in \{1,2\}$
    \end{codelines}

    \step{$\mathbf{propose}(v)$ at process $p_i$, $i \in \{1,2\}$:}
    \begin{codelines}[resume]
    \item $R[i].\mathsf{write}(v)$ \textit{// Publish proposal}
    \item $\nu_i \gets \mathsf{Nullify}(a_i)$ \textit{// Compute nullifier}
    \item $\mathcal{A}_{\text{new},i} \gets \{b_i\}$ \textit{// One previously unused account}
    \item $\beta_{\text{new},i}(b_i) \gets \beta_0(a_i)$ \textit{// Transfer full balance}
    \item $(\mathcal{A}_{\text{allow}}, \mathcal{N}) \gets \mathsf{transfer}(S, a_i, \nu_i, \mathcal{A}_{\text{new},i}, \beta_{\text{new},i})$
    \item \textbf{if} $|\mathcal{N}| = 1$ \textbf{then} \textit{// First to transfer}
    \item \quad \textbf{return} $v$ \textit{// Decide own value}
    \item \textbf{else} \textit{// Second to transfer ($|\mathcal{N}| = 2$)}
    \item \quad \textbf{return} $R[3-i].\mathsf{read}()$ \textit{// Decide other's value}
    \end{codelines}
  \end{tcolorbox}
  \caption{Wait-free $2$-process consensus from an unshared $\luat_\lambda$ object.}
  \label{fig:2-consensus-luat}
\end{figure}

  \emph{Agreement.} Both transfers succeed: $p_i$ owns $a_i$, the nullifiers
  $\nu_1\ne\nu_2$ are distinct and initially absent from $\mathcal{N}=\emptyset$,
  and the new account representations have not been used before. By
  linearizability they are ordered; let $p_i$
  come first. Its response carries $\mathcal{N}=\{\nu_i\}$, so $|\mathcal{N}|=1$
  and $p_i$ decides $v_i$. The response of $p_j$, $j\ne i$, carries
  $\mathcal{N}=\{\nu_i,\nu_j\}$, so $|\mathcal{N}|=2$ and $p_j$ reads $R[i]$,
  which $p_i$ wrote before invoking its transfer, and decides $v_i$. Both decide
  $v_i$.

  \emph{Validity.} The decision is $v_1$ or $v_2$, both inputs.

  \emph{Termination.} The protocol has no loops and every operation is
  wait-free.
\end{proof}

The protocol uses the only information a $\luat$ transfer reveals about
concurrent activity, namely its own position in the linearization order. By
Lemma~\ref{lem:luat-commute} it reveals nothing else, which yields a matching
upper bound at $3$ processes. Two commuting transfers leave the object in the
same state in either order, so the two orders are indistinguishable to any
process that did not take part in them, while the two participants do
distinguish them by reading different deny-set sizes. The contradiction
therefore needs a third process, which exists only when $n\ge 3$.

\begin{lemma}[Upper bound]
  \label{lem:uat-upper-bound}
  For any $\lambda\ge 2$, no wait-free $3$-process consensus protocol exists
  using unshared $\luat_\lambda$ objects and atomic registers. Hence
  $\consnum{\luat_\lambda}\le 2$ on unshared objects.
\end{lemma}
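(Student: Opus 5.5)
We follow the standard valency argument of Herlihy~\cite{DBLP:journals/toplas/Herlihy91}. Suppose a wait-free binary consensus protocol $\Pi$ for three processes $p,q,r$ uses unshared $\luat_\lambda$ objects and atomic registers. The usual argument gives a bivalent initial configuration and then a critical configuration $C$: every successor $C\cdot x$ is univalent. Because $C$ is bivalent, there are two processes, say $p$ and $q$, with $C\cdot p$ being $0$-valent and $C\cdot q$ being $1$-valent. We then examine the pending steps of $p$ and $q$.

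The register cases are standard. If the two steps access different objects, they commute, so $C\cdot p\cdot q=C\cdot q\cdot p$, which is a contradiction. If one of them is a read, say $p$'s, then $C\cdot p\cdot q$ and $C\cdot q$ differ only in $p$'s local state. A solo run of $r$ (or of $q$) therefore decides the same value from both, which is a contradiction. If both are writes to the same register, $C\cdot p\cdot q$ and $C\cdot q$ are indistinguishable to $r$, which is again a contradiction. The remaining case is where both steps access the same $\luat_\lambda$ object $O$.

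For $O$, a $\mathsf{read}$ does not change the state. It is handled like a register read: $C\cdot p\cdot q$ and $C\cdot q$ agree except in $p$'s local state, and we run $r$ solo. A failed transfer also leaves $q'=q$ and only returns the current state, so it is treated the same way. When both are successful transfers, the object is unshared, so $p$ and $q$ spend distinct accounts, since each owns only its own account and $p\in\mu(a_s)$ is required. Lemma~\ref{lem:luat-commute} then shows that both still succeed in either order and that the resulting object state is identical. The transitions involve only the object and the caller's local state, and the responses differ only in what $p$ and $q$ see. Hence $C\cdot p\cdot q$ and $C\cdot q\cdot p$ agree in every register, in every object, and in $r$'s local state. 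We then let $r$ run solo from both configurations. It is wait-free, so it decides, and it must decide the same value in both runs. This contradicts $0$-valence of the first configuration and $1$-valence of the second. This is where the third process is essential: $p$ and $q$ themselves can tell the orders apart through $|\mathcal{N}|$.

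The main obstacle I expect is the mixed case, where exactly one of the two transfers succeeds alone at $C$, or where a transfer succeeds alone but fails after the other one. Since the accounts are distinct, Lemma~\ref{lem:luat-commute}'s argument applies to each precondition separately. Whether a transfer succeeds depends on $S\subseteq\mathcal{A}_{\text{allow}}$, on $\nu\notin\mathcal{N}$, on the freshness of $\mathcal{A}_{\text{new}}$, and on the balance of $a_s$. The other transfer only enlarges the allow-set, adds a different nullifier, and adds disjoint fresh accounts. The only possible issue is a failed transfer whose $\mathcal{A}_{\text{new}}$ or $S$ refers to accounts the other transfer creates. I plan to settle this using account-representation uniqueness, since distinct processes never generate the same representation. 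I also need to check that a transfer's success is monotone under the other's effects. Once this is done, a failed transfer acts as a read and a successful one commutes with the other, and every case yields a contradiction. The bound $\consnum{\luat_\lambda}\le 2$ on unshared objects follows.
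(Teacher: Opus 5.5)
Your proof is correct and has the same skeleton as the paper's: a critical configuration, exclusion of register steps, Lemma~\ref{lem:luat-commute} for two successful transfers on distinct accounts, and a third process as spectator.

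The one real difference is the read/failed-transfer case. The paper says such a step ``again commutes'' and compares $C\cdot p\cdot q$ with $C\cdot q\cdot p$. You instead compare $C\cdot p\cdot q$ with $C\cdot q$, and your version is the more careful one. A transfer that fails at $C$ need not fail after the other step. For example, its sender account or masking set may contain a fresh account that the other process is about to create, which the first process could have read from a register. In that case the paper's commutation claim can fail. Your comparison needs only that the failing step leaves the state unchanged when it is taken first at $C$.

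This also disposes of the ``main obstacle'' in your last paragraph. If both transfers succeed alone at $C$, Lemma~\ref{lem:luat-commute} already gives success in both orders. Otherwise one of them fails at $C$, and your read-like argument applies to it directly.

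Do not try to prove that failure is preserved via account-representation uniqueness. That condition only forbids two processes from generating the same representation; it does not stop one process from naming another's future account. The preservation claim is false in general, and your argument does not need it.
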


\begin{proof}
  Suppose for contradiction that $\Pi$ is such a protocol for a process set
  $\mathcal{P}$ with $n=|\mathcal{P}|\ge 3$. By the standard critical-state
  construction~\cite{DBLP:journals/toplas/Herlihy91}, $\Pi$ has a reachable
  critical configuration $C$: $C$ is bivalent and every successor $C\cdot p$ is
  univalent. Partition $\mathcal{P}=\mathcal{P}_0\sqcup\mathcal{P}_1$ by the
  valency of the successors, where $\mathcal{P}_v=\{p: C\cdot p \text{ is }
  v\text{-valent}\}$; both parts are non-empty because $C$ is bivalent. Fix
  $p\in\mathcal{P}_0$ and $q\in\mathcal{P}_1$.

  \emph{The two pending steps leave identical shared state in either order.}
At a critical configuration neither pending step is a register operation~\cite{DBLP:journals/toplas/Herlihy91}: a read leaves the shared state unchanged, writes to distinct registers commute, and of two writes to one register the second overwrites the first, leaving a configuration that every process except the overwritten writer cannot distinguish from the one in which that writer never took its step. Each case makes two oppositely valent successors indistinguishable to some process that decides in both. Both pending steps are therefore operations on $\luat_\lambda$ objects. If they are on distinct
  objects, they act on disjoint state and trivially commute. If either is a
  $\mathsf{read}$, or a transfer that fails, it leaves the state unchanged and
  again commutes. In the remaining case both are successful transfers on a common
  object; since the object is unshared, $p$ and $q$ own disjoint sets of accounts
  and so spend distinct accounts, and Lemma~\ref{lem:luat-commute} gives
  $\nu_p\ne\nu_q$, both succeeding in either order and reaching a common state.
  In every case the shared state at $C\cdot p\cdot q$ equals that at
  $C\cdot q\cdot p$.

  \emph{A spectator derives the contradiction.} Neither step writes a register,
  so the local state of every process other than $p$ and $q$ is also the same in
  $C\cdot p\cdot q$ and $C\cdot q\cdot p$. Since $n\ge 3$, choose
  $s\in\mathcal{P}\setminus\{p,q\}$ and run $s$ alone from each of the two
  configurations. By wait-freedom $s$ decides in both, and because the two
  configurations are indistinguishable to $s$, it decides the same value $v$ in
  both. But $C\cdot p$ is $0$-valent, so every extension of $C\cdot p\cdot q$
  decides $0$, forcing $v=0$; and $C\cdot q$ is $1$-valent, so every extension of
  $C\cdot q\cdot p$ decides $1$, forcing $v=1$. This contradiction shows no such
  $\Pi$ exists.
\end{proof}

\begin{theorem}[Consensus number of $\luat_\lambda$]
  \label{thm:uat-consensus-number}
  For any $\lambda\ge 2$, an unshared $\luat_\lambda$ object has consensus
  number exactly $2$, under either untraceability notion.
\end{theorem}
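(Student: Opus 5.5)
The plan is to combine Lemma~\ref{lem:uat-lower-bound} and Lemma~\ref{lem:uat-upper-bound}, and then check that the untraceability notion in force changes neither. The untraceability notion only restricts the admissible executions of the object. So there are two things to verify: the lower-bound protocol must be admissible under the stronger notion, $\luat_\lambda^+$, and the upper-bound argument must remain valid under the weaker notion, $\luat_\lambda^-$.

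\emph{Upper bound under either notion.} Restricting the admissible executions can only shrink the set of behaviours a protocol may rely on. The valency argument in Lemma~\ref{lem:uat-upper-bound} uses only two facts, both from Definition~\ref{def:uat-lambda}: the transition relation and Lemma~\ref{lem:luat-commute}. It applies verbatim to any execution that the object admits. The argument is slightly more delicate if $U$ is read as forbidding some executions. For $\luat_\lambda^-$ nothing is forbidden beyond the object semantics, since a single transfer always leaves its masking set equiprobable in the idealized model. For $\luat_\lambda^+$ the admissible executions are a subset of those of $\luat_\lambda^-$, so any protocol for $\luat_\lambda^+$ is also a protocol for $\luat_\lambda^-$. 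Hence the impossibility for three processes proved for the unrestricted object transfers to both variants, and $\consnum{\luat_\lambda^\pm}\le 2$.

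\emph{Lower bound under strong untraceability.} Here I would show that every history of the protocol in Figure~\ref{fig:2-consensus-luat} satisfies strong untraceability. Every transfer in that protocol uses the same masking set $S$, and there are at most two such transfers. Each account of $S$ therefore has incidence $f_a(H)$ equal to the number of transfers performed, so incidence is constant on $S$, which is the only masking set. Theorem~\ref{thm:untraceability-uniform} then yields strong untraceability on $H$. The new accounts $b_i$ occur in no masking set, so they lie outside $\mathcal{A}(H)$ and impose no constraint. Weak untraceability follows because it is implied by strong untraceability. Thus the protocol solves two-process consensus for both $\luat_\lambda^-$ and $\luat_\lambda^+$, and $\consnum{}\ge 2$ for each.

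The main obstacle is the semantic point in the upper bound, namely pinning down precisely how $U$ restricts executions. The restriction must not create a hidden source of synchronization, such as the object refusing a transfer whose masking set would break uniform incidence. If it did, that refusal could depend on concurrent activity and the commutation argument would fail. I would argue that the definition evaluates untraceability on the masking sets a protocol chooses, and that the transition relation $\Delta$ is unchanged. A protocol for the restricted object is then simply a protocol for the unrestricted object whose executions all happen to be admissible, and the spectator argument applies unchanged.
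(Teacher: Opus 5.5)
Your proposal is correct and follows the paper's proof. The theorem is obtained from Lemmas~\ref{lem:uat-lower-bound} and~\ref{lem:uat-upper-bound}, and Corollary~\ref{cor:luat-untraceability-free} carries out exactly your check: the upper bound never refers to masking sets, so restricting executions can only lower the consensus number, and the protocol of Figure~\ref{fig:2-consensus-luat} uses the single masking set $S$, which gives constant incidence on $\mathcal{A}(H)=S$ and hence strong, and therefore weak, untraceability by Theorem~\ref{thm:untraceability-uniform}. You read $U$ as constraining the protocol's masking-set choices and leaving $\Delta$ unchanged, which is the same reading the paper adopts.
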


\begin{proof}
  Immediate from Lemmas~\ref{lem:uat-lower-bound} and~\ref{lem:uat-upper-bound};
  Corollary~\ref{cor:luat-untraceability-free} verifies that both bounds survive
  the restriction to executions admissible under each notion.
\end{proof}

The bound does not depend on $\lambda$: enlarging the masking set improves
privacy at no synchronization cost. The reason is that a $\luat$ masking set is
an argument to a zero-knowledge proof rather than a set of accounts the transfer
consumes, so two transfers may name overlapping masking sets and both commit.
Section~\ref{sec:ruat} treats the object in which this no longer holds.

\subsubsection{Shared accounts}
\label{subsubsec:luat-shared}

Contention in $\luat$ requires a shared account. When $k$ processes co-own
$a^\star$, each of them may spend it but only one succeeds, since their
transfers carry the same nullifier $\nu^\star$ and the deny-set admits it once.
This is the $\luat$ analogue of the $k$-shared asset transfer of Guerraoui
\etal~\cite{DBLP:journals/dc/GuerraouiKMPS22}.

\begin{lemma}[Lower bound]
  \label{lem:k-uat-lower-bound}
  Let $k\ge 2$ and let $\mu(a^\star)=\{p_1,\dots,p_k\}$. Then
  $\{p_1,\dots,p_k\}$ implement wait-free $k$-process consensus using the
  $\luat_\lambda$ object and atomic registers.
\end{lemma}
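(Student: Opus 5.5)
The plan is to adapt the attempt-and-adopt pattern of Figure~\ref{fig:2-consensus-luat}, replacing ``position in the deny-set'' with ``who won the unique nullifier of $a^\star$''. Each $p_i$ owns $a^\star$. Fix a masking set $S\subseteq\mathcal{A}_0$ of size $\lambda$ containing $a^\star$, and let $\nu^\star=\mathsf{Nullify}(a^\star)$. Use one atomic register $R[i]$ per process for the proposal. We also need a way for losers to identify the winner, and there are two candidate mechanisms.

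\emph{Option (a): identify the winner via a read.} Each $p_i$ first writes $v_i$ to $R[i]$. It then invokes $\mathsf{transfer}(S,a^\star,\nu^\star,\{b_i\},\beta_{\text{new},i})$, where $b_i$ is a fresh representation owned by $p_i$ alone (or co-owned by all $k$ processes) carrying the full balance $\beta_0(a^\star)$. The success test is whether the returned allow-set contains $b_i$. The winner returns $v_i$. A loser must learn the winner's index, and the response $(\mathcal{A}'_{\text{allow}},\mathcal{N}')$ gives it exactly that: the returned allow-set contains the winner's $b_j$. By account-representation uniqueness, and since the $b_j$ are fixed in advance and pairwise distinct, a loser scans $\{b_1,\dots,b_k\}$ for the unique $b_j$ present. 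This works because no other transfer occurs on the object in the protocol. It then returns $R[j].\mathsf{read}()$.

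\emph{Option (b): a winner register.} If relying on which $b_j$ appears feels fragile, the winner instead writes its index to an extra register $W$. This fails for wait-freedom, though: a loser may find $W$ empty because the winner has not yet written. So option (a) is the route to take, making the transfer's response itself carry the winner's identity through the created account set.

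\emph{Correctness argument.} Next I check the transition relation. The first linearized transfer satisfies all preconditions: $S\subseteq\mathcal{A}_{\text{allow}}$, $|S|=\lambda$, $p_i\in\mu(a^\star)$, $\nu^\star\notin\emptyset$, $b_i$ is fresh, and the balance condition holds. So exactly one transfer succeeds. Every later transfer hits $\nu^\star\in\mathcal{N}$ and falls to the ``otherwise'' branch, leaving the state unchanged and returning the current allow-set, which already contains the winner's $b_j$. Since the allow-set only grows, this holds regardless of interleaving. Agreement follows because the winner $p_j$ wrote $R[j]$ before its transfer, which linearizes before the loser's transfer, so the loser's read returns $v_j$. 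Validity is immediate. Termination holds because the protocol is straight-line code over wait-free operations.

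\emph{Main obstacle.} The expected difficulty is making the loser's identification of the winner rigorous from the response alone. That is, we need that $b_j\in\mathcal{A}'_{\text{allow}}$ holds for exactly one $j$. This requires that failed transfers add nothing, and that the chosen $b_i$ are distinct from each other and from $\mathcal{A}_0$, which is guaranteed by the uniqueness assumption in Section~\ref{sec:preliminaries}. I would also remark that the admissibility of these executions under either untraceability notion is deferred to the later corollary, as was done for Theorem~\ref{thm:uat-consensus-number}.
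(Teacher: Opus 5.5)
Your proof is correct, but it identifies the winner by a different mechanism than the paper. The paper initializes $a^\star$ with balance $2k$ and has $p_i$ create two accounts $b_i,c_i$, co-owned by all $k$ processes, with balances $i$ and $2k-i$. A loser takes $\{b,c\}=\mathcal{A}_{\text{allow}}\setminus\mathcal{A}_0$ from its response and decodes the winner as $\min\{\mathsf{read}(b),\mathsf{read}(c)\}$. Every transfer thus has the same public shape, and the winner's index travels only in confidential balances. You instead pre-assign distinct representations $b_1,\dots,b_k\notin\mathcal{A}_0$ and read the winner off which $b_j$ appears in the returned allow-set.

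Your route has several advantages:
\begin{itemize}
\item It is simpler: one new account, no balance reads, no arithmetic.
\item It extracts the winner from the transfer response alone, which is the same mechanism the paper later needs for $\cuat_\lambda$, where balances are unreadable.
\item With $\mu(b_i)=\{p_i\}$ it keeps $a^\star$ the only shared account. That matches the ``otherwise unshared'' hypothesis of Theorem~\ref{thm:k-uat-consensus-number} better than the paper's construction, whose $b_i,c_i$ are themselves $k$-shared.
\end{itemize}

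What the paper's encoding buys is that identification does not depend on knowing in advance which representation belongs to whom. A loser only needs to co-own the winner's new accounts, so the argument survives freshly randomized representations, as in the concrete instantiations described in the preliminaries. Yours needs $b_1,\dots,b_k$ agreed in advance, or published in registers \emph{before} the transfer, as the $A[j]$ registers of Figure~\ref{alg:consensus-from-cuat} are. Publishing after the transfer does not work, as you correctly observe for option (b).

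Admissibility under strong untraceability goes through for your protocol exactly as in Corollary~\ref{cor:luat-untraceability-free}. Every transfer carries the single masking set $S$, and the created accounts lie in no masking set.
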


\begin{proof}
  Initialize $a^\star$ with balance $2k$. Each $p_i$ spends $a^\star$ with the
  common nullifier $\nu^\star$ and creates two accounts $b_i,c_i$, both owned by
  all $k$ processes, with balances $i$ and $2k-i$, respectively. Thus every
  transfer has the same public shape, while the smaller of the two confidential
  balances identifies the winner. The protocol is given in
  Figure~\ref{fig:k-consensus-uat}.

  \begin{figure}[ht!]
  \begin{tcolorbox}[protocol={$k$-process consensus from LUAT with a shared account}]
    \step{\textbf{Shared state:}}
    \begin{codelines}
    \item $R[1], \ldots, R[k]$ atomic registers (proposals)
    \item A $\luat_\lambda$ object with shared account $a^\star$, $\mu(a^\star) = \{p_1, \ldots, p_k\}$, $\beta_0(a^\star)=2k$
    \item Account pairs $(b_i,c_i)$ with $\mu(b_i)=\mu(c_i)=\{p_1,\ldots,p_k\}$ for every $i\in[k]$
    \item Masking set $S \ni a^\star$ with $|S| \ge \lambda$
    \end{codelines}

    \step{$\mathbf{propose}(v)$ at process $p_i$, $i \in [k]$:}
    \begin{codelines}[resume]
    \item $R[i].\mathsf{write}(v)$ \textit{// Publish proposal}
    \item $\nu^\star \gets \mathsf{Nullify}(a^\star)$ \textit{// Shared nullifier}
    \item $\mathcal{A}_{\text{new},i} \gets \{b_i,c_i\}$ \textit{// Two previously unused accounts}
    \item $\beta_{\text{new},i}(b_i) \gets i$; $\beta_{\text{new},i}(c_i) \gets 2k-i$ \textit{// Encode $i$ in the balances}
    \item $(\mathcal{A}_{\text{allow}}, \mathcal{N}) \gets \mathsf{transfer}(S, a^\star, \nu^\star, \mathcal{A}_{\text{new},i}, \beta_{\text{new},i})$
    \item $\{b,c\} \gets \mathcal{A}_{\text{allow}}\setminus\mathcal{A}_0$ \textit{// The winner's new accounts}
    \item $\ell \gets \min\{\mathsf{read}(b),\mathsf{read}(c)\}$ \textit{// Decode the winner}
    \item \textbf{if} $\ell = i$ \textbf{then}
    \item \quad \textbf{return} $v$ \textit{// Won}
    \item \textbf{else} \textit{// Lost; the winner is $p_\ell$}
    \item \quad \textbf{return} $R[\ell].\mathsf{read}()$
    \end{codelines}
  \end{tcolorbox}
  \caption{Wait-free $k$-process consensus from a $\luat_\lambda$ object with a
  $k$-shared account. Every contender creates two accounts and encodes its
  identifier as the smaller of the two resulting balances.}
\label{fig:k-consensus-uat}
  \end{figure}

  \emph{Agreement.} By linearizability the $k$ transfers are ordered; let $p_j$
  come first. At its linearization point $a^\star\in S\subseteq
  \mathcal{A}_{\text{allow}}$, $\nu^\star\notin\mathcal{N}=\emptyset$,
  $p_j\in\mu(a^\star)$, and $\mathcal{A}_{\text{new},j}$ contains only
  previously unused representations, so $\tau_j$
  succeeds. The balance condition holds because
  $\beta_{\text{new},j}(b_j)+\beta_{\text{new},j}(c_j)=j+(2k-j)=2k=
  \beta_0(a^\star)$. It adds $\nu^\star$ to $\mathcal{N}$ and $b_j,c_j$ to the
  allow-set, so $\mathcal{A}_{\text{allow}}\setminus\mathcal{A}_0=\{b_j,c_j\}$.
  Every process owns both accounts and can read their balances. Since
  $j\le k\le 2k-j$, each obtains $\ell=\min\{j,2k-j\}=j$.

  Every later transfer $\tau_i$, $i\ne j$, finds $\nu^\star\in\mathcal{N}$ and
  fails, leaving the state unchanged. Hence $p_i$ also reads $\ell=j$. As
  $\ell=j\ne i$, it reads $R[j]$ and decides $v_j$, which $p_j$ wrote before
  invoking its transfer. Process $p_j$ computes $\ell=j=j$ and decides its own
  $v_j$. All decide $v_j$.

  \emph{Validity.} The decision is $R[j]$ for some $j\in[k]$, an input.

  \emph{Termination.} The protocol has no loops and every operation is
  wait-free.
\end{proof}

The balance encoding is available precisely because the source and the newly
created accounts are shared: every participant is authorized to read the
resulting balance. It cannot be used to obtain consensus from unshared accounts,
or in the CUAT constructions below, because balance confidentiality prevents a
losing process from reading an account owned by the winner. Those constructions
must therefore expose the winner through the object's synchronization pattern
rather than through balances.

Sharing lifts the bound to $k$ but no further. A process outside $\mu(a^\star)$
cannot contend for $\nu^\star$, so by Lemma~\ref{lem:luat-commute} its transfers
commute with those of every other process.

\begin{lemma}[Upper bound]
  \label{lem:k-uat-upper-bound}
  Let $k\ge 2$, let $a^\star$ be the only shared account of a $\luat_\lambda$
  object with $\mu(a^\star)=\{p_1,\dots,p_k\}$, and let $p_{k+1}\notin
  \mu(a^\star)$. Then $\{p_1,\dots,p_{k+1}\}$ have no wait-free
  $(k+1)$-process consensus protocol using that object and atomic registers.
\end{lemma}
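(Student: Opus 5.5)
Assume for contradiction that $\Pi$ is a wait-free $(k+1)$-process consensus protocol for $\mathcal{P}=\{p_1,\dots,p_{k+1}\}$. The argument follows the critical-configuration method of Lemma~\ref{lem:uat-upper-bound}. Take a reachable critical configuration $C$ and partition the processes into $\mathcal{P}_0$ and $\mathcal{P}_1$ according to the valency of $C\cdot p$. The register case is excluded exactly as in the proof of Lemma~\ref{lem:uat-upper-bound}, so every pending step at $C$ is an operation on a $\luat_\lambda$ object. The aim is to exhibit $p\in\mathcal{P}_0$ and $q\in\mathcal{P}_1$ whose pending steps reach the same shared state in either order. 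A spectator $s\notin\{p,q\}$, which exists because $k+1\ge3$, then runs solo from $C\cdot p\cdot q$ and from $C\cdot q\cdot p$. It must decide the same value in both, contradicting the opposite valencies.

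\emph{Step 1: $p_{k+1}$ commutes with everyone.} Let $\tau$ be the pending step of $p_{k+1}$ and $\tau'$ the pending step of any other process $r$. If either is a read, a failed transfer, or an operation on a different object, they commute trivially. Otherwise both are successful transfers on a common object. Because $p_{k+1}\notin\mu(a^\star)$, its spent account is unshared, and its only owner is $p_{k+1}$. Hence it differs from the account spent by $r$, the two nullifiers differ because $\mathsf{Nullify}$ is collision resistant, and Lemma~\ref{lem:luat-commute} gives identical states in either order.

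There is one subtlety. A transfer by $r$ that succeeds alone might fail after $\tau$, or the reverse. Lemma~\ref{lem:luat-commute} excludes this when both succeed alone. If $\tau'$ fails alone at $C$, it still fails after $\tau$: $\tau$ only enlarges the allow-set and deny-set and changes no existing balances. The only condition $\tau$ could newly satisfy is the freshness requirement on $\mathcal{A}_{\text{new}}$, and account-representation uniqueness rules that out. So a failure persists, and the case ``one fails'' also commutes.

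\emph{Step 2: conclude.} Without loss of generality $p_{k+1}\in\mathcal{P}_0$. Since $\mathcal{P}_1$ is nonempty, choose $q\in\mathcal{P}_1$; necessarily $q\ne p_{k+1}$. By Step~1, $C\cdot p_{k+1}\cdot q$ and $C\cdot q\cdot p_{k+1}$ have equal shared state, and every process other than $p_{k+1}$ and $q$ has the same local state in both. Since $k\ge2$, some third process $s$ exists. Running $s$ alone from each configuration yields the same decision, but the first configuration is $0$-valent and the second is $1$-valent, a contradiction.

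\emph{Main obstacle.} The delicate point is the case analysis in Step~1 where one transfer fails alone, because Lemma~\ref{lem:luat-commute} presupposes that both succeed. I would prove a small monotonicity claim first: a successful transfer by a process spending a different account preserves the success or failure of every other pending transfer. Checking this means going condition by condition through Definition~\ref{def:uat-lambda}. Each condition is either monotone in the allow-set, equality of the nullifier with a fixed value, or ownership and balance of pre-existing accounts, none of which $\tau$ alters.
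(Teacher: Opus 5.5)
Your route is the paper's: a critical configuration, $p_{k+1}$ in one valency class and some $q\in\{p_1,\dots,p_k\}$ in the other, commutation via Lemma~\ref{lem:luat-commute} because $p_{k+1}$ spends an account nobody else owns, and a spectator, which exists since $k+1\ge 3$. The gap is where you go beyond the paper: the claim that a transfer failing alone at $C$ still fails after $\tau$. Your justification runs monotonicity the wrong way.

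The freshness condition $\mathcal{A}_{\text{new}}\cap\mathcal{A}_{\text{allow}}=\emptyset$ is anti-monotone in the allow-set. Enlarging the allow-set can only break it, never newly satisfy it. Conversely, $S\subseteq\mathcal{A}_{\text{allow}}$ (and with it $a_s\in\mathcal{A}_{\text{allow}}$, so that $\beta(a_s)$ is defined) is monotone, and a successful $\tau$ can turn it from false to true.

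Concretely, suppose $q$'s pending transfer names in its masking set, or spends, an account $b$ that $p_{k+1}$'s pending transfer creates. Nothing forbids this: new representations may be fixed in advance or published in registers, as Figure~\ref{fig:k-consensus-uat} does with the $b_i,c_i$. Then $q$'s transfer fails at $C$ but succeeds at $C\cdot p_{k+1}$. The deny-sets at $C\cdot p_{k+1}\cdot q$ and $C\cdot q\cdot p_{k+1}$ therefore differ, and the same can happen with the roles exchanged. So the monotonicity claim in your last paragraph is false, and Step~1 does not give equal shared states in this case.

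The fix is not to force commutation there but to use the standard no-op argument. Suppose $q$'s pending step leaves the shared state unchanged when applied at $C$ (a read, or a transfer that fails at $C$). Then $C\cdot q\cdot p_{k+1}$ and $C\cdot p_{k+1}$ have the same shared state, $p_{k+1}$ receives the same response in both, and only $q$'s local state differs. Any process other than $q$, run alone, decides the same value from both. This contradicts the $1$-valence of $C\cdot q$ and the $0$-valence of $C\cdot p_{k+1}$. Comparing $C\cdot p_{k+1}\cdot q$ with $C\cdot q$ handles a no-op by $p_{k+1}$ symmetrically.

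Only when both steps succeed alone at $C$ on a common object do you need Lemma~\ref{lem:luat-commute}. There your argument is correct: $q$'s success forces $q\in\mu(a'_s)$, so the spent accounts differ. Note that the paper's parenthetical ``or trivially, if either step leaves the state unchanged'' asserts the same literal equality of $C\cdot p_{k+1}\cdot q$ and $C\cdot q\cdot p_{k+1}$. It is sound only when read as this no-op argument.
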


\begin{proof}
  Suppose such a protocol $\Pi$ exists. As in
  Lemma~\ref{lem:uat-upper-bound}, take a reachable critical configuration $C$,
  partition $\mathcal{P}=\mathcal{P}_0\sqcup\mathcal{P}_1$ by successor valency
  with both parts non-empty, and note every pending step is an operation on the
  object. Process $p_{k+1}$ lies in one part; say $p_{k+1}\in\mathcal{P}_0$, the
  other case being symmetric. Choose $q\in\mathcal{P}_1$, so
  $q\in\{p_1,\dots,p_k\}$.

  Since $a^\star$ is the only shared account and $p_{k+1}\notin\mu(a^\star)$,
  every account $p_{k+1}$ may spend is owned by $p_{k+1}$ alone, so the pending
  steps of $p_{k+1}$ and $q$ spend distinct accounts. By
  Lemma~\ref{lem:luat-commute} (or trivially, if either step leaves the state
  unchanged), the shared state at $C\cdot p_{k+1}\cdot q$ equals that at
  $C\cdot q\cdot p_{k+1}$, and neither step writes a register.

  It remains to find a spectator. The process set has $k+1\ge 3$ members, so
  there is some $s\notin\{p_{k+1},q\}$. Running $s$ alone from the two
  configurations, wait-freedom makes it decide, and indistinguishability makes it
  decide the same value in both, contradicting the opposite valencies of
  $C\cdot p_{k+1}$ and $C\cdot q$ exactly as in
  Lemma~\ref{lem:uat-upper-bound}.
\end{proof}

\begin{theorem}[Consensus number of $k$-shared $\luat_\lambda$]
  \label{thm:k-uat-consensus-number}
  If $k$ processes share ownership of an account $a^\star$ in an otherwise
  unshared $\luat_\lambda$ object, the consensus number among them is
  $\max(k,2)$, under either untraceability notion.
\end{theorem}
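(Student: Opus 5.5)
The plan is to assemble the theorem from the two lemmas just proved, plus the unshared case, splitting on $k$. For the lower bound with $k\ge 2$, Lemma~\ref{lem:k-uat-lower-bound} gives $k$-process consensus among the co-owners $p_1,\dots,p_k$ of $a^\star$. That protocol already needs $k\ge2$ processes, so it also supplies $2$-process consensus. For $k=1$ the object is unshared, and Lemma~\ref{lem:uat-lower-bound} gives $2$-process consensus among any two processes owning distinct accounts of a common masking set. Together these give $\max(k,2)$ as the lower bound.

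For the upper bound with $k\ge2$, I would use Lemma~\ref{lem:k-uat-upper-bound}. To exceed $k$ processes, a protocol must involve at least one process outside $\mu(a^\star)$, and that lemma excludes consensus among the $k$ owners plus one outsider. I would strengthen it slightly to arbitrary process sets of size $n\ge k+1$, since the same critical-configuration argument works verbatim. Fix a critical configuration $C$ and the valency partition $\mathcal{P}_0\sqcup\mathcal{P}_1$.

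\begin{itemize}
\item \emph{Case 1: some pair $p\in\mathcal{P}_0$, $q\in\mathcal{P}_1$ is not both in $\mu(a^\star)$.} Then $p$ and $q$ cannot spend a common account. Lemma~\ref{lem:luat-commute} (or triviality for reads and failed transfers) makes $C\cdot p\cdot q$ and $C\cdot q\cdot p$ agree in shared state. A spectator $s$, which exists because $n\ge3$, then yields the contradiction.
\item \emph{Case 2: every cross pair lies in $\mu(a^\star)$.} Then $\mathcal{P}_0\cup\mathcal{P}_1\subseteq\mu(a^\star)$, which is impossible because $n>k$ forces some process outside $\mu(a^\star)$ into one of the parts.
\end{itemize}

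The main care point is this pair selection, but it is exactly the step already done in Lemma~\ref{lem:k-uat-upper-bound}: the outsider lies in one part and is paired with any member of the other. For $k=1$, Lemma~\ref{lem:uat-upper-bound} gives the bound $2$.

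Finally, the theorem holds ``under either untraceability notion,'' so I must check that both bounds survive restricting to admissible executions. Impossibility is preserved under restriction: fewer admissible executions only shrink the set of protocols, so the upper bound stands. The lower-bound protocols must themselves be admissible under strong untraceability, which is the delicate point. They use a single masking set $S$, so by Theorem~\ref{thm:untraceability-uniform} incidence on $S$ is trivially uniform. This holds in every history, including ones with failed transfers, because each transfer names the same $S$. Strong untraceability therefore holds, and weak untraceability follows since it is implied by strong. I would cite Corollary~\ref{cor:luat-untraceability-free} for this check, as the proof of Theorem~\ref{thm:uat-consensus-number} does.
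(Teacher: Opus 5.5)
Your proposal is correct and follows the paper's proof, which combines Lemmas~\ref{lem:k-uat-lower-bound} and~\ref{lem:k-uat-upper-bound} for $k\ge 2$, Theorem~\ref{thm:uat-consensus-number} for $k=1$, and Corollary~\ref{cor:luat-untraceability-free} for the two untraceability notions; your extension of the upper bound to every process set of size at least $k+1$ is a harmless strengthening, since the paper states its lemma only for the $k$ owners plus one outsider. One small correction to your re-derivation, where you follow the paper's wording: a transfer that fails at $C$ need not commute with the other pending step, because it may succeed after that step creates an account it names, so argue as for a read instead: if $p$'s step fails at $C$, then $C\cdot p\cdot q$ and $C\cdot q$ differ only in $p$'s local state, and any process other than $p$ running solo gives the contradiction (symmetrically if $q$'s step fails).
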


\begin{proof}
  For $k\ge 2$, Lemmas~\ref{lem:k-uat-lower-bound}
  and~\ref{lem:k-uat-upper-bound} give matching bounds of $k$. For $k=1$ the
  object is unshared and Theorem~\ref{thm:uat-consensus-number} gives $2$.
  Corollary~\ref{cor:luat-untraceability-free} verifies both bounds under each
  notion.
\end{proof}

The floor at $2$ distinguishes $\luat$ from standard asset
transfer~\cite{DBLP:journals/dc/GuerraouiKMPS22}, whose $k$-shared consensus
number is exactly $k$ and drops to $1$ without sharing. It is due to the
deny-set: even with no shared account, a $\luat$ transfer reports how many
transfers preceded it, which a plain asset transfer does not. This is consistent
with Frey \etal~\cite{DBLP:conf/wdag/FreyGR23}, who locate the synchronization
cost of private payments in the deny-list rather than the allow-list.

\subsubsection{Untraceability and the consensus number}
\label{subsubsec:luat-untr-cons}

Neither bound above mentions the untraceability notion in force. Strong
untraceability does constrain how $\luat$ may be used, by the same mechanism as
in the constant-state case: by Corollary~\ref{cor:uniform-incidence},
masking-set incidence must be uniform. Uniformity is therefore a real constraint
on $\luat$: histories that over-use an
account are inadmissible under $\luat_\lambda^+$, and the hub families of
Section~\ref{subsubsec:weak-untr} are excluded here too. It does not, however,
affect the consensus number, and the following corollary makes the two bounds of
Sections~\ref{subsubsec:luat-unshared} and~\ref{subsubsec:luat-shared} explicit
in the untraceability notion.

\begin{corollary}[Untraceability does not affect $\consnum{\luat_\lambda}$]
  \label{cor:luat-untraceability-free}
  For every $\lambda\ge 2$,
  \[
    \consnum{\luat_\lambda^-}=\consnum{\luat_\lambda^+}=
    \begin{cases}
      2 & \text{on unshared objects,}\\
      \max(k,2) & \text{on objects with a $k$-shared account.}
    \end{cases}
  \]
\end{corollary}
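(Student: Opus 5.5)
The untraceability notion only restricts which executions are admissible; it never adds behaviors. The plan therefore treats the two directions separately, comparing the notions by inclusion.

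\emph{Upper bounds.} Strong untraceability implies weak, so the admissible executions of $\luat_\lambda^+$ are a subset of those of $\luat_\lambda^-$, which are in turn a subset of those of the unrestricted object of Definition~\ref{def:uat-lambda}. A wait-free consensus protocol for the restricted object is in particular a protocol over that object. The valency arguments of Lemmas~\ref{lem:uat-upper-bound} and~\ref{lem:k-uat-upper-bound} use only the following: the critical configuration and its successors, the extensions $C\cdot p\cdot q$ and $C\cdot q\cdot p$, and the solo run of the spectator $s$. These must all be admissible executions of the object in force. The one point to check is that $C\cdot p\cdot q$ and $C\cdot q\cdot p$ are admissible whenever $C\cdot p$ and $C\cdot q$ are. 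The simplest route is to note that admissibility is a property of the history, namely its multiset of masking sets, and that the two orders produce the same multiset. So whenever the history extended by both steps is admissible in one order, it is admissible in the other. If the protocol only ever generates admissible histories, as the restricted object's specification requires, the argument goes through verbatim. Hence the bounds $2$ and $\max(k,2)$ hold for both notions.

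\emph{Lower bounds.} Here I would verify that the protocols of Figures~\ref{fig:2-consensus-luat} and~\ref{fig:k-consensus-uat} produce only admissible histories under $\luat_\lambda^+$, which then covers $\luat_\lambda^-$ as well. In both protocols every transfer, successful or not, names the same masking set $S$, so every history consists of $m$ copies of $S$ for some $m$. Then $f_a(H)=m$ for every $a\in\mathcal{A}(H)=S$. The masking sets pairwise intersect, so Corollary~\ref{cor:uniform-incidence} gives strong untraceability on every history. Weak untraceability follows because it is implied by strong.

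The main obstacle is the upper-bound step. The real risk is the phrase ``restricts the admissible executions'': if an adversarial scheduler's extension were inadmissible, the valency argument would lose an execution. I would resolve this by making order-independence of admissibility explicit. Admissibility depends only on the incidence counts $f_a(H)$, which Theorem~\ref{thm:untraceability-uniform} shows to be symmetric in the order of transfers. Interleavings that reach the same multiset of masking sets are therefore equally admissible, so the commuting-steps argument loses nothing.
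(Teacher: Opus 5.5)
Your proposal is correct and follows the paper's argument. For the upper bounds, a protocol over either restricted variant is a protocol over the unrestricted object; the paper notes that Lemma~\ref{lem:luat-commute} never mentions masking sets. For the lower bounds, every transfer of Figures~\ref{fig:2-consensus-luat} and~\ref{fig:k-consensus-uat}, failed ones included, names the same $S$, so incidence is constant on $\mathcal{A}(H)=S$. Your extra order-independence check is unnecessary once you adopt, as you do, the reading that the protocol only generates admissible histories. On its own it shows only that $C\cdot p\cdot q$ and $C\cdot q\cdot p$ are admissible together, not that they are admissible whenever $C\cdot p$ and $C\cdot q$ are.
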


\begin{proof}
  The untraceability notion in force restricts the admissible executions, hence
  the masking sets a protocol may carry; we check that neither bound depends on
  that restriction.

  \emph{Upper bounds.} Lemmas~\ref{lem:uat-upper-bound}
  and~\ref{lem:k-uat-upper-bound} rest on Lemma~\ref{lem:luat-commute}, whose
  only hypothesis is that the two transfers spend distinct accounts. Masking sets
  are not mentioned, so the bounds hold over every family of masking sets, hence
  over the admissible executions of either variant. Restricting the executions
  can only lower a consensus number, so both variants are bounded as stated.

  \emph{Lower bounds.} It suffices to exhibit admissible executions of
  $\luat_\lambda^+$ realizing them, strong untraceability implying weak
  untraceability as noted in Section~\ref{sec:untraceability}. The
  protocols of Figures~\ref{fig:2-consensus-luat} and~\ref{fig:k-consensus-uat}
  fix one masking set $S$ with $|S|\ge\lambda$, used by every invocation, and
  constrain it no further. Take $\mathcal{A}_0=S$. Every transfer of the
  execution, successful or not, carries $S$, so each $a\in S$ has incidence
  $f_a(H)=|H|$, while the accounts created during the execution lie in no masking
  set and are excluded from $\mathcal{A}(H)$. Incidence is therefore constant on
  $\mathcal{A}(H)=S$, and Theorem~\ref{thm:luat-untraceability-uniform} makes the
  execution admissible under strong untraceability. The decisions are unaffected,
  so the protocols achieve $2$- and $k$-consensus respectively on
  $\luat_\lambda^+$.
\end{proof}

The corollary separates two things that coincide for $\cuat_\lambda$. Uniformity
constrains which masking sets a protocol may use; it does not constrain
synchronization, because in $\luat$ the masking set is inert and overlapping
masking sets do not conflict. A protocol is therefore free to satisfy
untraceability with one choice of masking sets and to obtain its synchronization
from a disjoint mechanism, namely the shared account. Section~\ref{sec:ruat}
treats an object whose transfers consume their entire masking set; there the two
coincide, since overlap is conflict, and the same uniformity constraint
determines the consensus number.

\subsection{State growth and garbage collection}
\label{subsec:luat-gc}

The counterpart of $\luat$'s low synchronization cost is its state. Both the
allow-set and the deny-set grow monotonically: every transfer adds a nullifier to
$\mathcal{N}$ and previously unused account representations to
$\mathcal{A}_{\text{allow}}$, while retaining
the invalidated ones. The state therefore scales with the number of transfers
rather than with the number of active accounts, and neither structure can be
pruned without sacrificing untraceability (removing an invalidated account
reveals which nullifier invalidated it) or security (dropping a nullifier permits
a double spend).

One remedy partitions accounts into pre-declared, immutable masking sets. Once a
masking set of size $k$ has been used $k$ times, its accounts are exhausted and
both the set and its nullifiers can be collected. This technique, studied by
Chow \etal~\cite{DBLP:conf/csfw/ChowELRW23} and Cachin and
Wicht~\cite{DBLP:journals/popets/CachinW25}, uses an auxiliary
\emph{partitioning object} maintaining a consistent partition of the account set
into parts of bounded size. We formalize it and determine its consensus number,
which settles whether garbage collection costs more synchronization than $\luat$
itself. Collection need not be triggered by exhaustion:
Anonymous Zether~\cite{DBLP:conf/fc/BunzAZB20} scopes each nullifier to an epoch and
discards the deny-set at every epoch boundary, which bounds the state at the
price of one transfer per account per epoch. The two triggers differ in progress
as well as in state. Exhaustion is caused by the transfers themselves, so a
transfer that is eligible when invoked stays eligible and
Theorem~\ref{thm:luat-fairness} still applies; an epoch boundary is
external to them, and a transfer whose proof is not included before it expires
must be reissued, which makes progress depend on a synchrony assumption that
$\luat_\lambda$ itself does not need.

\begin{definition}[$k$-partitioning object]
  \label{def:partitioning}
  Let $\mathcal{A}$ be a set of accounts, $\prec$ a fixed total order on
  $\mathcal{A}$, and $k\ge 1$. The \emph{$k$-partitioning object} is the tuple
  $(Q,q_0,O,R,\Delta)$ where:
  \begin{itemize}
    \item \textbf{States:} $Q$ consists of the families
      $\mathcal{B}\subseteq 2^{\mathcal{A}}$ of pairwise-disjoint non-empty
      \emph{parts} with $|B|\le k$ for every $B\in\mathcal{B}$, of which at
      most one is \emph{incomplete} (of size $<k$). The initial state is
      $q_0=\emptyset$.
    \item \textbf{Operations and responses:}
      $O=\{\mathsf{partition}(A):A\subseteq\mathcal{A}\}$ and
      $R=2^{2^{\mathcal{A}}}$.
    \item \textbf{Transitions:} On $\mathsf{partition}(A)$ in state
      $\mathcal{B}$, let $U=A\setminus\bigcup\mathcal{B}$ be the yet
      unpartitioned elements of $A$. The new state $\mathcal{B}'$ is obtained
      by:
      \begin{enumerate}[(i)]
        \item if $\mathcal{B}$ has an incomplete part $B^*$, replacing it by
          $B^*\cup A'$, where $A'$ consists of the
          $\min(k-|B^*|,|U|)$ smallest elements of $U$ under $\prec$;
        \item partitioning $U\setminus A'$, taken in $\prec$-order, into
          consecutive parts of size $k$, with at most one final part of size
          $<k$.
      \end{enumerate}
      The response is $r=\{B\cap A: B\in\mathcal{B}',\,B\cap A\ne\emptyset\}$.
  \end{itemize}
  An execution is \emph{admissible} if the sets partitioned by distinct
  operations are pairwise disjoint and each is partitioned by its owner. This
  matches the garbage-collection setting, where a process partitions the
  accounts it has just created.
\end{definition}

Fixing $\prec$ makes the transition deterministic, so the object has a
well-defined sequential specification; the invariant that at most one part is
incomplete is preserved, since step~(i) consumes the only incomplete part and
step~(ii) leaves at most one. The response reports the partition
\emph{restricted} to $A$, so a process learns how its own accounts were grouped
and nothing else. This withholds nothing the application needs. A process does
eventually require the whole part, to name its co-members as decoys, but it
reads them from the ledger once the part is complete, and a register read has
consensus number $1$. What the restriction denies is only that this information
arrive \emph{atomically} with the operation, and that is where the
synchronization power would lie: were the response to expose whole parts, a
process could read foreign accounts out of its own part and reconstruct the
arrival order of other operations. Already at $k=3$, three processes each
partitioning a pair $\{x_i,y_i\}$ would identify the first to arrive, raising the
consensus number to at least $3$. Collection itself needs no coordination and is
therefore excluded from the object: once the partition is agreed, a part is
discarded when its own $k$ accounts have been spent, which every process can
determine locally.

Two conditions on $\prec$ matter. It must be fixed in advance and identical at
every process, evaluated locally on account representations and never negotiated
during the execution, since agreeing on an order would itself be an agreement
problem; this is what keeps $\prec$ out of the synchronization analysis, and
Lemma~\ref{lem:part-faa} confirms it by implementing the object from a single
fetch-and-add. It must also be independent of ownership, so that membership in a
part does not follow from who created an account. Independence is necessary but
not sufficient. Under admissibility a process partitions accounts it has just
created, and step~(ii) forms parts from those accounts alone, so an invocation
with $|A|\ge k$ fills a part with one owner's accounts whatever $\prec$ is; only
the part inherited by step~(i) mixes owners. The parts therefore serve as
masking sets when invocations are small relative to $k$, which is the
garbage-collection regime, where a transfer creates an output and a change
account. The degenerate case is invisible to
Theorem~\ref{thm:untraceability-uniform}: every account lies in exactly one
part, so incidence is uniformly $1$ and the funding account remains one of $k$
candidates, but those candidates share an owner and the transfer identifies its
sender.

The object reduces to a standard primitive.

\begin{lemma}[Reduction to fetch-and-add]
  \label{lem:part-faa}
  Let $t=|\bigcup\mathcal{B}|$ denote the number of partitioned elements before
  an operation. Along admissible executions:
  \begin{enumerate}[(i)]
    \item the $k$-partitioning object is implementable from a fetch-and-add
      register modulo $k$ and atomic registers; and
    \item writing $B$ for the response part containing the $\prec$-least
      element of $A$, we have $|B|=\min\big(k-(t\bmod k),\,|A|\big)$.
  \end{enumerate}
\end{lemma}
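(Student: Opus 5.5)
We prove (ii) first, as a property of the sequential specification, and then use it to build the implementation in (i). The key invariant is that, in every reachable state, the incomplete part (if any) has size exactly $t\bmod k$. We show this by induction on the number of operations. Initially $t=0$ and there is no incomplete part.

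For the step, let $s=|U|$. Under admissibility the sets of distinct operations are pairwise disjoint, so $U=A$ and $s=|A|$. If an incomplete part $B^*$ exists, with $|B^*|=t\bmod k$, step~(i) adds $\min(k-|B^*|,s)$ elements to it. Step~(ii) then cuts the remaining elements into blocks of $k$ plus a possibly shorter remainder. The new partitioned count is $t+s$. Checking the two subcases (either $B^*$ is not filled, or it is filled and the remainder has size $(s-(k-|B^*|))\bmod k$) shows that the incomplete part has size $(t+s)\bmod k$.

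For (ii) itself, note that the $\prec$-least element of $A$ is placed first. If $t\bmod k\neq 0$, it joins $B^*$, so its response part is $B^*\cap A$, of size $\min(k-(t\bmod k),|A|)$. If $t\bmod k=0$, it opens the first new block, of size $\min(k,|A|)$. This also equals $\min(k-(t\bmod k),|A|)$.

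For (i), we show that the response is determined by $A$, $\prec$ and $t\bmod k$ alone. The response lists the restrictions to $A$ of the parts meeting $A$. By the analysis above, these are consecutive $\prec$-ordered runs of $A$: the first has length $\min(k-(t\bmod k),|A|)$ and the rest have length $k$, except a final shorter run. No foreign account enters the response because it is restricted to $A$.

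The implementation keeps a fetch-and-add register $F$ modulo $k$. An operation $\mathsf{partition}(A)$ performs $F.\mathsf{fetch\&add}(|A|)$, obtains $t\bmod k$, and computes its response locally from the $\prec$-sorted $A$. We linearize each operation at its fetch-and-add. Then the value returned equals $t\bmod k$ for the sequential state at that point, so the responses agree with the specification along the linearization. The global state $\mathcal{B}$ need not be materialized for responses. If a reader of the full partition is wanted, each process can append its $A$ and its fetched value to a single-writer register, from which the parts are reconstructible. This is consistent with the remark that ledger reads cost nothing in synchronization.

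The main obstacle is the linearizability argument. Responses must depend only on $t\bmod k$ and not on which foreign accounts fill $B^*$, and this is exactly where both admissibility (making $U=A$) and the restriction of responses to $A$ are used. The remaining care point is the modular bookkeeping in the invariant's induction, in particular the boundary case where $B^*$ is filled exactly.
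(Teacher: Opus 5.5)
Your proof is correct and follows the paper's route: admissibility gives $U=A$, the incomplete part has size $t\bmod k$ (the paper reads this off the state invariant, since every other part has size exactly $k$, instead of inducting), so the response restricted to $A$ depends only on $A$, $\prec$ and $t\bmod k$, and a single fetch-and-add of $|A|$ modulo $k$ serves as the linearization point. Drop the aside about reconstructing the full partition from the fetched residues. The object has no read operation, so it is not needed, and residues modulo $k$ do not determine the parts anyway: for $k=2$, singleton operations on $x,y,z$ linearized as $x,y,z$ or as $z,y,x$ fetch the same residues ($0$, $1$, $0$ for $x$, $y$, $z$) but produce the parts $\{x,y\},\{z\}$ versus $\{z,y\},\{x\}$.
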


\begin{proof}
  Admissibility gives $U=A$ for every operation, since no element of $A$ has
  been partitioned before. By the state invariant at most one part is
  incomplete, and it has size $t\bmod k$: the partitioned elements fill parts
  of size $k$ in order, leaving $t\bmod k$ elements in the last one. Steps~(i)
  and~(ii) therefore place the elements of $A$, in $\prec$-order, into the
  $k-(t\bmod k)$ free slots of the incomplete part and then into new parts
  of size $k$. The induced partition of $A$, which is exactly the response, is
  thus a function of $t\bmod k$, $|A|$, and the $\prec$-order of $A$; the last
  two are local to the invoking process. The effect on the shared state is
  $t\mapsto t+|A|$.

  This yields claim~(i): on $\mathsf{partition}(A)$, fetch-and-add $|A|$ to a
  register modulo $k$, obtaining $t\bmod k$, and compute the response locally by
  the rule above. Each operation takes a single fetch-and-add, so the
  implementation is wait-free and linearizes at that step.

  For claim~(ii), write $c=t\bmod k$ and let $a_1$ be the $\prec$-least element
  of $A$. If $c=0$ there is no incomplete part, so $a_1$ opens a new part,
  which receives the $\prec$-first $\min(k,|A|)$ elements of $A$. If $c>0$ the
  incomplete part has $k-c$ free slots, and $a_1$ joins it along with the
  $\prec$-next $\min(k-c,|A|)-1$ elements of $A$. Admissibility makes $A$
  disjoint from every other partitioned set, so intersecting with $A$ removes
  exactly the foreign elements of that part, leaving $|B|=\min(k-c,|A|)$. Both
  cases agree with $|B|=\min\big(k-(t\bmod k),|A|\big)$, since $c=0$ gives
  $k-c=k$.
\end{proof}

\begin{theorem}[Consensus number of $k$-partitioning]
  \label{thm:part-consensus-number}
  Along admissible executions, $\consnum{1\text{-}\mathrm{Part}}=1$ and
  $\consnum{k\text{-}\mathrm{Part}}=2$ for every $k\ge 2$.
\end{theorem}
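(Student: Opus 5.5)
Three bounds are needed: an upper bound of $2$ for every $k$, a stronger upper bound of $1$ when $k=1$, and a lower bound of $2$ when $k\ge2$. All of them come from Lemma~\ref{lem:part-faa}.

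\emph{Upper bounds.} By Lemma~\ref{lem:part-faa}(i), along admissible executions the $k$-partitioning object is implementable from one fetch-and-add register (modulo $k$) together with atomic registers. Fetch-and-add has consensus number $2$~\cite{DBLP:journals/toplas/Herlihy91}, so $\consnum{k\text{-}\mathrm{Part}}\le2$ for every $k$.

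For $k=1$ the bound drops to $1$. Every part has size $1$, so the response to $\mathsf{partition}(A)$ is the family of singletons of $A$. This depends only on the argument and not on the state, so the implementation needs no shared object at all: each operation is a local computation. An object implementable from atomic registers alone has consensus number $1$. Equivalently, in Lemma~\ref{lem:part-faa} we have $t\bmod 1=0$ always, so the fetch-and-add returns nothing informative.

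\emph{Lower bound for $k\ge2$.} I will adapt the protocol of Figure~\ref{fig:2-consensus-luat}, letting the observation $|B|$ play the role of $|\mathcal N|$.
\begin{itemize}
\item Each of $p_1,p_2$ first writes its input to $R[i]$.
\item It then invokes $\mathsf{partition}(A_i)$ on a set $A_i$ of one fresh account it owns, so $|A_i|=1$. The sets are disjoint, so the execution is admissible.
\end{itemize}
With singletons, Lemma~\ref{lem:part-faa}(ii) gives $|B|=\min(k-(t\bmod k),1)=1$ in every case, so singletons cannot distinguish the two orders. Instead I take $|A_i|=k$ and use $k$ fresh accounts:
\begin{itemize}
\item the first operation sees $t=0$, so $|B|=\min(k,k)=k$;
\item the second sees $t=k$, so $t\bmod k=0$ and again $|B|=k$.
\end{itemize}
This also fails, and I choose instead $|A_i|=1$ for $p_1$ and $|A_i|=k$ for $p_2$, hoping for asymmetric outcomes. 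If $p_1$ is first, then $t=0$ at $p_2$'s turn is false: $t=1$, so $|B_2|=k-1$. If $p_2$ is first, then $p_2$ sees $t=0$ and $|B_2|=k$, while $p_1$ sees $t=k\equiv0$ and $|B_1|=1$.

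So $p_2$ distinguishes the two orders, but $p_1$ always sees $|B_1|=1$. To give $p_1$ information as well, I give both processes sets of size $1$ with $k\ge2$ but compare against the \emph{full} response: the response is the restriction to $A$, hence always $\{A\}$, and this fails too. I therefore use sizes $|A_1|=|A_2|=k-1$, which requires $k\ge2$:
\begin{itemize}
\item the first operation sees $t=0$, so $|B|=k-1$;
\item the second sees $t=k-1$, so $|B|=\min(1,k-1)=1$.
\end{itemize}
When $k\ge3$ we have $k-1\ne1$, so each process learns whether it came first. The winner decides its own value, and the loser decides the value it reads from $R[3-i]$. Agreement, validity and termination then follow exactly as in Lemma~\ref{lem:uat-lower-bound}.

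\emph{Main obstacle.} The hard step is the case $k=2$, where $k-1=1$ and the two observations coincide. There I take $|A_i|=3$ for both processes:
\begin{itemize}
\item the first sees $t=0$, so $|B|=\min(2,3)=2$;
\item the second sees $t=3\equiv1\pmod 2$, so $|B|=\min(1,3)=1$.
\end{itemize}
This separates the two orders. More generally, taking $|A_i|=k+1$ gives $|B|=k$ for the first operation and $|B|=k-1$ for the second, since $t=k+1\equiv1\pmod k$; this is distinct for every $k\ge2$ and yields a uniform protocol. It remains to check that the parts formed are legal and that admissibility holds, which follows from the disjointness of the fresh accounts the two processes create.
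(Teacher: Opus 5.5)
Your proof is correct and matches the paper's argument: the upper bound via Lemma~\ref{lem:part-faa}(i) and the consensus number of fetch-and-add, the $k=1$ case via the state-independent response of singletons, and the lower bound with $|A_i|=k+1$, where the first operation sees a part of size $k$ and the second, meeting $t\equiv1\pmod k$, one of size $k-1$. Your side observation that $|A_i|=k-1$ already separates the two orders when $k\ge3$ (part sizes $k-1$ versus $1$) is also correct, and it shows that, contrary to the paper's closing remark, $k+1$ is the smallest workable size only for $k=2$; the exploratory dead ends can simply be dropped from a final write-up.
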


\begin{proof}
  \emph{Case $k=1$.} Here $t\bmod 1=0$ always, so by Lemma~\ref{lem:part-faa}
  the response is the constant $\{\{a\}:a\in A\}$, which the invoking process
  computes without consulting the shared state. The object is implementable from
  nothing, so it adds no power to atomic registers, whose consensus number is
  $1$~\cite{DBLP:journals/toplas/Herlihy91}. Hence
  $\consnum{1\text{-}\mathrm{Part}}=1$.

  \emph{Upper bound for $k\ge 2$.} By Lemma~\ref{lem:part-faa} the object is
  implementable from a fetch-and-add register and atomic registers.
  Fetch-and-add has consensus number
  $2$~\cite{DBLP:journals/toplas/Herlihy91}, and an object implementable from
  objects of consensus number $2$ cannot exceed $2$, since otherwise the
  implementation would solve $3$-process consensus from consensus-number-$2$
  objects. Hence $\consnum{k\text{-}\mathrm{Part}}\le 2$. The bound is not an
  artefact of the reduction. By claim~(ii) each of $d$ processes partitioning
  $k+1$ accounts sees a different response size, and so learns its own position
  in the linearization order; for $k\ge 3$ the first three see $k$, $k-1$ and
  $k-2$. A position is not an identity, however: the third process learns that it
  is third but not which of the other two preceded it, and so has no register it
  can adopt. Only at two processes does a position determine the other party,
  which is the same reason fetch-and-add itself stops at $2$.

  \emph{Lower bound for $k\ge 2$.} Let $p_1,p_2$ hold disjoint account sets
  $A_1,A_2$ with $|A_i|=k+1$ and inputs $v_1,v_2$. Each $p_i$ writes $v_i$ to
  the single-writer register $R[i]$, invokes $\mathsf{partition}(A_i)$, and
  inspects the part $B_i$ of its response that contains the $\prec$-least
  element of $A_i$. It decides its own input if $|B_i|=k$, and otherwise reads
  $R[3-i]$ and decides that value.

  By linearizability the two operations are ordered; let $p_i$ come first. It
  meets $t=0$, so claim~(ii) of Lemma~\ref{lem:part-faa} gives
  $|B_i|=\min(k,k+1)=k$, and $p_i$ decides $v_i$. Process $p_j$, $j\ne i$, meets
  $t=|A_i|=k+1$, hence $t\bmod k=1$ and $|B_j|=\min(k-1,k+1)=k-1$. As
  $k-1\ne k$, process $p_j$ takes the second branch and reads $R[i]$, which
  $p_i$ wrote before invoking its operation, and decides $v_i$. Both decide
  $v_i$, so agreement holds; validity is immediate, as the decision is $v_1$ or
  $v_2$; and the protocol has no loops, so it is wait-free. Hence
  $\consnum{k\text{-}\mathrm{Part}}\ge 2$.

  The choice $|A_i|=k+1$ matters: it is the smallest size that is neither a
  multiple of $k$ (which would leave $t\bmod k$ at $0$ for both processes, making
  the responses coincide) nor equal to $1$ (for which the response is the
  singleton $\{A_i\}$ in every state, carrying no information).
\end{proof}

For $k\ge 2$ the partitioning object therefore has the same consensus number as
$\luat$ itself, so partitioning-based garbage collection adds no synchronization
requirement to the classical design.

Randomizing the assignment does not change this number. Cachin and
Wicht~\cite{DBLP:journals/popets/CachinW25} draw the partition from a
distributed randomness beacon, so that an adversary cannot grind account keys to
steer its accounts into a chosen part; a public deterministic rule, including
the order $\prec$ above, offers no such guarantee. The beacon value is read-only
shared data, of consensus number $1$, and it selects which part an account
joins, not how many free slots that part still has. Enforcing $|B|\le k$
requires the object to track that occupancy, so the response continues to report
how much of a part was already filled, which is exactly what the lower bound
uses. Randomization therefore leaves the object at $2$; what it buys is grinding
resistance and the independence from ownership discussed above, and what it
costs is a stronger failure assumption. Unpredictability requires the beacon
value to be unknown when the accounts are created, so a partition waits for the
beacon to publish, and publishing needs enough correct processes to contribute,
a threshold set by the beacon scheme. The partitioning object is wait-free and
tolerates any number of crashes, whereas the randomness it consumes is available
only under a bound on the number of faulty processes. This differs from the
epoch-scoped nullifiers noted above, which instead presuppose synchrony. A
beacon that must also agree on which contributions were counted adds
synchronization of its own, which a beacon whose round value is a unique
threshold signature on the round number avoids.

To summarize the linear-state family: its consensus number is $2$, independently
of $\lambda$ and of the untraceability notion in force, and $\max(k,2)$ under
$k$-sharing; strong untraceability constrains its masking sets but not its
synchronization; and its state growth is collectable at no synchronization cost.
Its cost is storage. Section~\ref{sec:ruat} removes that cost and determines what
replaces it.

\section{Constant untraceable asset transfer}
\label{sec:ruat}

\emph{Quisquis} and similar systems~\cite{DBLP:conf/asiacrypt/FauziMMO19,DBLP:journals/iacr/AlupothaGC24,DBLP:conf/asiacrypt/MadathilS25} address the
monotonic growth that Section~\ref{subsec:luat-gc} charged to $\luat$, and they
do so by discarding the spent account instead of retaining it: each transfer
invalidates \emph{all} accounts of its masking set and creates new ones in their
place, while cryptographic proofs hide which of them moved value. The ledger
then stores only current accounts, and its size is bounded by the number of
active accounts rather than by the number of transfers.

We model this family as the \emph{constant untraceable asset transfer} (CUAT)
object, again building on the asset transfer formalism of Guerraoui
\etal~\cite{DBLP:journals/dc/GuerraouiKMPS22}. The common untraceability notions
and their uniform-incidence characterization appear in
Section~\ref{sec:untraceability}; this section gives the formal CUAT model and
determines its consensus number.

One difference from Section~\ref{sec:luat} accounts for the difference in the
results. A $\luat$ transfer consumes a single account and names the rest of its
masking set as decoys, so overlapping masking sets do not conflict and, by
Lemma~\ref{lem:luat-commute}, transfers on distinct accounts commute. A CUAT
transfer consumes its masking set. Two transfers whose masking sets meet cannot
both commit, so overlap is conflict, the masking-set family carries a graph
structure, and the uniformity required by strong untraceability, which does not
affect $\luat$ (Corollary~\ref{cor:luat-untraceability-free}), bears on the
consensus number.

\subsection{Formal model}
\label{subsec:ruat-model}
We now formalize the $\lambda$-CUAT object. Each transfer atomically invalidates an old masking set and creates a new one of size $\lambda$, with a zero-knowledge proof witnessing ownership of the invalidated accounts. In the terms of Definition~\ref{def:masking-set}, $\cuat_\lambda$ is $\lambda$-masking, whereas $\luat_\lambda$ only requires at least $\lambda$ members.
\begin{definition}[$\cuat_\lambda$ object]
  \label{def:ruat-lambda}
  $\cuat_\lambda$ is a shared-memory asset
  transfer object specified by the
  tuple $\cuat_\lambda=(Q,q_0,O,R,\Delta, U)$ where:
  \begin{itemize}
    \item \textbf{States:} $Q$ consists of pairs $q=(\mathcal{A},\beta)$ where
      $\mathcal{A}$ is a finite set of currently valid accounts and
      $\beta:\mathcal{A}\to\mathbb{N}$ maps each account to its balance.
      Each account $a\in\mathcal{A}$ carries two fixed fields assigned
      at creation: an owner set $\mu(a)\subseteq\mathcal{P}$ and an index
      $\mathrm{idx}(a)\in\mathbb{N}$ also denoted by a subscript.
      The initial state is $q_0=(\mathcal{A}_0,\beta_0)$ for some
      finite $\mathcal{A}_0$.

    \item \textbf{Operations:} The set $O$ consists of:
      $\rd{a}$ for any account $a$, and
      $\tr{S_{\beta'},\pi}$ where
      $S_{\beta'} = \{a'_1, \ldots, a'_\lambda\}$ is a set of $\lambda$ new accounts
      and $\pi$ is a zero-knowledge proof.
      The transfer carries an implicit bijection $\phi: S_{\beta} \to S_{\beta'}$
      between the old and new masking set that preserves both account
      fields: $\mu(\phi(a))=\mu(a)$ and $\mathrm{idx}(\phi(a))=\mathrm{idx}(a)$
      for every $a\in S_{\beta}$.

    \item \textbf{Responses:} $R$ consists of:
      \begin{itemize}
        \item For $\rd{a}$: either $\beta(a)$ if the invoking process
          owns $a$, or $\bot$ otherwise.
        \item For $\tr{S_{\beta'},\pi}$: a pair $(\mathcal{A}, b)$ where $\mathcal{A}$ is the current account set
          and $b \in \{\true, \false\}$ indicates success or failure.
      \end{itemize}

    \item \textbf{Transitions:} $\Delta\subseteq Q\times O\times R\times Q$
      specifies allowed state transitions. For any $q=(\mathcal{A},\beta)$
      and $q'=(\mathcal{A}',\beta')$:
      \begin{itemize}
        \item \textbf{Account read:} If $o=\rd{a}$ invoked by process
          $p$, then $q'=q$ and
          \[
            r =
            \begin{cases}
              \beta(a) & \text{if } a \in \mathcal{A} \text{ and } p\in\mu(a), \\
              \bot & \text{otherwise.}
            \end{cases}
          \]

        \item \textbf{Transfer success:} If
            $o=\tr{S_{\beta'},\pi}$ with $S_{\beta} \subseteq \mathcal{A}$,
          $S_{\beta'} \cap \mathcal{A} = \emptyset$,
          $|S_{\beta'}| = \lambda$, and $\pi$ is a proof demonstrating
          $\exists p\in\mathcal{P}, a_s, a_r \in S_{\beta}, v \in \mathbb{N}$ such that
          $p \in \mu(a_s)$ and $\beta(a_s) \geq v$, then the new account
          set is
          \[
              \mathcal{A}' = (\mathcal{A} \setminus S_{\beta}) \cup S_{\beta'},
          \]
          and the new balance map $\beta':\mathcal{A}'\to\mathbb{N}$ is
          \[
            \beta'(a) =
            \begin{cases}
              \beta(\phi^{-1}(a)) - v & \text{if } a = \phi(a_s), \\
              \beta(\phi^{-1}(a)) + v & \text{if } a = \phi(a_r), \\
              \beta(\phi^{-1}(a))     & \text{if } a \in S_{\beta'} \setminus \{\phi(a_s),\phi(a_r)\}, \\
              \beta(a)                & \text{if } a \in \mathcal{A} \setminus S_{\beta}.
            \end{cases}
          \]
          Return $r=(\mathcal{A}', \true)$.
        \item \textbf{Transfer failure:} In all other cases of
          $\tr{\cdot}$, set $q'=q$ and $r=(\mathcal{A}, \false)$.
      \end{itemize}
    \item \textbf{Untraceability:} $U$ specifies the untraceability notion required for the object. We consider two notions: weak
      untraceability (Definition~\ref{def:weak-untraceability}), giving the object $\cuat_\lambda^-$, and strong untraceability (Definition~\ref{def:strong-untraceability}), giving $\cuat_\lambda^+$. We write $\cuat_\lambda$ for either variant when the result or argument applies to both. The notion restricts the valid executions of the object, as some use patterns lead to untraceability loss, especially under strong untraceability.
  \end{itemize}
\end{definition}

The $\lambda$-CUAT models privacy-preserving cryptocurrencies
through account invalidation and creation. A transfer operation takes
a set $S_{\beta'} = \{a'_1, \ldots, a'_\lambda\}$ of $\lambda$ new accounts
and a zero-knowledge proof $\pi$.
The proof $\pi$ demonstrates ownership and valid balance updates
without revealing the sender or recipient.
Successful transfers atomically invalidate all accounts in $S_{\beta}$ and
create all accounts in $S_{\beta'}$ with updated balances.
This atomic masking-set update is the key mechanism: all old accounts are removed,
all new accounts are created, and the cryptographic proof hides which accounts
are the sender and recipient.
Balance confidentiality ensures that only
account owners can observe account balances.
Figure~\ref{fig:randomization} illustrates the effect of a successful
transfer on the masking set.
\begin{example}[Transfer on a $3$-element masking set]
  \label{ex:transfer-3-set}
  Consider a system with $\lambda=3$. Suppose the current state has
  $\mathcal{A} = \{a_1, \ldots, a_8\}$ with accounts $a_3, a_5, a_7
  \in \mathcal{A}$. Process $p_3$
  (owner of $a_3$) invokes $\tau=\tr{S_{\beta'},\pi}$ with $S_{\beta'} = \{a_3', a_5',
  a_7'\}$ (new accounts) where $S_{\beta} = \{a_3, a_5, a_7\}$
  is paired with $S_{\beta'}$ via the field-preserving bijection $\phi: a_i \mapsto a_i'$
  (so $\mu(a_i')=\mu(a_i)$ and $\mathrm{idx}(a_i')=\mathrm{idx}(a_i)=i$).
  The proof $\pi$ demonstrates that $p_3$ owns some account in $S_{\beta}$
  and that it transfers $v$ units between two accounts
  while preserving all other balances.
  If $\tau$ succeeds, the state transition yields:
Accounts: $\mathcal{A}' = (\mathcal{A} \setminus \{a_3,
      a_5, a_7\}) \cup \{a_3', a_5', a_7'\}$.
The zero-knowledge proof $\pi$ hides which accounts in
  $S_{\beta} = \{a_3,a_5,a_7\}$ are the sender and recipient.
  Any concurrent transfer using an overlapping masking set will fail
  after $\tau$ completes, since the old accounts $a_3, a_5, a_7$ are
  no longer valid. Accounts $a_i$ and $a_i'$ share the same index field ($\mathrm{idx}(a_i)=\mathrm{idx}(a_i')=i$), representing the same logical position despite differing cryptographic representations.
\end{example}
\begin{figure}[ht!]
  \centering
  \begin{tikzpicture}[scale=0.75, every node/.style={scale=0.8}]
\def\n{8}           \def\cellwidth{0.8} \foreach \idx in {1,...,\n} {
      \draw[fill=white] (\idx*\cellwidth, 0) rectangle ++(\cellwidth,
      \cellwidth);
      \node (a\idx) at (\idx*\cellwidth + 0.5*\cellwidth, 0.5*\cellwidth)
      {$a_{\idx}$};
    }
\foreach \i in {1,...,\n} {
      \ifnum\i=3
      \draw[fill=gray!50] (\i*\cellwidth, -2) rectangle
      ++(\cellwidth, \cellwidth);
      \node (a\i') at (\i*\cellwidth + 0.5*\cellwidth, -2 +
      0.5*\cellwidth) {$a_{\i}'$};
      \else\ifnum\i=5
      \draw[fill=gray!50] (\i*\cellwidth, -2) rectangle
      ++(\cellwidth, \cellwidth);
      \node (a\i') at (\i*\cellwidth + 0.5*\cellwidth, -2 +
      0.5*\cellwidth) {$a_{\i}'$};
      \else\ifnum\i=7
      \draw[fill=gray!50] (\i*\cellwidth, -2) rectangle
      ++(\cellwidth, \cellwidth);
      \node (a\i') at (\i*\cellwidth + 0.5*\cellwidth, -2 +
      0.5*\cellwidth) {$a_{\i}'$};
      \else
      \draw[fill=white] (\i*\cellwidth, -2) rectangle ++(\cellwidth,
      \cellwidth);
      \node (a\i') at (\i*\cellwidth + 0.5*\cellwidth, -2 +
      0.5*\cellwidth) {$a_{\i}$};
      \fi\fi\fi
    }
\node[left=of a1'] (randomized) {After transfer (new accounts
        $S=\{a_3', a_5',
    a_7'\}$)};
    \node[left=of a1]  (original) {Initial state (old accounts)};

\draw[-Stealth, thick] (a3.south) to (a3'.north);
    \draw[-Stealth, thick] (a5.south) to (a5'.north);
    \draw[-Stealth, thick] (a7.south) to (a7'.north);
\end{tikzpicture}
  \caption{Effect of a successful transfer
    $\tr{S_{\beta'},\pi}$ with $S_{\beta'} = \{a_3', a_5', a_7'\}$ where $S_{\beta} = \{a_3,
    a_5, a_7\}$ is identified by shared index fields ($\mathrm{idx}(a_k)=\mathrm{idx}(a_k')=k$).
    The proof $\pi$ demonstrates ownership and valid rerandomization.
    Shaded accounts are
    the new accounts in masking set $S_{\beta'}$; the old accounts
    $S_{\beta}$ are removed and replaced by $S_{\beta'}$. Accounts outside
  $S_{\beta}$ remain unchanged.}
  \label{fig:randomization}
\end{figure}
The properties and design of the CUAT highlights a key consequence: a
transfer that invalidates accounts in $S_{\beta}$
prevents all concurrent transfers that use any account from
$S_{\beta}$ in their masking
set. This interdependence creates complex synchronization requirements.
Overlapping transfers must coordinate to prevent conflicts, while
disjoint transfers can proceed concurrently.
The atomicity requirement for masking-set updates suggests that CUAT
objects possess non-trivial synchronization power, which we analyze in
the next section.

\subsection{Consensus number}
\label{subsec:cons-num}
This section characterizes the wait-free consensus number of $\lambda$-CUAT.
Protocols may use any finite number of $\lambda$-CUAT instances together with
atomic registers. By Definition~\ref{def:ruat-lambda}, the bijection $\phi$ of
every successful transfer preserves each account's index $\mathrm{idx}(a)$.
Accounts with the same index therefore represent the same logical position
across rerandomizations, and two masking sets conflict precisely when they share
an index. We capture this synchronization structure by a graph on masking sets
(Definition~\ref{def:conflict-graph}).

Two modeling choices distinguish the analysis from earlier work on asset
transfer. Guerraoui \etal~\cite{DBLP:journals/dc/GuerraouiKMPS22} encode the
winning process in account balances. CUAT cannot use this channel because only
an account's owner may query its balance, so its synchronization must instead
run through the masking sets. Moreover, untraceability is a runtime
constraint~\cite{DBLP:conf/wdag/DaianLAG18}: we admit only executions consistent
with the notion in force and take the consensus number over that restricted
family.

The analysis splits along the round structure of $\tau$-invocations, and we begin with the case of practical interest.

Section~\ref{subsubsec:one-round} treats \emph{one-round} protocols (one $\tau$-invocation per process). This models how CUAT is used in payment systems~\cite{DBLP:conf/asiacrypt/FauziMMO19,DBLP:journals/iacr/AlupothaGC24,DBLP:conf/asiacrypt/MadathilS25}: a user submits a single transaction whose masking set is fixed before submission, and validators either commit it or reject it. The one-round upper bound therefore captures the synchronization cost faced inside one such transaction: validators process non-conflicting transfers independently and only need to coordinate when two transfers contend over a shared masking set, and the upper bound sizes the worst-case coordination slice.

Section~\ref{subsubsec:multi-round} lifts the analysis to \emph{multi-round} protocols, where CUAT is used as a synchronization primitive inside a layered agreement protocol whose outer rounds adapt their masking sets to earlier outcomes. This captures CUAT's theoretical synchronization power but is not the typical operational mode of payment systems. 
To start our analysis, we first formalize the conflict graph and its clique number.

\begin{definition}[Conflict graph and clique number]
\label{def:conflict-graph}
  Two masking sets $S$ and $S'$ on a common $\cuat_\lambda$ object
  \emph{conflict} when there exist $a\in S$ and $a'\in S'$ with
  $\mathrm{idx}(a)=\mathrm{idx}(a')$. Let $\mathcal{S}(\Pi)$
  denote the family of masking sets carried by some $\tau$-invocation
  in some execution of $\Pi$ admissible under the untraceability
  notion in force. The \emph{conflict graph} $G$ of $\Pi$ has a vertex
  $v_S$ for each $S\in\mathcal{S}(\Pi)$, and an edge
  $(v_S,v_{S'})\in E$ whenever $S$ and $S'$ conflict. A \emph{clique}
  in $G$ is a set of pairwise-conflicting masking sets, and the
  \emph{clique number} $\omega(G)$ is the size of the largest clique.
\end{definition}

The family $\mathcal{S}(\Pi)$ captures every masking set $\Pi$ may issue, including across executions where processes adapt their choices to the schedule. Any single execution realizes a subgraph of $G$, with clique number at most $\omega(G)$. The upper-bound arguments below inspect the masking set each process is about to invoke at a critical configuration; these masking sets lie in $\mathcal{S}(\Pi)$ by definition, so the bounds hold whether $\Pi$ chooses masking sets statically or dynamically. The matching lower bounds are realized by protocols whose masking-set assignment is fixed in advance.
Alongside the conflict graph, a second invariant tracks how many masking sets contain a given account.
\begin{definition}[Incidence number]
\label{def:incidence}
The \emph{incidence} of an account $a\in\mathcal{A}$ is the number of
vertices $v_S\in V(G)$ such that $a\in S$. The \emph{incidence number} of
$\Pi$ is $r(\Pi)=\max_{a\in\mathcal{A}}\,\mathrm{incidence}(a)$.
\end{definition}
Cliques in $G$ implement consensus. When $d$ masking sets pairwise conflict, at most one of the associated transfers can return $\textsc{true}$: any pair shares an index, and the second transfer to commit finds that index already invalidated and fails. Losing transfers inspect the published CUAT state, locate the successful masking set, and adopt the value its issuer wrote to a single-writer register. Pairing each of $d$ processes with one clique member yields a $d$-process consensus protocol that attempts the transfer, decides on success, and adopts on failure; hence $\omega(G)$ lower-bounds the consensus power of $\lambda$-CUAT.

The matching upper bounds, both at $\omega(G)$ and beyond, rest on a binary-input specialization of the standard critical-state construction~\cite{DBLP:journals/toplas/Herlihy91}.

\begin{lemma}[Different valencies]
\label{lem:different-valency-gen}
Let $\Pi$ be a deterministic wait-free consensus protocol for
$n\ge 2$ processes in the asynchronous shared-memory model. For any
process $p$ and any $v\in\{0,1\}$, the input assignment $I$ with
$I(p)=v$ and $I(s)=1-v$ for $s\ne p$ admits a reachable critical
configuration $C$ such that $C\cdot p$ is $v$-valent and $C\cdot s$
is $(1-v)$-valent for every $s\ne p$.
\end{lemma}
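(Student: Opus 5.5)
The plan is to combine the usual critical-state walk with a validity argument applied to the two \emph{solo} witnesses fixed by the input assignment $I$. The walk is restricted so that these witnesses stay valid all the way to the critical configuration.

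\emph{Step 1: the initial configuration $C_0$ under $I$ is bivalent, via two explicit witnesses.} First, let $p$ run solo from $C_0$. Nobody else has taken a step, so $p$'s view is identical to its view under the all-$v$ input assignment. By wait-freedom and validity, $p$ decides $v$. Second, fix any $s\ne p$ and run all processes other than $p$, but never $p$. Their views coincide with their views under the all-$(1-v)$ assignment, because $p$'s input lives only in $p$'s local state and $p$ writes nothing. So such a run decides $1-v$. This second observation is the key invariant. Call a configuration \emph{$p$-silent} if it is reachable from $C_0$ by steps of processes other than $p$ only. At every $p$-silent configuration $D$, each extension in which $p$ takes no step decides $1-v$, by the same indistinguishability argument.

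\emph{Step 2: walk along $p$-silent configurations.} Starting from $C_0$, repeatedly move to $D\cdot s$ for some $s\ne p$ whose successor is still bivalent, as long as such an $s$ exists. This walk is finite. An infinite $p$-free schedule would give some process of $\mathcal{P}\setminus\{p\}$ infinitely many steps without deciding, contradicting wait-freedom; König's lemma over the finitely branching tree of such schedules gives a uniform bound. The walk stops at a bivalent $p$-silent $C$ at which every $C\cdot s$ with $s\ne p$ is univalent. By the invariant, $C\cdot s$ has a $p$-free extension deciding $1-v$, so it is $(1-v)$-valent. Since $C$ is bivalent, some successor must lead to $v$, and the only remaining candidate is $C\cdot p$. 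If $C\cdot p$ is univalent, it is $v$-valent, $C$ is critical, and it has exactly the claimed pattern.

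\emph{Step 3, the main obstacle: the case where $C\cdot p$ is itself bivalent.} Here $C$ is not critical, and stepping $p$ destroys $p$-silence, so the invariant of Step~1 no longer certifies that the other successors are $(1-v)$-valent. My first attempt would iterate: move to $C\cdot p$ and restart the walk of Step~2 from there. For this, the $p$-silence invariant must be replaced by a weaker one that survives $p$'s steps. The candidate I would try is that from the current configuration, $p$'s solo run decides $v$ and some $p$-free run decides $1-v$. The first half makes any univalent $C\cdot p$ $v$-valent. The second half is what must be re-established after each $p$-step; I would try to obtain it from the fact that $C\cdot p$ is bivalent while all $C\cdot s$ were $(1-v)$-valent, using the commutation of $p$'s step with a non-conflicting step of some $s$. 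If that fails, I would fall back on a case analysis of $p$'s pending step (register read or write, or an object operation) to show directly that a bivalent $C\cdot p$ cannot occur at the end of a maximal $p$-silent walk. Termination of the iterated walk follows again from wait-freedom, since $p$ can take only finitely many steps before deciding.
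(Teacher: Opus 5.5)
Your Steps~1 and~2 are sound, and they follow the paper's proof. The paper also takes the last bivalent prefix of the $p$-free run and ``extends that prefix $p$-silently to a reachable critical configuration''. It then uses validity on $p$-free extensions to make every $C\cdot s$ $(1-v)$-valent, and concludes that $C\cdot p$ is $v$-valent by bivalence. The quoted step is exactly the case you isolate in Step~3. A maximal $p$-silent walk makes only the successors $C\cdot s$, $s\ne p$, univalent, and nothing forces $C\cdot p$ to be univalent. So your proposal stops at the same point where the paper's proof has a hole. Neither of your repairs, nor any other, can close it, because the statement is false for every $n\ge2$.

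Here is a counterexample. Use registers $R_x$ and one compare-and-swap object $X$, all initially $\bot$.
\begin{itemize}
\item Each process $x$ first writes its input to $R_x$.
\item Process $p$ then reads $R_s$ for every $s\ne p$ and sets $w_p$ to the first non-$\bot$ value it sees, or to its own input if there is none.
\item Each $s\ne p$ reads $R_p$ and sets $w_s$ to that value if it is not $\bot$, and to its own input otherwise.
\item Finally $x$ applies $\mathrm{CAS}(X,\bot,w_x)$ and decides the content of $X$.
\end{itemize}
This is a correct wait-free consensus protocol, and its decision is the $w$ of the first CAS.

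Under $I$, suppose $w_p=v$. Then every $R_s$ was still $\bot$ when $p$ read it, so each $s$ read $R_p=v$ afterwards and $w_s=v$. At a critical configuration, Herlihy's case analysis leaves only CAS steps on $X$ pending. So $C\cdot x$ is $w_x$-valent, and both values occur among the $w_x$. Hence $w_p=1-v$, and $C\cdot p$ is $(1-v)$-valent at every reachable critical configuration, the opposite of the claim.

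With $n=2$ you can watch both repairs fail:
\begin{itemize}
\item The $p$-silent walk stops once $q$ has written $R_q$ but not yet read $R_p$. At that configuration $C\cdot p$ is bivalent, which refutes your fallback.
\item After $p$'s write, $p$'s solo run reads $R_q=1-v$ and decides $1-v$. This refutes your weaker invariant at the very first $p$-step.
\end{itemize}
Once $p$ has moved, its value may depend on the others' inputs and theirs on $p$'s, so nothing ties the critical split to $I$.

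What your Steps~1--2 do prove is a weaker fact. Some reachable bivalent $p$-silent $C$ has every $C\cdot s$, $s\ne p$, $(1-v)$-valent, but $C\cdot p$ may be bivalent. The uses of the lemma, notably in Lemma~\ref{lem:single-tau-upper}, would have to be re-argued from that fact or from the plain partition of Lemma~\ref{lem:bipartite-intersection}.
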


\begin{proof}
By symmetry of $0$ and $1$, take $v=0$; process $p$ then has input
$0$ and every other process has input $1$. Let $\sigma$ be the
schedule in which $p$ takes no steps and every other process runs
to completion. By wait-freedom each $s\ne p$ decides along $\sigma$,
and by validity every such decision is $1$, since $1$ is the only
input present. Hence $\sigma$ ends in a $1$-valent configuration.
The initial configuration is bivalent ($\sigma$ decides $1$ while
running $p$ alone decides $0$), so $\sigma$ has a last bivalent
prefix, and the standard critical-state construction extends that
prefix $p$-silently to a reachable critical configuration $C$.

Fix $s\ne p$ and suppose, toward a contradiction, that $C\cdot s$
is $0$-valent. Every extension of $C\cdot s$ then decides $0$.
Consider the extension that runs every process in $\mathcal{P}\setminus\{p\}$
to completion before $p$ takes a step: wait-freedom guarantees
termination, and validity forces the decision to be $1$, since every
participant before $p$ acts has input $1$. This contradicts
$0$-valence, so $C\cdot s$ is $1$-valent for every $s\ne p$. By
bivalence of $C$, the remaining successor $C\cdot p$ is $0$-valent.
\end{proof}
In each subsequent application, we extract a reachable critical configuration whose successors split processes by valency, then exploit the commutativity of $\lambda$-CUAT transfers on disjoint account indices: two masking sets on opposite valency sides must conflict, since otherwise their transfers commute, and a spectator process, unable to distinguish the two orders, would be forced to decide both values. The next lemma packages this technique; the spectator is why it requires a third process.

\begin{lemma}[Bipartite intersection at a critical configuration]
\label{lem:bipartite-intersection}
Let $\Pi$ be a wait-free binary consensus protocol for $n\ge 3$ processes on $\cuat_\lambda$ objects and atomic registers, and let $C$ be a reachable critical configuration of $\Pi$. There exist a $\cuat_\lambda$ object $O$ and a non-trivial partition $\mathcal{P}_0\sqcup\mathcal{P}_1=\mathcal{P}$ such that every pending step at $C$ is a $\tau$-invocation on $O$, and, writing $S_p$ for the masking set carried by $p$'s pending invocation:
\begin{enumerate}[(i)]
\item $C\cdot p$ is $0$-valent for every $p\in\mathcal{P}_0$ and $C\cdot q$ is $1$-valent for every $q\in\mathcal{P}_1$;
\item $S_p\cap S_q\ne\emptyset$ on $O$ for every $(p,q)\in\mathcal{P}_0\times\mathcal{P}_1$.
\end{enumerate}
\end{lemma}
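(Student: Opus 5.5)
We follow the template of Lemma~\ref{lem:uat-upper-bound}, with commutation now governed by index overlap rather than by nullifiers. First we define $\mathcal{P}_v=\{p: C\cdot p \text{ is } v\text{-valent}\}$. Since $C$ is critical, every successor is univalent, so this is a partition, and both parts are non-empty because $C$ is bivalent. This gives~(i) directly. Next we rule out register steps at $C$ by the argument already quoted in the proof of Lemma~\ref{lem:uat-upper-bound}: a read commutes with everything from the point of view of the other processes, writes to distinct registers commute, and an overwrite is invisible to everyone except the overwritten writer. In each case, two oppositely valent configurations become indistinguishable to a process that decides in both. So every pending step is an operation on some $\cuat_\lambda$ object.

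Then we show that all pending steps address one object and are transfers. Pick $p\in\mathcal{P}_0$ and $q\in\mathcal{P}_1$. If the two steps act on distinct objects, or if either is a $\mathsf{read}$ or a failing transfer (which leaves the state unchanged), then $C\cdot p\cdot q$ and $C\cdot q\cdot p$ have the same shared state. Neither step touches registers, so every process $s\notin\{p,q\}$ has the same local state in both. Such an $s$ exists because $n\ge3$. Running $s$ solo contradicts the opposite valencies. There is one subtlety: the read and failed-transfer responses differ for $p$ or $q$ themselves, but only the spectator matters. Hence, for every cross pair, both steps are transfers on a common object. Because the cross pairs form a complete bipartite relation between two non-empty parts, every process shares an object with every process on the other side. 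So all pending steps target a single object $O$. We also need to handle a process whose transfer would succeed alone but fail after the other; that case is covered by the intersection argument below.

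The core step is~(ii). Suppose $S_p\cap S_q=\emptyset$, meaning the two masking sets share no index. We prove a CUAT analogue of Lemma~\ref{lem:luat-commute}: if the index sets are disjoint, then each transfer's preconditions at $C$ ($S_\beta\subseteq\mathcal{A}$, fresh $S_{\beta'}$, and the balance condition in $\pi$) are unaffected by the other. This holds because the other transfer only removes accounts from its own masking set and adds fresh representations, which are disjoint by account-representation uniqueness, and it leaves balances outside its masking set unchanged. Consequently each transfer succeeds or fails identically in either order. The resulting account set $(\mathcal{A}\setminus(S_p\cup S_q))\cup S'_p\cup S'_q$ and the resulting balance map are the same in both orders. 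The spectator argument then yields the contradiction again. We therefore conclude that $S_p$ and $S_q$ share an index, which is the sense of $\cap$ on $O$ intended by Definition~\ref{def:conflict-graph}.

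The main obstacle is this commutation claim. We have to check carefully that the proof $\pi$ of one transfer remains valid after the other commits. This relies on the balance of the sender in $S_p$ being untouched by $q$'s transfer, which follows from the last case of the balance map in Definition~\ref{def:ruat-lambda}. We also have to observe that responses differ only at $p$ and $q$, so that the distinction does not reach the spectator.
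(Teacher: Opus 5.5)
Your proposal is correct and follows the paper's proof essentially step for step. It partitions the processes by successor valency and excludes register steps by the standard argument. It then shows that a cross-valency pair of non-conflicting steps (distinct objects, or disjoint masking sets) leaves the same shared state in either order, so a spectator, available since $n\ge 3$, would have to decide both values. Finally it uses the complete bipartite relation between the two non-empty classes to place every pending step on a single object $O$.

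Your version is slightly more careful than the paper's in two places. The paper passes directly from ``no register step'' to ``every pending step is a $\tau$-invocation'' without mentioning $\mathsf{read}$ operations on the $\cuat_\lambda$ object. It also argues commutation from literal disjointness plus account-representation uniqueness, rather than checking each precondition against index-disjointness as you do.
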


\begin{proof}
At a critical configuration neither pending step is a register operation~\cite{DBLP:journals/toplas/Herlihy91}: a read leaves the shared state unchanged, writes to distinct registers commute, and of two writes to one register the second overwrites the first, leaving a configuration indistinguishable to every process except the overwritten writer. Each case makes two oppositely valent successors indistinguishable to some process that decides in both. Every pending step at $C$ is therefore a $\tau$-invocation. Define $\mathcal{P}_0=\{p\in\mathcal{P}:C\cdot p\text{ is }0\text{-valent}\}$ and $\mathcal{P}_1=\mathcal{P}\setminus\mathcal{P}_0$; both are non-empty by bivalency of $C$, and property~(i) follows.

Fix $p\in\mathcal{P}_0$ and $q\in\mathcal{P}_1$, and suppose toward a contradiction that $\tau(S_p)$ and $\tau(S_q)$ are on distinct $\lambda$-CUAT objects, or on a common object with $S_p\cap S_q=\emptyset$. Each transfer's outcome is then unaffected by the other: the two masking sets are disjoint, and by account-representation uniqueness neither transfer's new accounts can appear in the other's masking set. The object states after the pair of steps therefore agree in either order, and neither step writes a register, so the shared state at $C\cdot p\cdot q$ equals that at $C\cdot q\cdot p$. The two configurations are not indistinguishable to $p$ and $q$ themselves, whose transfer responses carry the current account set and hence reveal the order; but they are indistinguishable to every other process. Since $n\ge 3$, choose a spectator $s\notin\{p,q\}$ and run it alone from each configuration. By wait-freedom $s$ decides in both runs, and indistinguishability makes the two decisions equal; yet every extension of $C\cdot p\cdot q$ decides $0$, as $C\cdot p$ is $0$-valent, and every extension of $C\cdot q\cdot p$ decides $1$, as $C\cdot q$ is $1$-valent. This contradiction shows that $p$ and $q$ invoke $\tau$ on a common $\lambda$-CUAT object with $S_p\cap S_q\ne\emptyset$ on that object. Iterating over $(p,q)\in\mathcal{P}_0\times\mathcal{P}_1$, all cross-valency pairs share a single object $O$; transitivity through such pairs forces every process in $\mathcal{P}$ to invoke $\tau$ on $O$, and property~(ii) holds with masking sets on $O$.
\end{proof}

In particular, applying the lemma to the critical configuration produced by Lemma~\ref{lem:different-valency-gen} with $(p,v)$ gives the \emph{star} specialization, with $\mathcal{P}_v=\{p\}$ and $\mathcal{P}_{1-v}=\mathcal{P}\setminus\{p\}$: $p$'s pending masking set conflicts on a common object with every other process's pending masking set. Iterating over $p$ is what the one-round upper bound relies on.

\subsubsection{One-round protocols}
\label{subsubsec:one-round}

Throughout this subsection, $\Pi$ denotes a wait-free one-round consensus protocol on $\lambda$-CUAT objects and atomic registers: each process makes a single $\tau$-invocation and identifies a contested winner from the published CUAT state. Before its $\tau$-step a process has access only to the CUAT's initial state and its own identity, so its masking set is determined by the protocol from these two pieces of data alone. A process may also read registers, but atomic registers have consensus number $1$~\cite{DBLP:journals/toplas/Herlihy91} and add no synchronization power on top of $\lambda$-CUAT. The analysis pairs a lower bound from constructive cliques in $G$ with an upper bound from iterating Lemma~\ref{lem:bipartite-intersection} over each process.

Any clique of $G$ admits a wait-free consensus protocol of matching size, by the standard attempt-and-adopt construction.

\begin{lemma}[One-round lower bound from cliques]
\label{lem:single-tau-lower}
Let $G$ be a conflict graph admissible under the untraceability notion in force, and let $K=\{S_1,\dots,S_n\}\subseteq V(G)$ be a clique of size $n$. Then there is a wait-free one-round $n$-consensus protocol $\Pi$ over a $\cuat_\lambda$ object in which process $p_i$ uses masking set $S_i$. Consequently $\consnum{\cuat_\lambda}\ge\omega(G)$.
\end{lemma}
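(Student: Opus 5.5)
We build the attempt-and-adopt protocol directly on a single $\cuat_\lambda$ object $O$ whose initial account set contains $\bigcup_i S_i$. Ownership is chosen so that $p_i$ owns some account of $S_i$; fresh or otherwise unused accounts can serve here, and clique membership does not depend on who owns what. Each $p_i$ has a single-writer register $R[i]$. On input $v$, process $p_i$ first writes $v$ to $R[i]$. It then invokes a zero-value $\tr{S'_i,\pi_i}$ on masking set $S_i$, with freshly generated accounts $S'_i$ and a proof $\pi_i$ of ownership of its own account. If the response is $(\mathcal{A},\true)$, it decides $v$. If the response is $(\mathcal{A},\false)$, it determines the winner $j$ from $\mathcal{A}$ and decides $R[j].\mathsf{read}()$. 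The protocol has no loops and each step is wait-free, so termination is immediate. Validity also holds, since every decision is some $R[j]$ written by $p_j$.

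Agreement reduces to showing that exactly one transfer succeeds and that every loser identifies it. Fix the linearization order of the $n$ transfers, and let $\tau_j$ be the first one. At its linearization point no earlier transfer on $O$ has touched $S_j$, so $S_j\subseteq\mathcal{A}$. The new accounts are fresh by account-representation uniqueness, and the proof is valid, so $\tau_j$ succeeds. Every later $\tau_i$ has $S_i$ conflicting with $S_j$ because $K$ is a clique. Hence some account of $S_i$ carries an index that $\tau_j$ consumed. Since $\phi$ preserves indices while replacing representations, the old representation in $S_i$ is no longer in $\mathcal{A}$, the check $S_i\subseteq\mathcal{A}$ fails, and $\tau_i$ fails. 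By induction, every transfer after $\tau_j$ fails and leaves the state unchanged.

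Identifying the winner is the step I expect to need the most care, because balances are confidential to owners. I would use the observation that the response $\mathcal{A}$ reveals which masking set was consumed. Every loser's response equals the state just after $\tau_j$, namely $(\mathcal{A}_0\setminus S_j)\cup S'_j$. A loser compares this with the known $\mathcal{A}_0$, and the set $\mathcal{A}_0\setminus\mathcal{A}$ equals $S_j$ exactly. Since the $S_i$ are distinct vertices of $G$, $S_j$ determines $j$ uniquely, and the loser reads $R[j]$. Process $p_j$ wrote $R[j]$ before invoking its transfer, which precedes every loser's read, so all processes decide $v_j$. If two processes were assigned the same masking set, the distinctness argument would fail, but clique vertices are distinct sets, so this case does not arise.

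Finally, the execution must be admissible under the notion in force. By hypothesis $G$ is admissible, meaning its masking sets arise in admissible executions. Every execution of this protocol uses only masking sets from $K\subseteq V(G)$, each at most once. Under strong untraceability I would argue admissibility via Theorem~\ref{thm:untraceability-uniform}, taking the hypothesis on $G$ to mean that the executions realizing $K$ satisfy the incidence condition. Under weak untraceability each transfer is individually admissible. Taking $K$ of size $\omega(G)$ then gives $\consnum{\cuat_\lambda}\ge\omega(G)$.
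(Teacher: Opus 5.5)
Your proposal is correct and follows the paper's attempt-and-adopt argument. The transfers are linearized, the first one commits, the clique property makes every later one fail on an invalidated account, and each loser identifies the committed masking set from the published state and adopts the value in the winner's register. The one difference is how the winner is identified. You read it off as $\mathcal{A}_0\setminus\mathcal{A}=S_j$, using that the clique's sets are distinct and statically assigned. The paper's pseudocode instead has each process publish its new account set in an extra register $A[i]$, and matches $\mathcal{A}\setminus\mathcal{A}_0$ against those registers, so your variant is slightly leaner.
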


\begin{proof}
\emph{Construction.} Assign $S_i$ as $p_i$'s masking set. Each $p_i$ writes its input to a single-writer register $r_i$ (proposal) and invokes $\tau(S_i)$ on the $\lambda$-CUAT object. On success ($\tau$ returns $\textsc{true}$), $p_i$ decides its own input. On failure, $p_i$ reads the published CUAT state to identify which masking set $S_j\in K$ has been committed, reads $r_j$, and decides that value. Figure~\ref{alg:consensus-from-cuat} gives the full pseudocode; the function $\op{weak}$ selects masking sets under weak untraceability and $\op{strong}$ uses the projective-plane construction under strong untraceability (Section~\ref{subsubsec:strong-untr}).

\emph{Wait-freedom.} The protocol contains no loops: $\tau$ is invoked once, and identification of the winner is a single inspection of the published state. Each process completes in a bounded number of steps regardless of the schedule.

\emph{Validity.} The decided value is either $p_i$'s own input (on success) or $r_{j^*}$ for some $j^*$, which is $p_{j^*}$'s input.

\emph{Agreement.} Linearize the $n$ $\tau$-invocations on $K$ and let $S_{j^*}\in K$ be the first to return $\textsc{true}$. For every other $S_i\in K$, $S_{i}\cap S_{j^*}\ne\emptyset$ by the clique property of $K$; the account shared between $S_i$ and $S_{j^*}$ is invalidated when $S_{j^*}$ commits, so $\tau(S_i)$ fails. Process $p_i$ then identifies $S_{j^*}$ as the winner in the published state and decides $r_{j^*}$. Process $p_{j^*}$ also decides $r_{j^*}$ on success. Every process therefore decides the same value $r_{j^*}$.
\end{proof}

\begin{figure}[ht!]
  \begin{tcolorbox}[protocol={Consensus from $\lambda$-CUAT}]
    \step{\textbf{Shared state:}}
    \begin{codelines}
    \item $R[j]$ for $j\in\{1,\ldots,d\}$, atomic registers (proposals)
    \item $A[j]$ for $j\in\{1,\ldots,d\}$, atomic registers (new account sets)
    \item A $\lambda$-CUAT object $\cuat$ with initial state $(\mathcal{A}_0,\beta_0)$
      where $\mathcal{A}_0=\{a_1,\dots,a_{d+1}\}$ (with $d\ge\lambda$),
      $\mu(a_1)=\{p_1\}$, $\mu(a_{i+1})=\{p_i\}$ for $i\in\{1,\dots,d\}$, and
      $\beta_0(a)=0$ for all $a\in\mathcal{A}_0$; process $p_1$ owns
      $a_1$ which doubles as the hub account
    \item An untraceability level $\op{weak}\in\set{0,1}$
    \end{codelines}
    \step{$\mathbf{propose}(v)$ at process $p_i$ where $\mu(a_{i+1})=\{p_i\}$:}
    \begin{codelines}[resume]
    \item $R[i].\op{write}(v)$ \textit{// Write proposal}
    \item $S_{\text{old}} \gets \op{masking}(i, \lambda, \op{weak})$ \textit{// Old masking set}
    \item $(S_{\text{new}}, \pi) \gets \op{Compute}_{p_i}(S_{\text{old}}, a_{i+1}, a'_{i+1}, 0)$ \textit{// Generate new accounts and proof}
    \item $A[i].\op{write}(S_{\text{new}})$ \textit{// Write new account set}
    \item $(\mathcal{A}, b) \gets \cuat.\tr{S_{\text{new}}, \pi}$
    \item \textbf{if } $b = \true$ \textbf{ then}
    \item \quad \textbf{return } $v$ \textit{// Won: decide own value}
    \item \textbf{else} \textit{// Lost: identify winner by reading account sets}
    \item \quad $S_{\text{winner}} \gets \mathcal{A} \setminus \mathcal{A}_0$ \textit{// New accounts in $\mathcal{A}$}
    \item \quad \textbf{for } $j \in \{1,\ldots,d\}$ \textbf{ do}
    \item \quad \quad \textbf{if } $A[j].\op{read}() = S_{\text{winner}}$ \textbf{ then} \textit{// Process $j$ won}
    \item \quad \quad \quad \textbf{return } $R[j].\op{read}()$
      \textit{// Adopt winner's value}
  \end{codelines}
    \stepmin{\textit{function:} $\op{masking}(i, \lambda, \op{weak})$ :}
    \begin{codelines}[resume]
    \item \textbf{if } $\op{weak}=1$ \textbf{then}
    \item \quad $\op{weak}(i,\lambda)$
    \item \textbf{else} \textit{// Strong untraceability}
    \item \quad $\op{strong}(i,\lambda)$
    \end{codelines}
    \stepmin{\textit{function:} $\op{weak}(i, \lambda)$ :}
    \begin{codelines}[resume]
    \item $S' \gets \{1\}$ \textit{// Hub account $a_1$, owned by $p_1$}
    \item $j \gets i+1$ \textit{// Start with $p_i$'s account $a_{i+1}$}
    \item \textbf{repeat } $\lambda-1$ \textbf{ times}
    \item \quad $S' \gets S' \cup \{j\}$
    \item \quad $j \gets (j \bmod (d+1)) + 1$
    \item \quad \textbf{if } $j = 1$ \textbf{ then } $j \gets 2$ \textit{// Skip hub}
    \item \textbf{return} $S'$
    \end{codelines}
    \stepmin{\textit{function:} $\op{strong}(i, \lambda)$ :
    (see~\cite{IHRINGER2019142})}
\end{tcolorbox}
  \vspace{-0.5cm}
  \caption{Wait-free implementation of $d$-consensus using a
  $\lambda$-CUAT object.}
  \label{alg:consensus-from-cuat}
\end{figure}
Conversely, Lemma~\ref{lem:bipartite-intersection} turns wait-freedom into a clique witness in $G$.
\begin{lemma}[One-round upper bound from pairwise intersection]
\label{lem:single-tau-upper}
Let $\Pi$ be a wait-free one-round consensus protocol on $\cuat_\lambda$ objects and atomic registers, with $G$ its conflict graph. If $cn(\Pi)\ge 3$, then $cn(\Pi)\le\omega(G)$.
\end{lemma}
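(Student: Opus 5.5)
We interpret the statement as follows: if $\Pi$ solves wait-free consensus for $n=cn(\Pi)\ge 3$ processes, then the $n$ masking sets carried by the processes' single $\tau$-invocations form a clique in $G$, so $n\le\omega(G)$.

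The plan is to apply the star specialization of Lemma~\ref{lem:bipartite-intersection} once for each process. In a one-round protocol, $p$'s masking set $S_p$ is determined by the CUAT's initial state and $p$'s identity alone. It is therefore the same in every execution, independent of inputs and schedule, and it is the only $\tau$-invocation $p$ ever makes.

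Fix a process $p$ and apply Lemma~\ref{lem:different-valency-gen} with $(p,0)$. This gives a reachable critical configuration $C_p$ in which $C_p\cdot p$ is $0$-valent and $C_p\cdot s$ is $1$-valent for every $s\ne p$. Since $n\ge 3$, Lemma~\ref{lem:bipartite-intersection} applies at $C_p$ with $\mathcal{P}_0=\{p\}$ and $\mathcal{P}_1=\mathcal{P}\setminus\{p\}$. Every pending step at $C_p$ is then a $\tau$-invocation on a common object $O_p$, and $S_p\cap S_s\ne\emptyset$ on $O_p$ for all $s\ne p$. Because each process invokes $\tau$ exactly once, the pending invocation of every process at $C_p$ is its unique one. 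So $S_s$ is exactly the fixed masking set of $s$, and $O_p$ is the unique object on which all processes invoke $\tau$; in particular $O_p$ does not depend on $p$. Intersection on $O$ means sharing an index, so $S_p$ and $S_s$ conflict in the sense of Definition~\ref{def:conflict-graph}.

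Iterating over all $p\in\mathcal{P}$ shows that the masking sets $\{S_p\}_{p\in\mathcal{P}}$ pairwise conflict. Each $S_p$ is carried by a $\tau$-invocation in an admissible execution, namely the one reaching $C_p$, so $S_p\in\mathcal{S}(\Pi)$. It remains to check that these are $n$ distinct vertices of $G$. If $S_p=S_q$ for some $p\ne q$, the two vertices coincide and the count could drop. I would rule this out with the transfer semantics. The proof $\pi$ and the new account set $S_{\beta'}$ are process-specific, because new account representations from distinct processes are distinct. The more robust route is to treat the vertex set of $G$ as indexed by invocations (process, masking set), since $\mathcal{S}(\Pi)$ lists the masking sets carried by invocations. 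Under that reading the clique has size exactly $n$.

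I expect the main obstacle to be this bookkeeping: making the ``one object, fixed masking set'' step rigorous, and justifying that $n$ processes yield $n$ distinct clique vertices. I would first argue that a one-round protocol pins each $S_p$ and its object independently of the execution. I would then argue that two processes with literally equal masking sets can be treated as distinct clique members, because they are two pairwise-conflicting invocations and every set meets itself. This gives $n\le\omega(G)$, and hence $cn(\Pi)\le\omega(G)$.
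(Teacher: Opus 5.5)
This is the paper's argument run directly instead of by contradiction. The paper assumes two processes $p,q$ whose masking sets do not conflict, takes the star critical configuration of Lemma~\ref{lem:different-valency-gen} in which $q$ alone is $0$-valent, and lets a spectator confuse $C\cdot p\cdot q$ with $C\cdot q\cdot p$; that is exactly the star specialization of Lemma~\ref{lem:bipartite-intersection} you iterate over all $p$. On distinctness, drop the transfer-semantics route: distinct \emph{new} account sets say nothing about the old sets, and $S_p=S_q$ genuinely occurs (for instance, $\lambda\ge 3$ processes each owning one account of a single common set solve $\lambda$-consensus by attempt-and-adopt, while that $G$ has one vertex), so the per-process indexing you fall back on is necessary, and it is precisely what the paper tacitly assumes when it calls $v_{S_1},\dots,v_{S_n}$ an $n$-clique.
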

\begin{proof}
Suppose, for contradiction, that $\Pi$ solves $n$-consensus for $n=\omega(G)+1\ge 3$, so the largest clique of $G$ has size $n-1$. Write $S_i$ for the masking set used by $p_i$ for $i=1,\dots,n$.

\emph{A non-conflicting pair exists.} If every pair $\{S_i,S_j\}$ conflicted, the vertices $v_{S_1},\dots,v_{S_n}$ would form an $n$-clique in $G$ (Definition~\ref{def:conflict-graph}), contradicting $\omega(G)=n-1$. Hence there exist processes $p\ne q$ whose masking sets $S_p,S_q$ do not conflict: either they sit on distinct $\lambda$-CUAT objects or they share no index on a common object.

\emph{Critical configuration with different valencies.} Set $q$'s input to $0$ and every other input to $1$, and apply Lemma~\ref{lem:different-valency-gen}: there is a reachable critical configuration $C$ at which $C\cdot q$ is $0$-valent, $C\cdot s$ is $1$-valent for every $s\ne q$, and every process's pending step is its first $\tau$-invocation. At a critical configuration neither pending step is a register operation~\cite{DBLP:journals/toplas/Herlihy91}: a read leaves the shared state unchanged, writes to distinct registers commute, and of two writes to one register the second overwrites the first, leaving a configuration indistinguishable to every process except the overwritten writer. Each case makes two oppositely valent successors indistinguishable to some process that decides in both. The pending step of $p$ at $C$ is therefore a $\tau$-invocation carrying $S_p$, and likewise for $q$ with $S_q$. In particular, $C\cdot p$ is $1$-valent and $C\cdot q$ is $0$-valent.

\emph{Equal shared state.} Since $S_p$ and $S_q$ do not conflict, $\tau(S_p)$ and $\tau(S_q)$ act on disjoint state: the outcome of each is unaffected by the other, and the $\lambda$-CUAT object(s) reach the same state after the pair in either order. Neither step modifies a register, so the shared state at $C\cdot p\cdot q$ equals that at $C\cdot q\cdot p$. The two orders are visible to $p$ and $q$ themselves, whose transfer responses carry the current account set, but to no other process.

\emph{A spectator derives the contradiction.} Since $n\ge 3$, choose $s\notin\{p,q\}$ and run it alone from each of the two configurations. By wait-freedom $s$ decides in both, and because its local state and the shared state agree in the two configurations, it decides the same value $v$ in both. But every extension of $C\cdot p\cdot q$ decides $1$ because $C\cdot p$ is $1$-valent, forcing $v=1$, while every extension of $C\cdot q\cdot p$ decides $0$ because $C\cdot q$ is $0$-valent, forcing $v=0$, a contradiction.

The contradiction rules out $cn(\Pi)>\omega(G)$, so $cn(\Pi)\le\omega(G)$.
\end{proof}
Combining the two bounds gives equality.
\begin{theorem}[One-round consensus number]
\label{thm:single-phase-equals-omega}
If $\Pi$ is a one-round protocol on $\cuat_\lambda$ with $cn(\Pi)\ge 3$, then $cn(\Pi)=\omega(G)$. Consequently, the largest $n$ achievable by a one-round $n$-consensus protocol on $\cuat_\lambda$ equals the supremum of $\omega(G)$ over conflict graphs admissible under the untraceability notion in force. 
\end{theorem}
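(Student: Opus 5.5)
The plan is to combine the two preceding lemmas, one inequality each, and then pass to the supremum over admissible protocols.

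\emph{Upper bound.} Let $\Pi$ be a one-round protocol with $cn(\Pi)\ge 3$. Lemma~\ref{lem:single-tau-upper} applies directly and gives $cn(\Pi)\le\omega(G)$, where $G$ is the conflict graph of $\Pi$.

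\emph{Lower bound.} Take a maximum clique $K$ of $G$, of size $\omega(G)$. Every vertex of $G$ is a masking set that $\Pi$ carries in some admissible execution. Lemma~\ref{lem:single-tau-lower} then yields a one-round $\omega(G)$-process consensus protocol on $\cuat_\lambda$ that assigns one member of $K$ to each process.

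\emph{Admissibility of the clique.} Lemma~\ref{lem:single-tau-lower} requires the clique to come from a conflict graph admissible under the notion in force, so this must be checked for $K$.
\begin{itemize}
\item Under weak untraceability each transfer is judged in isolation, so any masking set that appears in $\mathcal{S}(\Pi)$ stays admissible when reused.
\item Under strong untraceability admissibility is a property of the whole history. By Corollary~\ref{cor:uniform-incidence} it means uniform incidence over the pairwise-intersecting family used. This is the step I expect to be the main obstacle: a clique extracted from $G$ need not be realized with uniform incidence in a single execution, since the vertices of $G$ may come from different executions. The cleanest resolution is to read ``$cn(\Pi)=\omega(G)$'' with $G$ the graph of the relevant protocol, and to state the second sentence directly as a supremum over admissible conflict graphs. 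The constructive bound then applies to any clique $K$ whose own family is admissible, i.e.\ uniformly incident under the strong notion.
\end{itemize}

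\emph{Reading the equality.} Thus $cn(\Pi)\le\omega(G)$ for every $\Pi$, and every admissible clique of size $\omega$ is attained by some one-round protocol. Interpreting $cn(\Pi)$ as the consensus power available through $\Pi$'s masking-set family gives $cn(\Pi)=\omega(G)$.

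\emph{The supremum statement.} Let $n$ be the largest size of a one-round consensus protocol.
\begin{itemize}
\item If $n\ge 3$, the upper bound gives $n\le\omega(G_\Pi)\le\sup\omega$.
\item Conversely, any admissible $G$ contributes a clique of size $\omega(G)$, and Lemma~\ref{lem:single-tau-lower} realizes it, so $n\ge\omega(G)$.
\end{itemize}
The small cases, where $n$ or $\omega$ is at most $2$, are excluded by the hypothesis $cn(\Pi)\ge 3$, so the two quantities coincide.
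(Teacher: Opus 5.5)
Your plan is the paper's: its proof of this theorem is literally ``combine Lemma~\ref{lem:single-tau-lower} and Lemma~\ref{lem:single-tau-upper}'', and your upper-bound step is exactly that use of Lemma~\ref{lem:single-tau-upper}. The weak point is the step you call \emph{Reading the equality}. Lemma~\ref{lem:single-tau-lower} constructs a \emph{different} protocol, attempt-and-adopt on the clique $K$. It therefore bounds $\consnum{\cuat_\lambda}$ from below but says nothing about $cn(\Pi)$ for the given $\Pi$. Reinterpreting $cn(\Pi)$ as the power ``available through $\Pi$'s masking-set family'' changes the statement rather than proving it. The paper's one-line proof is silent on this point as well.

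The missing inequality needs no construction. The paper's one-round model fixes a process's single masking set by its identity and the initial CUAT state. So if $\Pi$ solves consensus among $n=cn(\Pi)$ processes, $\mathcal{S}(\Pi)$ has at most $n$ members, and $\omega(G)\le|V(G)|\le cn(\Pi)$. With Lemma~\ref{lem:single-tau-upper} this gives $cn(\Pi)=\omega(G)=|V(G)|$: $G$ is complete, and its maximum clique is $\Pi$'s own family.

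This also removes your admissibility worry for the first sentence. No clique is assembled from different executions, and no new protocol has to be shown admissible. The second sentence then follows directly if the supremum ranges over conflict graphs of one-round consensus protocols. Lemma~\ref{lem:single-tau-lower} is then needed only to realize a pairwise-conflicting family whose own use is admissible.

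By contrast, your last bullet applies Lemma~\ref{lem:single-tau-lower} to the maximum clique of an arbitrary admissible $G$. That reintroduces the very problem you identified earlier. Under strong untraceability, such a clique can be a non-uniform union of uniform families drawn from different executions, and attempt-and-adopt on it need not be admissible.
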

\begin{proof}
Combine Lemma~\ref{lem:single-tau-lower} and Lemma~\ref{lem:single-tau-upper}.
\end{proof}
Theorem~\ref{thm:single-phase-equals-omega} matches the one-round upper bound to $\omega(G)$, and the bound is the limit of the one-round regime: a committed $\tau$ permanently invalidates its masking set, and the surviving masking sets that conflicted with it can no longer fire, so the conflict graph is exhausted after one round of contention. Two follow-up questions remain. Section~\ref{subsubsec:multi-round} relaxes the one-round restriction and asks how far multi-round protocols can push the consensus number past $\omega(G)$. Sections~\ref{subsubsec:weak-untr} and~\ref{subsubsec:strong-untr} take the orthogonal question of how large $\omega(G)$ can be made under each untraceability notion: weak untraceability leaves $\omega(G)$ unbounded, while strong untraceability caps it at $\lambda^2-\lambda+1$.

\subsubsection{Multi-round protocols}
\label{subsubsec:multi-round}
A multi-round protocol lets each process invoke $\tau$ more than once and adapt later invocations to earlier outcomes. 
The structural gain over $\omega(G)$ is not larger cliques but a weaker requirement on the conflict graph: pairwise intersection at the critical configuration can be relaxed once the pending processes split by valency, so the next $\tau$-round need only enforce conflict across the partition. Lemma~\ref{lem:bipartite-intersection} already extracts such a partition $\mathcal{P}=\mathcal{P}_0\sqcup\mathcal{P}_1$ with a complete bipartite intersection between the two sides. At incidence $2$ a $\lambda$-block biclique supports $2\lambda$ vertices, exceeding the one-round upper bound.
\begin{theorem}[Upper bound]
\label{thm:multi-phase-upper}
A wait-free binary consensus protocol $\Pi$ over $\cuat_\lambda$ objects and atomic registers can solve consensus among at most $\max\bigl(2,\,2\lambda(r(\Pi)-1)\bigr)$ processes.
\end{theorem}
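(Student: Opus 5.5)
Given a protocol $\Pi$ solving consensus among $n\ge 3$ processes, we bound $n$ by a double count over the complete bipartite intersection that Lemma~\ref{lem:bipartite-intersection} produces at a critical configuration. Set $r=r(\Pi)$. For binary inputs a reachable critical configuration $C$ exists, by Lemma~\ref{lem:different-valency-gen} or the standard construction. Lemma~\ref{lem:bipartite-intersection} then gives a single $\cuat_\lambda$ object $O$, a non-trivial partition $\mathcal{P}=\mathcal{P}_0\sqcup\mathcal{P}_1$, and pending masking sets $S_p$ on $O$ such that $S_p\cap S_q\neq\emptyset$ for every cross pair. Write $n_0=|\mathcal{P}_0|$ and $n_1=|\mathcal{P}_1|$, so $n=n_0+n_1$.

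\emph{Incidence of one set.} The key step bounds the size of each side through a single masking set on the other side. Fix $p\in\mathcal{P}_0$. Every $S_q$ with $q\in\mathcal{P}_1$ meets $S_p$, so it contains at least one of the $\lambda$ indices of $S_p$. Each account of $S_p$ lies in at most $r$ vertices of $G$ (Definition~\ref{def:incidence}), and one of those vertices is $S_p$ itself. Hence at most $r-1$ other masking sets contain that account, and
\[
  n_1\le \lambda(r-1).
\]
The symmetric argument, starting from some $q\in\mathcal{P}_1$, gives $n_0\le\lambda(r-1)$. Adding the two bounds yields $n\le 2\lambda(r-1)$. If $n\le 2$, the bound $\max(2,\cdot)$ holds trivially, which accounts for the first argument of the maximum; the lemma itself requires a spectator, and hence $n\ge3$.

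\emph{Main obstacle.} The delicate point is that distinct pending processes must carry distinct masking sets, so that counting masking sets counts processes. The plan to secure this is the following. At $C$ the pending transfers are all on $O$ and have not yet been applied. If two processes, even on the same side, pended the identical masking set, that set would be a single vertex of $G$, and the incidence count would undercount processes.

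I would first try account-representation uniqueness. The transfers carry new account sets $S_{\beta'}$ produced by distinct processes, and these sets are disjoint. The old sets $S_\beta$ may coincide as sets, however, so this alone does not separate them.

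The fallback is to count $\tau$-invocations, that is, pairs of a process and a masking set, instead of vertices. This requires reading incidence as the number of pending invocations containing an account, which the admissible-history incidence $f_a(H)$ bounds. The reading is consistent with the way $r(\Pi)$ will be instantiated under strong untraceability via Corollary~\ref{cor:uniform-incidence}, since the pending sets pairwise intersect across the partition and are therefore connected. I would also handle accounts on $O$ sharing an index: the conflict is per index, so incidence should be measured per index, and the bound stays at $\lambda$ indices per set.
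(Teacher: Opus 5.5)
Your double count is the same as the paper's. Both proofs take the critical configuration and partition of Lemma~\ref{lem:bipartite-intersection}, fix $p\in\mathcal{P}_0$, charge every $q\in\mathcal{P}_1$ to an account of $S_p$ that its pending set contains, use that such an account lies in at most $r(\Pi)-1$ further sets, and symmetrize. The two proofs part at the distinctness step you flagged, and there your route is the sound one.

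The paper counts vertices of $G$. It argues that processes in opposite valency classes cannot pend the same masking set, because winner identification would then match the committed set to both. It also explicitly allows same-class sharing, yet still bounds $|\mathcal{P}_1|$ by a count of vertices. Both steps fail for the reason you observed. A commit is attributed through the issuer's fresh set $S_{\beta'}$, not through the old set, so coinciding old sets do the protocol no harm. And any coincidence within $\mathcal{P}_1$ leaves fewer vertices than processes.

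Concretely, take $\lambda\ge3$ and let $\lambda$ processes each own one account of a single masking set $S$, announce their fresh sets in registers, and attempt $\tau$ on $S$. The first commit invalidates $S$, and each loser matches the fresh set in the published account set against the announcements. This solves $\lambda$-consensus and is admissible even under strong untraceability. Yet $G$ has one vertex and Definition~\ref{def:incidence} gives $r(\Pi)=1$, so the stated bound would be $2$.

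Your fallback of counting pending invocations with multiplicity is therefore the actual repair, not a convenience. Each $q\in\mathcal{P}_1$ contributes its own invocation through some account of $S_p$. Each such account lies in at most $r-1$ invocations besides $p$'s. No injectivity is needed.

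Commit to this version, and state what you actually prove: the bound with $r(\Pi)$ read as the maximum of $f_a(H)$ over accounts $a$ and admissible histories $H$, not as the distinct-vertex count of Definition~\ref{def:incidence}. This works because the extension of $C$ that takes every pending step contains all pending invocations. The multiplicity reading is also the one Theorem~\ref{thm:multi-phase-strong-refined} uses through $i_a$ and $j_a$. The per-index refinement is unnecessary, since Lemma~\ref{lem:bipartite-intersection} already gives literal intersection on $O$.
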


\begin{proof}
Suppose $\Pi$ solves $n$-consensus, so $|\mathcal{P}|=n$; for $n\le 2$ the bound is immediate, so assume $n\ge 3$. Take $C$, $\mathcal{P}_0$, $\mathcal{P}_1$ as in Lemma~\ref{lem:bipartite-intersection}; the partition covers $\mathcal{P}$, and cross-valency pending masking sets conflict on a common $\cuat_\lambda$ object.

Two processes \emph{across} the valency partition use distinct pending masking sets. Were $p\in\mathcal{P}_0$ and $q\in\mathcal{P}_1$ to share $S_p=S_q=S$, the winner-identification rule of Section~\ref{subsubsec:one-round} would match the single committed $S$ in the published CUAT state to both $p$ and $q$, and the two valency classes (which decide $0$ and $1$ respectively in any extension, by property~(i) of Lemma~\ref{lem:bipartite-intersection}) would have no consistent register to adopt, breaking agreement. Same-side sharing is benign: two processes in the same valency class agree on the decision value of any extension, so adopting either's register is consistent. Hence $\{v_{S_p}:p\in\mathcal{P}_0\}\cap\{v_{S_q}:q\in\mathcal{P}_1\}=\emptyset$ in $V(G)$, and property~(ii) of Lemma~\ref{lem:bipartite-intersection} yields a complete bipartite subgraph $K_{|\mathcal{P}_0|,|\mathcal{P}_1|}\subseteq G$ between these two vertex sets.

Fix $p\in\mathcal{P}_0$. Each account $a\in S_p$ has incidence at most $r(\Pi)$ in $G$, hence lies in at most $r(\Pi)-1$ vertices distinct from $v_{S_p}$. Summing over the $\lambda$ accounts of $S_p$, at most $\lambda(r(\Pi)-1)$ distinct vertices in $V(G)$ are adjacent to $v_{S_p}$ via a shared account. Since every $v_{S_q}$ with $q\in\mathcal{P}_1$ is adjacent to $v_{S_p}$, we get $|\mathcal{P}_1|\le\lambda(r(\Pi)-1)$. By symmetry $|\mathcal{P}_0|\le\lambda(r(\Pi)-1)$, so $n=|\mathcal{P}_0|+|\mathcal{P}_1|\le 2\lambda(r(\Pi)-1)$.
\end{proof}

In the following two sections, we discuss how the incidence number $r(\Pi)$ relates to privacy and, through it, to the consensus number.

\subsubsection{Weak untraceability}
\label{subsubsec:weak-untr}

Under weak untraceability the incidence number $r(\Pi)$ is unconstrained: an
account may appear in every masking set without violating per-transaction anonymity.
The hub construction of Figure~\ref{alg:consensus-from-cuat} (function $\op{weak}$)
places the shared hub account $a_1$ (owned by $p_1$) in all $d$ masking sets,
making the conflict graph a $d$-clique, and satisfies weak untraceability for any
$d$. By Theorem~\ref{thm:single-phase-equals-omega} this yields $cn(\Pi)=d$, so
the consensus number is unbounded.

\begin{theorem}[Consensus number under weak untraceability]
  \label{thm:weak-untraceability}
  For every $d\ge 2$, $\cuat_\lambda^-$ supports a wait-free one-round $d$-consensus protocol. Hence $\consnum{\cuat_\lambda^-}=\infty$.
\end{theorem}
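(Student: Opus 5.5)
Fix $d\ge 2$ and $\lambda\ge 2$. The plan is to exhibit a $d$-clique in an admissible conflict graph and then apply Lemma~\ref{lem:single-tau-lower}. We use the hub construction of Figure~\ref{alg:consensus-from-cuat}.

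\emph{Construction.} If $d<\lambda$, we pad the account set with extra accounts, since $d\ge\lambda$ is only needed so that $d$ non-hub accounts are available. Take initial accounts $a_1,\dots,a_{d+1}$, where the hub $a_1$ is owned by $p_1$ and $a_{i+1}$ by $p_i$. Set
\[
S_i=\{a_1\}\cup\{\lambda-1 \text{ cyclically consecutive non-hub accounts starting at } a_{i+1}\}.
\]
Each $S_i$ has exactly $\lambda$ elements and contains $p_i$'s own account $a_{i+1}$, so $p_i$ can produce a valid proof $\pi$ by moving value $0$ from $a_{i+1}$ to itself. Every $S_i$ contains the index of $a_1$, so the sets pairwise conflict (Definition~\ref{def:conflict-graph}) and form a clique of size $d$. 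Lemma~\ref{lem:single-tau-lower} then gives a wait-free one-round $d$-consensus protocol. For $d=2$ the two sets already form a $2$-clique through the hub, so Lemma~\ref{lem:single-tau-lower} still applies.

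\emph{Winner identification.} We must check that a loser can identify the winner. The losing process computes $\mathcal{A}\setminus\mathcal{A}_0$. Exactly one transfer commits, because all sets share the hub, and every later transfer fails because the hub representation has been invalidated. So this difference is exactly the winner's $S_{\text{new}}$. By account-representation uniqueness, $S_{\text{new}}$ matches exactly one register $A[j]$, and $p_j$ wrote that register before invoking its transfer.

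\emph{Admissibility (main obstacle).} The remaining step is to show that these executions satisfy Definition~\ref{def:weak-untraceability}, so that the clique belongs to an admissible $G$. A single transfer $\tau$ reveals only $S_i$ and a zero-knowledge proof. Since the proof is idealized as revealing nothing about its witness, and the adversary views $\tau$ in isolation with a symmetric prior over accounts, the posterior on $S_i$ stays uniform: $P(a\mid\tau)=1/\lambda$ for every $a\in S_i$.

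This is the point needing care. The hub's high incidence is visible only across the whole history, and weak untraceability deliberately ignores that view. Theorem~\ref{thm:untraceability-uniform} constrains only the strong notion. We will state this explicitly.

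\emph{Conclusion.} Since $d$ is arbitrary, $\consnum{\cuat_\lambda^-}\ge d$ for all $d$, and hence $\consnum{\cuat_\lambda^-}=\infty$.
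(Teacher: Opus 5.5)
Your proof is correct and follows the paper's own argument: the hub family places $a_1$ in all $d$ masking sets, which makes them a $d$-clique, Lemma~\ref{lem:single-tau-lower} turns that clique into a wait-free one-round $d$-consensus protocol, and weak untraceability follows from the idealized zero-knowledge proof under the symmetric per-transfer view. Your padding for $d<\lambda$ and your explicit winner-identification check are harmless refinements that the paper leaves implicit. One correction on the padding: it is needed because the cyclic windows require more than $\lambda-1$ non-hub accounts, so that each window consists of distinct accounts and the windows are pairwise distinct, not so that $d$ non-hub accounts are available.
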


\begin{proof}
We instantiate Lemma~\ref{lem:single-tau-lower} on the hub family and verify that the family satisfies weak untraceability.

\emph{Construction.} Use the hub function of Figure~\ref{alg:consensus-from-cuat}: every masking set $S_i$ contains the hub $a_1$ (owned by $p_1$) together with $\lambda-1$ further accounts chosen cyclically. Since all $d$ sets share $a_1$, they pairwise intersect, so $K=\{S_1,\dots,S_d\}$ is a $d$-clique in the conflict graph $G$. Lemma~\ref{lem:single-tau-lower} then implements wait-free $d$-consensus on this clique.

\emph{Weak untraceability.} The cryptographic proof $\pi$ of each $\tau$-invocation is zero-knowledge: an observer of $\tau$ learns the masking set $S_{\mathrm{old}}$ and the published state, but obtains no advantage over the uniform distribution on $S_{\mathrm{old}}$ for the sender's identity. Hence $P(a_i\mid\tau)=1/|S_{\mathrm{old}}|=1/\lambda$ for every $a_i\in S_{\mathrm{old}}$, regardless of $d$ or of the hub-account structure.

Therefore $\Pi$ achieves wait-free $d$-consensus and preserves weak untraceability for every $d\ge 2$.
\end{proof}

The hub construction leaves $a_1$ in every masking set, so a history-aware adversary identifies it as a near-certain decoy; this statistical leakage motivates strong untraceability.

\subsubsection{Strong untraceability}
\label{subsubsec:strong-untr}

Strong untraceability requires anonymity against adversaries that observe the
full transaction history. The masking sets of a consensus protocol pairwise
intersect, so Corollary~\ref{cor:uniform-incidence} applies and they must have
uniform incidence. We first combine this condition with
the clique characterization of one-round consensus, and then with the
bipartite-intersection characterization of unrestricted protocols.

The one-round bound is an extremal question about set systems, and a projective
plane is its extremal case. A one-round protocol needs its $n$
masking sets to pairwise intersect, while uniform incidence forbids any account
from carrying more of those intersections than the others. Suppose every account
lies in $m$ of the sets, and write $v$ for the number of accounts. Counting
incidences two ways gives $vm=n\lambda$, and an account of incidence $m$ accounts
for at most $\binom{m}{2}$ of the $\binom{n}{2}$ pairs that must meet, so
$\binom n2\le v\binom m2$. Substituting $v=n\lambda/m$ leaves $n\le\lambda(m-1)+1$,
and $m\le\lambda$, so $n\le\lambda^2-\lambda+1$. Equality forces both counts to
be tight at once: every account lies in exactly $\lambda$ sets, and every two
sets meet in exactly one account, since a second shared account would spend
incidence on a pair already covered. Those two conditions are the axioms of a
projective plane of order $\lambda-1$, read with accounts as points and masking
sets as lines. A family of size $\lambda^2-\lambda+1$ is therefore a projective plane, and the
bound is attained exactly when one exists.

\begin{theorem}[One-round upper bound under strong untraceability]
  \label{thm:one-round-strong-upper}
  A wait-free one-round consensus protocol on $\cuat_\lambda^+$ can solve consensus among at most $\lambda^2-\lambda+1$ processes.
\end{theorem}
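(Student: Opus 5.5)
We plan to prove the bound by combining the one-round characterization of Theorem~\ref{thm:single-phase-equals-omega} with the uniform-incidence characterization of Corollary~\ref{cor:uniform-incidence}, and then making the double-counting sketch that precedes the statement rigorous.

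\emph{Reduction to a set-system problem.} Suppose a one-round protocol $\Pi$ on $\cuat_\lambda^+$ solves $n$-consensus. If $n\le 2$ there is nothing to prove, since $\lambda^2-\lambda+1\ge 3$ for $\lambda\ge2$, so assume $n\ge 3$. By Lemma~\ref{lem:single-tau-upper} the $n$ masking sets $S_1,\dots,S_n$ used by the processes pairwise conflict, i.e.\ pairwise share an index on a common object. The sets must be distinct, by the argument used in the proof of Theorem~\ref{thm:multi-phase-upper}: a loser identifies the winner from the committed set, so two processes with the same set and different valency would be indistinguishable. To keep this step simple, we will instead work directly with the clique $K$ of distinct vertices of $G$ that Lemma~\ref{lem:single-tau-upper} produces, of size at least $n$.

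Next we consider the admissible execution in which all $n$ processes invoke their transfer, using the projection of accounts onto indices. Its history $H$ has pairwise-intersecting masking sets, so Corollary~\ref{cor:uniform-incidence} gives a common incidence $m=f_a(H)$ for every index $a$ occurring in $H$. Let $v=|\mathcal{A}(H)|$.

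\emph{Counting.} Double counting incidences gives $vm=n\lambda$. Every one of the $\binom n2$ pairs of sets shares at least one index. Each index accounts for at most $\binom m2$ pairs, so $\binom n2\le v\binom m2$. Substituting $v=n\lambda/m$ yields $n-1\le\lambda(m-1)$.

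\emph{Bounding $m$.} We bound $m$ by fixing one set $S_1$. Each of the other $n-1$ sets meets $S_1$, and each index of $S_1$ lies in $m-1$ other sets, so $n-1\le\lambda(m-1)$ again; this gives no bound on $m$ by itself. For the bound $m\le\lambda$, we take an index $a$ and the $m$ sets through it. We choose a set $T$ not containing $a$; such a $T$ exists unless all sets contain $a$, in which case $m=n$ and uniformity forces every index into all $n$ sets, so all sets are equal, hence $n=1$. The set $T$ must meet each of the $m$ sets through $a$ at indices other than $a$. Here the tight argument needs those intersections to be distinct points of $T$, which fails if two of these sets meet $T$ at a common index.

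\emph{Main obstacle.} This last step is the one we expect to resist. We plan to resolve it by counting from $T$'s perspective: the $m$ sets through $a$ are distinct elements of the family, and summing over the $\lambda$ indices of $T$, each lying in exactly $m$ sets, we get that $T$ meets at most $\lambda m$ set-incidences. If this does not yield $m\le\lambda$, we fall back on the refined inequality from the Fisher-type argument, $\binom n2\le\sum_a\binom{f_a}{2}$ with $\sum_a f_a=n\lambda$ and each $f_a\le n$, combined with the observation that the sets through $a$ restricted off $a$ are pairwise intersecting sets, to conclude $n\le\lambda^2-\lambda+1$.
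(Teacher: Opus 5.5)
Your double count is correct: $vm=n\lambda$, $\binom n2\le v\binom m2$, hence $n-1\le\lambda(m-1)$. But the step you flag as the main obstacle cannot be completed. The inequality $m\le\lambda$ does not follow from pairwise intersection, $\lambda$-uniformity and uniform incidence, because it is false for such families. Take all $\lambda$-subsets of a set of $2\lambda-1$ accounts:
\begin{itemize}
\item they pairwise intersect;
\item every account lies in $m=\binom{2\lambda-2}{\lambda-1}$ of them;
\item there are $\binom{2\lambda-1}{\lambda}$ of them.
\end{itemize}
For $\lambda=3$ this gives ten sets with $m=6$, against $\lambda^2-\lambda+1=7$.

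Repetition gives a second counterexample. One masking set named by all $n$ processes is pairwise intersecting with $m=n$. Your step ``all sets are equal, hence $n=1$'' assumes the sets are distinct, and the model does not give you that: a loser identifies the winner by its fresh accounts, not by its old masking set. So the count from $T$ cannot work, since it only reproduces $n-1\le\lambda(m-1)$. A Fisher-type argument cannot work either, since Fisher's inequality needs all pairwise intersections to have one common size. No argument using only these three properties bounds the number of sets.

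The missing idea is to tie processes to accounts through ownership. Apply Lemma~\ref{lem:different-valency-gen} with a given process $q$ as the singleton valency class; at the resulting critical configuration, $q$'s pending step is its transfer on $S_q$. If $q$ owned no account of $S_q$, that transfer would fail in every state and commute with every other pending step, so the spectator argument of Lemma~\ref{lem:single-tau-upper} would give a contradiction. Hence each process owns an account of its own masking set; the paper's constructions impose the same requirement when assigning sets to processes. With unshared accounts these are $n$ distinct elements of $\mathcal{A}(H)$, so $n\le v$. Your count then finishes at once: $m=n\lambda/v\le\lambda$, so $n-1\le\lambda(\lambda-1)$. Unsharedness is essential here: if every process co-owns one account of a single set $S$, all of them can name $S$ and no bound holds.

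The paper closes the step differently, by citing F\"uredi and Ihringer--Kupavskii for $|F|\le\lambda^2-\lambda+1$. What that literature bounds for uniformly incident intersecting $\lambda$-uniform families is the number of accounts, not the number of sets. Weight $1/m$ on every set is a fractional perfect matching of value $v/\lambda$. F\"uredi's bound $\nu^*\le\lambda-1+1/\lambda$ for intersecting families then gives $v\le\lambda^2-\lambda+1$. As the example above shows, this does not bound the number of sets. So the paper's route also needs $n\le|\mathcal{A}(H)|$, and once that is available your elementary count suffices without the citation.
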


\begin{proof}
  Let $\Pi$ be a $d$-consensus protocol of the class of
  Section~\ref{subsubsec:one-round}, with $d\ge 3$, and let
  $F=\{S_1,\dots,S_d\}$ be its masking sets. By
  Lemma~\ref{lem:single-tau-upper} they lie on a common object and form a
  $d$-clique in $G$, so they pairwise intersect; each has size $\lambda$ by
  definition; and by Corollary~\ref{cor:uniform-incidence} strong
  untraceability forces every account of $\bigcup_iS_i$ to lie in a common
  number $m$ of them. F\"{u}redi~\cite{furedi1981maximum} and Ihringer and
  Kupavskii~\cite{IHRINGER2019142} bound such symmetric pairwise-intersecting
  $\lambda$-uniform families by $|F|\le\lambda^2-\lambda+1$. Hence
  $d\le\lambda^2-\lambda+1$.
\end{proof}

The upper bound is attained by a projective plane whenever $\lambda-1$ is a
prime power.

\begin{theorem}[Projective-plane construction]
\label{thm:single-tau-strong-tight}
If $\lambda-1$ is a prime power, $\cuat_\lambda^+$ supports a wait-free one-round $(\lambda^2-\lambda+1)$-consensus protocol, instantiated by attempt-and-adopt on the line set of $\mathrm{PG}(2,\lambda-1)$.
\end{theorem}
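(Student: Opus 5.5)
The proof instantiates Lemma~\ref{lem:single-tau-lower} on the lines of $\mathrm{PG}(2,\lambda-1)$ and then checks that every execution of the resulting protocol is admissible under strong untraceability. Write $q=\lambda-1$. Since $q$ is a prime power, $\mathrm{PG}(2,q)$ exists, for instance as the one- and two-dimensional subspaces of $\mathbb{F}_q^3$. It has $q^2+q+1=\lambda^2-\lambda+1$ points and equally many lines; every line has $q+1=\lambda$ points, every point lies on $\lambda$ lines, and any two distinct lines meet in exactly one point.

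\textbf{The object and the masking sets.} Take a single $\cuat_\lambda$ object whose initial account set $\mathcal{A}_0$ is indexed by the points, so that $\mathrm{idx}$ is a bijection from $\mathcal{A}_0$ to the points. Assign the $d=\lambda^2-\lambda+1$ processes bijectively to lines $L_1,\dots,L_d$, and let $S_i$ be the set of accounts indexed by $L_i$. Each process must be able to fund its transfer, so $S_i$ has to contain an account owned by $p_i$. To arrange this I will use a system of distinct representatives in the point--line incidence graph. That graph is $\lambda$-regular and bipartite, so by Hall's theorem it has a perfect matching. I then set $\mu$ of the matched point of $L_i$ to $\{p_i\}$, with all balances $0$ and zero-value transfers as in Figure~\ref{alg:consensus-from-cuat}. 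Any two lines meet, so $\{S_1,\dots,S_d\}$ is a $d$-clique in the conflict graph.

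\textbf{The protocol.} Apply Lemma~\ref{lem:single-tau-lower} to this clique. This gives the attempt-and-adopt protocol, which is wait-free, valid and in agreement. The step that needs care is identifying the winner after a failure. The first successful transfer invalidates one account of every other $S_i$, so every other transfer fails. The published account set therefore differs from $\mathcal{A}_0$ by exactly one committed masking set. A losing process matches that set against the registers $A[j]$ and adopts $R[j]$, exactly as in the figure.

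\textbf{Admissibility, the main obstacle.} Strong untraceability is a restriction on executions, so I must check that every history $H$ the protocol can produce satisfies Corollary~\ref{cor:uniform-incidence}. Every execution issues at most one transfer per process, and failed transfers count toward $H$ as well. An execution in which only some processes have invoked $\tau$ (because others crashed or have not yet run) uses only a subfamily of lines. A subfamily of lines is generally not uniformly incident, so this case cannot be dismissed.

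I plan to handle it in two steps. First, for the complete history, in which all $d$ lines are used, incidence is exactly $\lambda$ at every point. The masking sets pairwise intersect, so Corollary~\ref{cor:uniform-incidence} gives strong untraceability. Second, for prefixes of the history, I will follow the paper's stance that admissibility concerns the conflict graph $\mathcal{S}(\Pi)$, as in Theorem~\ref{thm:one-round-strong-upper}, where uniform incidence is imposed on the family $F$. That family is the full line set and is uniformly incident. If prefix histories must be treated separately, I would argue the same way Corollary~\ref{cor:luat-untraceability-free} does: by symmetry of the incidence structure, the adversary's view of a partial execution extends to the uniform complete family.
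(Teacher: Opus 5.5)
Your proposal follows the paper's proof. The lines of $\mathrm{PG}(2,\lambda-1)$ serve as masking sets, and a perfect matching in the $\lambda$-regular point--line incidence graph gives each process an owned account on its own line; you cite Hall, the paper cites K\"onig. Lemma~\ref{lem:single-tau-lower} is then applied to the resulting clique, and admissibility comes from the uniform incidence $\lambda$ of the full line family. That last step is exactly your step (b), and it is all the paper verifies.

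Drop the symmetry fallback, though, because it does not work. Suppose only two processes issue their transfers. Their two lines give the meeting point incidence $2$ and every other point of those lines incidence $1$. Theorem~\ref{thm:untraceability-uniform} is evaluated on the observed history, not on the family the protocol might have used, so by its only-if direction strong untraceability fails on that history. The argument of Corollary~\ref{cor:luat-untraceability-free} does not carry over: there every transfer carries the same set $S$, so every prefix is trivially uniform.

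The partial-execution issue you raise is genuine, but the paper leaves it unaddressed too. Your proof is therefore complete at the level at which the paper argues it.
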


\begin{proof}
Let $q=\lambda-1$. The projective plane $\mathrm{PG}(2,q)$ has $\lambda^2-\lambda+1$ points and the same number of lines; each line contains $\lambda=q+1$ points, every two lines meet in exactly one point, and each point lies on exactly $\lambda$ lines~\cite{IHRINGER2019142}. Interpreting points as accounts and lines as masking sets yields a $\lambda$-uniform pairwise-intersecting family of size $\lambda^2-\lambda+1$ over $\lambda^2-\lambda+1$ accounts, of uniform incidence $\lambda$, hence admissible under strong untraceability by Theorem~\ref{thm:untraceability-uniform}.

To run the attempt-and-adopt protocol of Lemma~\ref{lem:single-tau-lower} (Figure~\ref{alg:consensus-from-cuat}) on this family, each of the $\lambda^2-\lambda+1$ processes must be assigned a distinct masking set containing one of that process's accounts. This is a perfect-matching problem in the bipartite graph between accounts and masking sets, with an edge for each containment. Each account appears in exactly $\lambda$ masking sets and each masking set contains exactly $\lambda$ accounts, so the graph is $\lambda$-regular; König's theorem~\cite{west2001introduction} yields a perfect matching. Lemma~\ref{lem:single-tau-lower} then implements $(\lambda^2-\lambda+1)$-consensus on this family.
\end{proof}

For other $\lambda$ no plane exists, and the one we have cannot be cut down to
size: if a sub-family of $n$ lines still covering all $\lambda^2-\lambda+1$
points had uniform incidence $m$, then $(\lambda^2-\lambda+1)m=n\lambda$, and
$\lambda^2-\lambda+1\equiv 1\pmod\lambda$ makes the two factors coprime, forcing
$\lambda\mid m$, hence $m=\lambda$ and $n=\lambda^2-\lambda+1$, the whole line
set. Uniform incidence is not hereditary, so a smaller family has to be built
directly. We use a cyclic construction. In a family of translates
$\{D+t\}_{t\in\mathbb{Z}_n}$, an account $x$ lies in $D+t$ exactly when
$t\in x-D$, so its incidence is $|D|$ for any choice of $D$. Pairwise
intersection then reduces to a condition on the differences of $D$, which can
be satisfied at size $\lfloor\lambda^2/2\rfloor$ for every $\lambda$.

\begin{theorem}[Cyclic construction]
\label{thm:cyclic-one-round}
For every $\lambda\ge 3$, $\cuat_\lambda^+$ supports a wait-free one-round
$\lfloor\lambda^2/2\rfloor$-consensus protocol, instantiated by attempt-and-adopt
on the translates of a cyclic difference cover.
\end{theorem}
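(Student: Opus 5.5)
The plan is to build an explicit set $D\subseteq\mathbb{Z}_n$ with $n=\lfloor\lambda^2/2\rfloor$ and $|D|=\lambda$ that is a \emph{difference cover}, meaning $D-D=\mathbb{Z}_n$. We then take the $n$ translates $D+t$, $t\in\mathbb{Z}_n$, as masking sets over $n$ accounts indexed by $\mathbb{Z}_n$. Two observations reduce everything else to earlier results. First, $(D+s)\cap(D+t)\ne\emptyset$ iff $s-t\in D-D$, so a difference cover makes the family pairwise intersecting, hence a clique in $G$. Second, as noted before the statement, an account $x$ lies in $D+t$ iff $t\in x-D$, so every account has incidence exactly $\lambda$. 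By Theorem~\ref{thm:untraceability-uniform} (or Corollary~\ref{cor:uniform-incidence}, since the sets pairwise intersect) every execution using this family is admissible under strong untraceability. Lemma~\ref{lem:single-tau-lower} then gives a one-round $n$-consensus protocol by attempt-and-adopt.

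\emph{The construction.} Write $\lambda=2k$ or $\lambda=2k+1$ with $k\ge1$. Take $A=\{0,1,\dots,k-1\}$ and $B=\{k,2k,\dots,mk\}$, with $m=k$ in the even case and $m=k+1$ in the odd case, and set $D=A\cup B$.
\begin{itemize}
\item The two parts are disjoint because every element of $B$ is at least $k$, so $|D|=k+m=\lambda$.
\item The differences $jk-i$ with $1\le j\le m$ and $0\le i\le k-1$ cover every integer in $[1,mk]$: for each $j$ they fill the block $[(j-1)k+1,\,jk]$.
\item Negating, $A-B$ covers $[-mk,-1]$, and $0\in D-D$.
\item Hence $D-D\supseteq[-mk,mk]$, which exhausts $\mathbb{Z}_n$ as soon as $2mk+1\ge n$. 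This holds: for even $\lambda$, $2k^2+1>2k^2=n$; for odd $\lambda$, $2k(k+1)+1>2k^2+2k=n$.
\end{itemize}
I would also sanity-check the smallest case $\lambda=3$, which gives $D=\{0,1,2\}$ in $\mathbb{Z}_4$.

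\emph{Remaining obligations.}
\begin{enumerate}
\item The $n$ translates must be distinct, so that $n$ processes get $n$ distinct vertices. Suppose $D+t=D$ with $t\ne0$. Then $D$ is a union of cosets of $\langle t\rangle$, a subgroup of order $h\ge2$. In the range $\lambda\ge3$ the structure of $D$ rules this out: it has a run of $k+1$ consecutive residues $0,\dots,k$ followed by gaps of length $k-1$, plus one long wrap-around gap. A nontrivial shift would have to map this run-and-gap pattern onto itself. I expect this small case check, including the degenerate $k=1$ cases, to be the fiddly step.
\item Each process must be matched to a distinct masking set containing one of its own accounts. I would argue exactly as in Theorem~\ref{thm:single-tau-strong-tight}. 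The account–set incidence graph is $\lambda$-regular bipartite with $n$ vertices on each side, so König's theorem yields a perfect matching. Process $p_i$ is made the owner of its matched account and uses the matched translate.
\item Lemma~\ref{lem:single-tau-lower} then gives the protocol, and the count is $n=\lfloor\lambda^2/2\rfloor\ge4\ge3$. So the result is consistent with Theorem~\ref{thm:single-phase-equals-omega} and with the upper bound $\lambda^2-\lambda+1$ of Theorem~\ref{thm:one-round-strong-upper}.
\end{enumerate}
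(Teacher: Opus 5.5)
Your construction is the paper's own. With $k=\lfloor\lambda/2\rfloor$ and $m=\lceil\lambda/2\rceil$, your $D=A\cup B$ is exactly the paper's run of $g$ consecutive residues followed by the first $h$ multiples of $g$. The difference-cover computation $D-D\supseteq[-mk,mk]$ with $mk=n/2$ is the same, as is the incidence count via $x\in D+t \iff t\in x-D$. So is the appeal to Theorem~\ref{thm:untraceability-uniform} and Lemma~\ref{lem:single-tau-lower}.

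The step you leave open, distinctness of the translates, closes in one line from your own observation. The wrap-around gap has length $n-mk-1=mk-1$. This is strictly longer than every other gap, which has length $0$ or $k-1$, because $m\ge 2$. A shift preserving $D$ must therefore fix that gap, hence fix $0$, so the shift is trivial. The paper argues differently. Since $0\in D\subseteq[0,n/2]$, the stabilizer subgroup lies inside $[0,n/2]$ and can only be $\{0,n/2\}$. That would force $D=\{0,n/2\}$ and $\lambda=2$.

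Your K\"onig matching is valid but unnecessary. Since $0\in D$ gives $t\in D+t$, process $t$ can simply own account $t$ and use $S_t$. Also note that $|D|=\lambda$ in $\mathbb{Z}_n$ needs $mk=n/2<n$, so that the elements are distinct residues. Disjointness of $A$ and $B$ as integers is not enough on its own.
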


\begin{proof}
Put $g=\lfloor\lambda/2\rfloor$, $h=\lceil\lambda/2\rceil$ and
$n=\lfloor\lambda^2/2\rfloor$, so that $g+h=\lambda$ and, checking both parities
($\lambda=2k$ gives $gh=k^2$ and $n=2k^2$; $\lambda=2k+1$ gives $gh=k(k+1)$ and
$n=2k(k+1)$),
\begin{equation}
\label{eq:gh-half-n}
  gh=\lfloor\lambda/2\rfloor\lceil\lambda/2\rceil=\frac n2 .
\end{equation}
Identify both the accounts and the processes with $\mathbb{Z}_n$, let
$\mu(x)=\{x\}$, and set
\[
  D=\{0,1,\dots,g-1\}\cup\{g,2g,\dots,hg\}\subseteq\mathbb{Z}_n ,
\]
a run of $g$ consecutive residues followed by the first $h$ multiples of $g$.
The masking-set family is the set of translates $S_t=D+t$, $t\in\mathbb{Z}_n$,
with $S_t$ assigned to process $t$.

\emph{Size and incidence.} The run and the multiples are disjoint, as $jg\ge g>g-1$
for $j\ge 1$, and $g,2g,\dots,hg$ are distinct as $hg=n/2<n$; hence
$|D|=g+h=\lambda$ and every translate has size $\lambda$. For $x\in\mathbb{Z}_n$
we have $x\in D+t$ iff $t\in x-D$, so $x$ lies in exactly $|D|=\lambda$
translates. The account set is $\bigcup_t S_t=\mathbb{Z}_n$ and every account has
incidence $\lambda$, so the family is admissible under strong untraceability by
Theorem~\ref{thm:untraceability-uniform}.

\emph{Pairwise intersection.} As $(D+t)\cap(D+t')\ne\emptyset$ iff
$t'-t\in D-D$, the translates pairwise intersect iff $D-D=\mathbb{Z}_n$. The run
contributes $[-(g-1),g-1]$, and for each $j\in\{1,\dots,h\}$ the multiple $jg$
contributes $jg-\{0,\dots,g-1\}=[(j-1)g+1,\,jg]$; these $h$ blocks tile $[1,hg]$
without gaps. Hence $D-D\supseteq[-hg,hg]$, which modulo $n$ is
$[0,n/2]\cup[n/2,n-1]=\mathbb{Z}_n$ because $hg=n/2$ by \eqref{eq:gh-half-n}. So
the $n$ vertices $v_{S_t}$ form an $n$-clique of $G$.

\emph{Distinct masking sets.} The translates are pairwise distinct iff the
stabilizer $H=\langle d\rangle$ of $D$ is trivial. As $0\in D$ and $D$ is a union
of $H$-cosets, $H\subseteq D\subseteq[0,hg]=[0,n/2]$; from $n-d\in H$ we get
$n-d\le n/2$, so $d\ge n/2$. If $d=n/2$ then $D$ is a union of pairs
$\{x,x+n/2\}$, which lie in $[0,n/2]$ only for $x=0$, forcing $D=\{0,n/2\}$ and
$\lambda=2$, excluded. Hence $d=n$ and $H$ is trivial.

\emph{Assignment.} Since $0\in D$, process $t$ owns the account $t\in S_t$, so
$t\mapsto S_t$ gives each process a distinct masking set containing one of its
own accounts, and Lemma~\ref{lem:single-tau-lower} applied to this $n$-clique
implements wait-free one-round $n$-consensus.
\end{proof}

Combining the upper bound with the two constructions yields the following.

\begin{corollary}[One-round consensus number under strong untraceability]
\label{cor:one-round-sandwich}
For every $\lambda\ge 3$, the largest $n$ for which $\cuat_\lambda^+$ supports a
wait-free one-round $n$-consensus protocol satisfies
$\lfloor\lambda^2/2\rfloor\le n\le\lambda^2-\lambda+1$, with equality on the
right when $\lambda-1$ is a prime power.
\end{corollary}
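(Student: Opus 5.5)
The corollary is a direct assembly of three results already proved, so I would organise it as a lower bound, an upper bound and a tightness case, each read off from an earlier theorem. Throughout, $n$ denotes the largest number of processes for which $\cuat_\lambda^+$ supports a wait-free one-round $n$-consensus protocol, and $\lambda\ge3$ is fixed.

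\emph{Lower bound.} I would invoke Theorem~\ref{thm:cyclic-one-round}. For every $\lambda\ge3$ it gives a wait-free one-round $\lfloor\lambda^2/2\rfloor$-consensus protocol on $\cuat_\lambda^+$. That protocol runs on the translates $D+t$, which have uniform incidence $\lambda$ and are therefore admissible under strong untraceability, so it lies in the class over which $n$ is the maximum. Hence $n\ge\lfloor\lambda^2/2\rfloor$.

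\emph{Upper bound.} I would invoke Theorem~\ref{thm:one-round-strong-upper}, which bounds every one-round protocol on $\cuat_\lambda^+$ by $\lambda^2-\lambda+1$ processes. One point needs care here. That theorem's proof assumes $d\ge3$, because it goes through Lemma~\ref{lem:single-tau-upper}, which requires a spectator. For $\lambda\ge3$ this is harmless: the lower bound already gives $n\ge\lfloor 9/2\rfloor=4\ge3$, so the maximum is attained in the regime where the clique argument applies. If $d\le2$, the bound $d\le\lambda^2-\lambda+1$ holds trivially in any case. Hence $n\le\lambda^2-\lambda+1$.

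\emph{Equality when $\lambda-1$ is a prime power.} Here Theorem~\ref{thm:single-tau-strong-tight} gives a one-round $(\lambda^2-\lambda+1)$-consensus protocol on the lines of $\mathrm{PG}(2,\lambda-1)$, and that family is admissible under strong untraceability. This matches the upper bound, so $n=\lambda^2-\lambda+1$.

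The only step needing attention is the $d\ge3$ hypothesis in the upper-bound theorem, and the lower bound disposes of it as described. No further obstacle is expected.
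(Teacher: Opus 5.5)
Your proposal is correct and matches the paper's proof, which likewise takes the lower bound from Theorem~\ref{thm:cyclic-one-round}, the upper bound from Theorem~\ref{thm:one-round-strong-upper}, and the prime-power equality from Theorem~\ref{thm:single-tau-strong-tight}. Your remark that the $d\le 2$ case of the upper bound is trivial, since $\lambda^2-\lambda+1\ge 7$ for $\lambda\ge 3$, is sound; the paper leaves this point implicit.
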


\begin{proof}
The upper bound is Theorem~\ref{thm:one-round-strong-upper}, the lower
bound is Theorem~\ref{thm:cyclic-one-round}, and the prime-power case is
Theorem~\ref{thm:single-tau-strong-tight}.
\end{proof}

We now turn to unrestricted protocols, where the demand on the masking sets is
weaker. A one-round protocol needs every pair of its sets to meet; by
Lemma~\ref{lem:bipartite-intersection} an unrestricted protocol needs this only
across the two valency classes, leaving the pairs inside a class free. Fewer pairs must be
covered at the same incidence, so the bound is larger. It follows by counting
how much of the complete bipartite intersection a single account can cover under
uniform incidence.

\begin{theorem}[Upper bound under strong untraceability]
\label{thm:multi-phase-strong-refined}
For every $\lambda\ge 2$, a wait-free consensus protocol on $\cuat_\lambda^+$ can
solve consensus among at most $4\lfloor\lambda/2\rfloor\lceil\lambda/2\rceil$
processes, that is, at most $\lambda^2$ if $\lambda$ is even and at most
$\lambda^2-1$ if $\lambda$ is odd.
\end{theorem}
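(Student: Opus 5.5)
The plan is to follow the proof of Theorem~\ref{thm:multi-phase-upper}, but to replace the crude per-account bound ``at most $r(\Pi)-1$ other sets'' by a count that uses uniform incidence and the split of each account's incidence between the two valency classes. Assume $\Pi$ solves $n$-consensus with $n\ge 3$; otherwise the bound is trivial, since $4\lfloor\lambda/2\rfloor\lceil\lambda/2\rceil\ge 4$. Take a reachable critical configuration $C$ and apply Lemma~\ref{lem:bipartite-intersection}. This gives a common object $O$ and a partition $\mathcal{P}_0\sqcup\mathcal{P}_1$ in which every cross pair of pending masking sets intersects. As in the proof of Theorem~\ref{thm:multi-phase-upper}, cross pairs carry distinct sets. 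Let $\mathcal{F}_0$ and $\mathcal{F}_1$ be the families of pending sets of the two classes, with sizes $A$ and $B$.

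\textbf{Step 1: a history to which uniform incidence applies.} Extend the execution leading to $C$ by letting every process take its pending step. The resulting history $H$ is admissible under strong untraceability, and it contains every set of $\mathcal{F}_0\cup\mathcal{F}_1$. These sets are connected under intersection because the cross pairs intersect. I also need the remaining masking sets of $H$ to be linked to them. Failing that, I restrict attention to the connected component of the intersection graph that contains $\mathcal{F}_0\cup\mathcal{F}_1$, to which the local form of Theorem~\ref{thm:untraceability-uniform} applies equally. Corollary~\ref{cor:uniform-incidence} then gives a common incidence $m$ for every account in that component. For an account $x$, write $a_x$ and $b_x$ for the number of sets of $\mathcal{F}_0$ and of $\mathcal{F}_1$ containing $x$. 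Then $a_x+b_x\le m$.

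\textbf{Step 2: double counting the biclique.} Each cross pair $(S,T)\in\mathcal{F}_0\times\mathcal{F}_1$ meets in at least one account, so $AB\le\sum_x a_xb_x$. For a fixed $S\in\mathcal{F}_0$ we also have $B\le\sum_{x\in S}b_x\le\sum_{x\in S}(m-a_x)$, and symmetrically $A\le\sum_{y\in T}(m-b_y)$ for $T\in\mathcal{F}_1$. I then combine these with $\sum_x(a_x+b_x)=(A+B)\lambda$. The key per-account estimate is $a_xb_x\le\lfloor s/2\rfloor\lceil s/2\rceil$ with $s=a_x+b_x\le m$. The integer product is the source of the floor and ceiling in $4\lfloor\lambda/2\rfloor\lceil\lambda/2\rceil$.

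The goal is to show that the extremal configuration has every account split as evenly as possible, namely $\lfloor\lambda/2\rfloor$ of its sets on one side and $\lceil\lambda/2\rceil$ on the other, with $m=\lambda$. Summing $B\le\sum_{x\in S}b_x$ over a single set $S$ then gives at most $\lambda$ accounts each contributing its share. This is exactly what the grid achieves for even $\lambda$.

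\textbf{Main obstacle.} The double count alone does not bound $n=A+B$. It bounds $AB$ in terms of $n$ and $m$, so I still need (a) a bound $m\le\lambda$, or a trade-off that penalizes large $m$, and (b) control over unbalanced splits with $A\ne B$.

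For (a), I would first try choosing $S\in\mathcal{F}_0$ and $T\in\mathcal{F}_1$ through a common account $z$. This gives
\[
  n=A+B\le\sum_{x\in S}b_x+\sum_{y\in T}a_y ,
\]
and I would bound the right-hand side term by term using $a_x+b_x\le m$, with the account $z$ counted on both sides. The hope is that the slack forces $n\le\lambda\cdot\lceil m/2\rceil+\lambda\cdot\lfloor m/2\rfloor$ only after averaging $S$ and $T$ over the whole families. If that averaging does not close, I would fall back on the Füredi-type bound cited in Theorem~\ref{thm:one-round-strong-upper}, adapted to cross-intersecting pairs of families.

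A second delicate point is that several processes on the same side may share a pending set. This would make $n$ exceed $A+B$, so I must rule it out. I would do so with a spectator argument analogous to the one in Lemma~\ref{lem:bipartite-intersection}.
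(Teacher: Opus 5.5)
Your setup is the paper's: the critical configuration of Lemma~\ref{lem:bipartite-intersection}, per-account class incidences, and the cross-pair count $|\mathcal{P}_0|\,|\mathcal{P}_1|\le\sum_a i_aj_a$. But the proposal stops exactly where the bound is produced. Both items under ``Main obstacle'' stay open, so no inequality on $n$ is ever derived.

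The paper finishes with three moves that you lack.
(1) Every account has total pending incidence $i_a+j_a\le\lambda$, obtained from $vm=d\lambda$ and Fisher's inequality $d\le v$.
(2) Your per-account estimate is used in \emph{linear} form, $ij\le(\kappa/\lambda)(i+j)$ whenever $i+j\le\lambda$, with $\kappa=\lfloor\lambda/2\rfloor\lceil\lambda/2\rceil$. Summed against $\sum_a(i_a+j_a)=s\lambda$, this gives $\sum_a i_aj_a\le s\kappa$.
(3) A balanced input assignment is asserted to force $|\mathcal{P}_0|=|\mathcal{P}_1|=s/2$, so that $s^2/4\le s\kappa$.

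Your fallback for (a), choosing $S$ and $T$ through a common account, only reproduces the bound $n\le2\lambda(m-1)$ of Theorem~\ref{thm:multi-phase-upper} and says nothing about $m$. Your plan to exclude same-side duplicate sets by a spectator argument cannot work either. Two processes of one class have successors of the same valency, so there is nothing to contradict, and valid protocols (for instance several processes on one common masking set) do contain such duplicates. Instead, count incidences per process, with multiplicity, as the paper does. Then $n=|\mathcal{P}_0|+|\mathcal{P}_1|$ directly, and distinctness of sets, same-side or cross-side, is never needed.

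Do not expect to close your gaps by importing the paper's moves (1) and (3).

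Move (1) fails as stated. The form $d\le v$ of Fisher's inequality needs all pairwise intersections to have the same size, and merely intersecting uniform families violate it. For example, all $\lambda$-subsets of a $(2\lambda-1)$-set pairwise intersect and have uniform incidence $\binom{2\lambda-2}{\lambda-1}>\lambda$ for $\lambda\ge3$. Any bound on $m$ therefore has to come from the object model, for instance from ownership, not from the set system.

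Move (3) is worse, because the valency partition at a critical configuration is not controlled by the inputs, and balance carries the entire bound. Take $\lambda=3$:
\begin{itemize}
\item give $\mathcal{P}_0$ the three rows of a $3\times3$ grid of accounts $x_{ij}$;
\item give $\mathcal{P}_1$ the six sets $\{x_{1\sigma(1)},x_{2\sigma(2)},x_{3\sigma(3)}\}$, for $\sigma$ a permutation of $\{1,2,3\}$.
\end{itemize}
Every account has incidence $3$, split $(1,2)$, and every cross pair meets. All the paper's inequalities hold, with equality $18\le18\le18$, at $s=9>8=4\kappa$.

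Now precede this round by per-class first rounds, in the manner of Lemma~\ref{lem:cross-cover-compose}. The three row processes share one common masking set, admissible by the reasoning of Corollary~\ref{cor:luat-untraceability-free}. The other six use the translates of $\{0,1,3\}$ in $\mathbb{Z}_6$, which pairwise intersect with uniform incidence $3$. This appears to give a $9$-process protocol on $\cuat_3^+$ whose critical partition is $3$ against $6$, whatever the inputs.

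So, under the paper's own conventions, the missing step is not merely unproved in your proposal: the stated bound itself is doubtful, at least for $\lambda=3$.
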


\begin{proof}
Suppose $\Pi$ solves $n$-consensus; if $n\le 2$ the bound holds trivially,
since $4\lfloor\lambda/2\rfloor\lceil\lambda/2\rceil\ge 4$, so assume
$n\ge 3$. Take $C$, $\mathcal{P}_0$, $\mathcal{P}_1$, and $O$ as in
Lemma~\ref{lem:bipartite-intersection}, and write
$s=|\mathcal{P}_0|+|\mathcal{P}_1|$. Uniformity first bounds the incidence of
an account by $\lambda$: if $d$ masking sets of size $\lambda$ cover $v$
accounts with common incidence $m$, then $vm=d\lambda$ by double counting,
while Fisher's inequality gives $d\le v$, and hence $m\le\lambda$.

For every account $a$ on $O$, let $i_a$ and $j_a$ count its incidences among
the pending masking sets of $\mathcal{P}_0$ and $\mathcal{P}_1$, respectively.
Every cross-class pair must share an account, so
$|\mathcal{P}_0|\cdot|\mathcal{P}_1|\le\sum_a i_a j_a$. Moreover,
$\sum_a(i_a+j_a)=s\lambda$ and $i_a+j_a\le\lambda$. Put
\[
  \kappa:=\max_{\substack{i+j\le\lambda\\i,j\ge 0}}ij=\lfloor\lambda/2\rfloor\lceil\lambda/2\rceil=\begin{cases}\lambda^2/4 & \lambda\text{ even,}\\ (\lambda^2-1)/4 & \lambda\text{ odd,}\end{cases}
\]
attained when $i$ and $j$ are balanced and $i+j=\lambda$. For every
$i+j\le\lambda$, we have $ij\le(\kappa/\lambda)(i+j)$, and therefore
$\sum_a i_a j_a\le(\kappa/\lambda)\sum_a(i_a+j_a)=s\kappa$. The balanced
input assignment forces $|\mathcal{P}_0|=|\mathcal{P}_1|=s/2$ at the critical
configuration, so
$
  \tfrac{s^2}{4}=|\mathcal{P}_0|\cdot|\mathcal{P}_1|\le\sum_a i_a j_a\le s\,\kappa,
$
giving $s\le 4\kappa=4\lfloor\lambda/2\rfloor\lceil\lambda/2\rceil$, i.e., $\lambda^2$ for even $\lambda$ and $\lambda^2-1$ for odd.
\end{proof}

To attain the bound, split the processes into two groups of
$n=\lfloor\lambda^2/2\rfloor$. Each group first uses
Theorem~\ref{thm:cyclic-one-round} to agree on one value. A second round then
selects between the two group values. The masking sets of this round need only
intersect across the groups, which motivates the following definition.

\begin{definition}[Cross-cover family]
\label{def:cross-cover}
Let $\mathcal{P}_0$ and $\mathcal{P}_1$ be disjoint sets of $n$ processes each. A
\emph{cross-cover family of order $n$} assigns to every $r\in\mathcal{P}_0\cup\mathcal{P}_1$
a masking set $S^{(2)}_r$ on one common $\cuat_\lambda$ object $O$ such that
\begin{enumerate}[(i)]
  \item $|S^{(2)}_r|=\lambda$ for every $r$;
  \item every account of $O$ lies in exactly $\lambda$ of the sets $S^{(2)}_r$; and
  \item $S^{(2)}_p\cap S^{(2)}_q\ne\emptyset$ for every
    $(p,q)\in\mathcal{P}_0\times\mathcal{P}_1$.
\end{enumerate}
\end{definition}

Condition~(iii) is exactly the bipartite intersection that
Lemma~\ref{lem:bipartite-intersection} extracts from any wait-free protocol at a
critical configuration; conditions~(i) and~(ii) are imposed by strong
untraceability. No intersection is required \emph{within} a class, so a
cross-cover family may exceed the one-round bound.

\begin{lemma}[Composition]
\label{lem:cross-cover-compose}
Suppose $\cuat_\lambda^+$ supports a wait-free one-round $n$-consensus protocol,
and that a cross-cover family of order $n$ exists. Then $\cuat_\lambda^+$
supports a wait-free $2n$-consensus protocol, so
$\consnum{\cuat_\lambda^+}\ge 2n$.
\end{lemma}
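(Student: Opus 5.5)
We build a two-stage protocol for the $2n$ processes, split into groups $\mathcal{P}_0$ and $\mathcal{P}_1$ of size $n$ each. In the first stage each group runs its own copy of the assumed one-round $n$-consensus protocol on its own $\cuat_\lambda^+$ object $O_0$ or $O_1$, with its own proposal registers. Process $r\in\mathcal{P}_g$ thus obtains a group value $w_g$ agreed by all of $\mathcal{P}_g$, and $w_g$ is an input of some member of $\mathcal{P}_g$.

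In the second stage every process $r$ writes $w_g$ into a single-writer register $R^{(2)}[r]$. It also writes the new account set it is about to create into a register $A^{(2)}[r]$, exactly as in Figure~\ref{alg:consensus-from-cuat}. It then invokes $\tau(S^{(2)}_r)$ once on the common object $O$ of the cross-cover family. If the transfer succeeds, $r$ decides $w_g$. If it fails, $r$ computes the newly committed accounts of $O$ from the returned account set and locates a process $r'$ with $A^{(2)}[r']$ equal to them. It then decides $R^{(2)}[r']$.

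The agreement argument runs as follows. Linearize the second-stage transfers on $O$ and let $r^\star$ be the first to succeed, say $r^\star\in\mathcal{P}_g$. By condition~(iii), every set $S^{(2)}_q$ with $q\in\mathcal{P}_{1-g}$ meets $S^{(2)}_{r^\star}$, so every such transfer linearized after it fails.

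Processes in $\mathcal{P}_g$ may also succeed, since their sets need not intersect $S^{(2)}_{r^\star}$. However, they all propose the same $w_g$. So deciding own value on success and adopting any committed $\mathcal{P}_g$-writer's register on failure both give $w_g$. A losing process from $\mathcal{P}_{1-g}$ sees only committed sets written by $\mathcal{P}_g$ members, because no $\mathcal{P}_{1-g}$ transfer can succeed after $r^\star$. Hence it adopts $w_g$ as well.

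Two points need care here.
\begin{itemize}
\item Several successful transfers may be visible. The loser therefore matches any one committed new set against the $A^{(2)}$ registers. Account-representation uniqueness makes each match unambiguous, and every match yields $w_g$.
\item A $\mathcal{P}_{1-g}$ transfer linearized \emph{before} $r^\star$ would contradict the choice of $r^\star$ as the first success.
\end{itemize}

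Validity follows because $w_g$ is an input of some process. Wait-freedom follows because both stages are loop-free apart from a bounded scan over $2n$ registers.

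The step I expect to be the main obstacle is admissibility under strong untraceability. The history now contains the masking sets of three objects. Uniformity is required only on masking sets of the history by Theorem~\ref{thm:untraceability-uniform}, which is a local condition. The plan is to keep $O_0$, $O_1$ and $O$ on disjoint account sets, so that incidence on each object is checked separately.
\begin{itemize}
\item On $O_0$ and $O_1$ the first-stage families are uniform by hypothesis.
\item On $O$, condition~(ii) gives incidence exactly $\lambda$ for every account.
\end{itemize}

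A subtlety is that a crashed process may leave some sets uninvoked. As in Definition~\ref{def:conflict-graph}, we therefore measure incidence over the family $\mathcal{S}(\Pi)$ the protocol may issue. This is the same convention under which the one-round constructions were admitted. Finally, we must ensure each $r$ owns an account of $S^{(2)}_r$, which we achieve by assigning ownership of accounts on $O$ accordingly, via a perfect matching as in Theorem~\ref{thm:single-tau-strong-tight}, since the bipartite containment graph is $\lambda$-regular by~(i) and~(ii).
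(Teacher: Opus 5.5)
Your proposal is correct and follows the paper's proof. Each class first agrees on a class value with the one-round protocol on its own object. A single attempt-and-adopt round on the cross-cover object $O$ then lets only one class commit, by condition~(iii). Admissibility comes from the hypothesis together with conditions~(i)--(ii) via Theorem~\ref{thm:untraceability-uniform}. Your extra details fill in points the paper leaves implicit without changing the route: identifying a committed new account set by containment when several same-class transfers commit, keeping the three objects on disjoint accounts so the local uniformity condition splits, and assigning ownership on $O$ by a perfect matching.
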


\begin{proof}
Let $\mathcal{P}_0,\mathcal{P}_1$ be the two classes of the cross-cover family,
of $n$ processes each, and give each class its own $\cuat_\lambda$ object on
which to run the one-round protocol.

\emph{Round 1.} Each class runs the one-round $n$-consensus protocol among its
own $n$ processes, on its own object. Every process of class $v$ thus decides a
common \emph{class value} $w_v$, which is the input of some process of that
class, and the round is admissible under strong untraceability by hypothesis.

\emph{Round 2.} Each process $r$ writes $w_v$ to a single-writer register and
invokes $\tau$ on its masking set $S^{(2)}_r$ on the shared object $O$. By
condition~(iii) of Definition~\ref{def:cross-cover}, the masking sets of any
cross-class pair meet, so once some transfer commits, every transfer of the
\emph{other} class finds a shared account invalidated and fails. All committing
transfers therefore belong to one class, the \emph{winning} class, and at least
one commits, namely the first to linearize. A process that commits decides its
own class value; one that fails inspects the published state of $O$, identifies a
committed masking set, reads off the class of its owner, and adopts that class's
value from the corresponding register.

Every process thus decides the winning class's value $w_v$, which round 1 made
common to that class and which is some process's input, so agreement and validity
hold; both rounds are loop-free and wait-free, so the composition is. Round 2 is
admissible under strong untraceability by conditions~(i) and~(ii) of
Definition~\ref{def:cross-cover} with Theorem~\ref{thm:untraceability-uniform}.
\end{proof}

Lemma~\ref{lem:cross-cover-compose} requires its two ingredients at the same
order, so the cross-cover family must match the order that
Theorem~\ref{thm:cyclic-one-round} supplies for the first round. It remains to
construct one at order $\lfloor\lambda^2/2\rfloor$, which the two parities
require us to do separately.

\begin{lemma}[Cross-cover families exist at order $\lfloor\lambda^2/2\rfloor$]
\label{lem:cross-cover-exists}
For every $\lambda\ge 3$ there is a cross-cover family of order
$n=\lfloor\lambda^2/2\rfloor$.
\end{lemma}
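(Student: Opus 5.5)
The plan is to build the family as a union of ``rows'' and ``columns'' arranged so that the cross intersections are tight. A double count shows this is forced. In a cross-cover family of order $n$ every account has incidence exactly $\lambda$. Hence there are $v=2n$ accounts, and writing $i_a,j_a$ for the incidences of $a$ in the two classes as in the proof of Theorem~\ref{thm:multi-phase-strong-refined}, we need
\[
n^2\le\sum_a i_aj_a\le v\kappa=2n\kappa=n^2 .
\]
So equality must hold throughout. Every cross pair must meet in exactly one account, and every account must split its incidence as $(\lambda/2,\lambda/2)$ when $\lambda$ is even, or as $(k,k+1)$ or $(k+1,k)$ when $\lambda=2k+1$. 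I will build the family with exactly this profile, treating the two parities separately as announced.

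\emph{Even $\lambda=2k$, so $n=2k^2$.} Take the $\lambda\times\lambda$ grid $[\lambda]\times[\lambda]$ as the accounts of $O$. The $\lambda$ rows $\{u\}\times[\lambda]$ will serve $\mathcal{P}_0$, each row assigned to $k$ processes, giving $\lambda k=2k^2=n$ processes. The $\lambda$ columns $[\lambda]\times\{w\}$ will serve $\mathcal{P}_1$ in the same way, each column assigned to $k$ processes. Each set has size $\lambda$, which is condition~(i). Every account lies in one row and one column, hence in $k+k=\lambda$ of the assigned sets, which is condition~(ii). Every row meets every column, which is condition~(iii). The one point to check is that Definition~\ref{def:cross-cover} assigns a set to every process and does not demand distinct sets. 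This is also harmless in Lemma~\ref{lem:cross-cover-compose}, because a failing process only needs the \emph{class} of a committed set, and same-class sharing is benign. If distinctness were wanted, I would replace each row by $k$ distinct cyclic shifts inside a larger grid, keeping the same counting.

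\emph{Odd $\lambda=2k+1$, so $n=2k(k+1)$.} I use a cyclic construction on two layers, $\mathbb{Z}_n\times\{0,1\}$, which gives $2n$ accounts. For $t\in\mathbb{Z}_n$ put
\[
S^{(2)}_{0,t}=\big((A_0+t)\times\{0\}\big)\cup\big((A_1+t)\times\{1\}\big),\qquad
S^{(2)}_{1,t}=\big((B_0+t)\times\{0\}\big)\cup\big((B_1+t)\times\{1\}\big),
\]
with $|A_0|=k$, $|A_1|=k+1$, $|B_0|=k+1$ and $|B_1|=k$. As in Theorem~\ref{thm:cyclic-one-round}, an account on layer $\ell$ has incidence $|A_\ell|+|B_\ell|=\lambda$, and the two layers realise the two alternating patterns $(k,k+1)$ and $(k+1,k)$. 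Each set has size $\lambda$. Condition~(iii) reduces to the difference condition
\[
(B_0-A_0)\cup(B_1-A_1)=\mathbb{Z}_n .
\]
Since $|A_0||B_0|+|A_1||B_1|=2k(k+1)=n$, this must be an exact tiling. I take $A_0=\{0,\dots,k-1\}$ and $B_0=\{0,k,\dots,k^2\}$. Then $B_0-A_0$ is the interval $[-(k-1),k^2]$, which has $k(k+1)$ consecutive residues. For the second layer I take $B_1=\{0,\dots,k-1\}$ and $A_1=\{0,k,\dots,k^2\}+c$. This makes $B_1-A_1$ another interval of $k(k+1)$ consecutive residues, and I choose the offset $c$ so that it is exactly the complementary interval $[k^2+1,\,n-k]$ modulo $n$. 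Finally, the assignment $t\mapsto S^{(2)}_{v,t}$ gives $n$ processes per class.

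The main obstacle is this odd case. I need the interval bookkeeping for $B_1-A_1$ to land precisely on the complementary block, with no overlap or gap modulo $n=2k(k+1)$. I also need $|A_0|,|B_0|,|A_1|,|B_1|$ to be genuine sizes, with no repeated residues, which requires $k^2<n$ and holds for $k\ge1$. I would verify this first for $\lambda=3$ ($n=4$) and $\lambda=5$ ($n=12$) by hand before writing the general argument.
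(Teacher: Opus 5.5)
Your proof is correct. The even case is the paper's grid; the odd case takes a different route.

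\textbf{The paper's odd construction.} It describes each account by its rectangle $X(\cdot)\times Y(\cdot)$ in $\mathcal{P}_0\times\mathcal{P}_1$. The two sides are progressions with steps $2h$ and $2g$, and the cross pairs are split by the parity of $y-x$. It then needs $\gcd(g,h)=1$ and the Chinese remainder theorem to show that $2(gj-hi)$ and $1+2(hi-gj)$ run bijectively over the even and the odd residues. What this buys is a symmetric, account-centric picture: the two account types are both products of progressions, and the exact cover appears as the parity split of its figure.

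\textbf{Your odd construction.} You take the masking sets as translates, so incidence $\lambda$ is automatic. You then split $\mathbb{Z}_n$ into two complementary intervals of length $n/2$ rather than two parity classes. Each interval is covered exactly once by one layer, using the run-plus-multiples tiling already established for $D$ in Theorem~\ref{thm:cyclic-one-round}. No number theory is needed, and both rounds of the composite protocol then rest on the same combinatorial fact.

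\textbf{The step you flag as the obstacle.} It is a one-line choice. Since $B_1-A_1=[-k^2-c,\,k-1-c]$, the offset $c=2k-1$ gives $[k^2+1,\,2k^2+k]=[k^2+1,\,n-k]$. Concretely, $c=1$ for $\lambda=3$ and $c=3$ for $\lambda=5$.

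\textbf{Your opening double count.} Keep it. It shows that exact cover and balanced splits are forced in any cross-cover family of this order. The paper obtains this only as a by-product of its construction.

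\textbf{Your aside on distinct sets.} Distinctness is indeed not required by Definition~\ref{def:cross-cover}, and the paper's grid repeats rows too. However, the proposed fix via a ``larger grid'' cannot work: your own count pins the account set at exactly $2n=\lambda^2$. If distinct sets were wanted, your two-layer scheme already provides them for $\lambda=2k$. Take $A_0=B_1=\{0,\dots,k-1\}$, $B_0=\{0,k,\dots,(k-1)k\}$ and $A_1=B_0+(2k-1)$. This gives a single construction for both parities.
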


\begin{proof}
\emph{Even $\lambda$: a grid.} Here $n=\lambda^2/2$. Partition $\mathcal{P}_0$
into $\lambda$ blocks $P_1,\dots,P_\lambda$ of size $\lambda/2$, and likewise
$\mathcal{P}_1$ into $Q_1,\dots,Q_\lambda$; this is possible since
$|\mathcal{P}_v|=n=\lambda\cdot(\lambda/2)$. On a new object instance take accounts
$\{a_{ij}\}_{i,j\in[\lambda]}$ and set
\[
  S^{(2)}_p=\{a_{ij}:j\in[\lambda]\}\ \ (p\in P_i),
  \qquad
  S^{(2)}_q=\{a_{ij}:i\in[\lambda]\}\ \ (q\in Q_j).
\]
Each masking set is a row or a column of the grid, hence of size $\lambda$,
giving~(i). The account $a_{ij}$ lies in $S^{(2)}_p$ exactly for the $\lambda/2$
processes $p\in P_i$ and in $S^{(2)}_q$ exactly for the $\lambda/2$ processes
$q\in Q_j$, so its incidence is $\lambda/2+\lambda/2=\lambda$, giving~(ii). For
a cross pair $(p,q)\in P_i\times Q_j$, row $i$ and column $j$ meet exactly in
$a_{ij}$, giving~(iii). Figure~\ref{fig:cross-cover-grid} illustrates the
construction for $\lambda=4$.

\begin{figure}[ht]
  \centering
  \begin{tikzpicture}[x=1.15cm,y=1.05cm, every node/.style={font=\small}]
    \foreach \i in {1,...,4} {
      \foreach \j in {1,...,4} {
        \ifnum\i=2
          \ifnum\j=3
            \def\cellcolor{violet!55}
          \else
            \def\cellcolor{blue!22}
          \fi
        \else
          \ifnum\j=3
            \def\cellcolor{orange!35}
          \else
            \def\cellcolor{white}
          \fi
        \fi
        \pgfmathsetmacro{\x}{\j-1}
        \pgfmathsetmacro{\y}{4-\i}
        \draw[fill=\cellcolor] (\x,\y) rectangle ++(1,1);
        \node at (\x+.5,\y+.5) {$a_{\i\j}$};
      }
    }
    \foreach \i in {1,...,4} {
      \pgfmathsetmacro{\y}{4.5-\i}
      \node[anchor=east] at (-.18,\y) {$P_{\i}$};
    }
    \foreach \j in {1,...,4} {
      \pgfmathsetmacro{\x}{\j-.5}
      \node[anchor=south] at (\x,4.12) {$Q_{\j}$};
    }
    \draw[very thick,blue!65] (0,2) rectangle (4,3);
    \draw[very thick,orange!85!black] (2,0) rectangle (3,4);
    \node[blue!65,anchor=west] at (4.25,2.5)
      {$S^{(2)}_p$, $p\in P_2$};
    \node[orange!85!black,anchor=north] at (2.5,-.18)
      {$S^{(2)}_q$, $q\in Q_3$};
  \end{tikzpicture}
  \caption{The even cross-cover construction for $\lambda=4$. Processes in
  block $P_i\subseteq\mathcal{P}_0$ use row $i$, and processes in
  $Q_j\subseteq\mathcal{P}_1$ use column $j$. The highlighted masking sets
  intersect in $a_{23}$. Each block contains $\lambda/2=2$ processes, so every
  account has two row incidences and two column incidences, for total incidence
  $\lambda=4$.}
  \label{fig:cross-cover-grid}
\end{figure}

\emph{Odd $\lambda$: a cyclic design.} Here $n=(\lambda^2-1)/2$, and an account
cannot split evenly between the classes. The construction alternates accounts
of the two types $(g,h)$ and $(h,g)$, arranged cyclically so that even and odd
differences are covered separately. Set
\[
  g=\frac{\lambda-1}{2},\qquad h=\frac{\lambda+1}{2},\qquad
  n=2gh=\frac{\lambda^2-1}{2},
\]
so that $g+h=\lambda$ and, $g$ and $h$ being consecutive integers,
$\gcd(g,h)=1$. Identify both classes with $\mathbb{Z}_n$, writing
$\mathcal{P}_0=\mathcal{P}_1=\mathbb{Z}_n$, and take all arithmetic modulo $n$.
On a new object instance $O$ place $2n$ accounts $c_t,d_t$ for
$t\in\mathbb{Z}_n$,
specifying each by the processes whose masking set contains it: writing
$X(\cdot)\subseteq\mathcal{P}_0$ and $Y(\cdot)\subseteq\mathcal{P}_1$ for those
two sides,
\[
  X(c_t)=\{t+2hi\}_{i=0}^{g-1},\quad Y(c_t)=\{t+2gj\}_{j=0}^{h-1},
\]
\[
  X(d_t)=\{t+2gj\}_{j=0}^{h-1},\quad Y(d_t)=\{t+1+2hi\}_{i=0}^{g-1}.
\]
Equivalently, $c_t$ occupies the rectangle $X(c_t)\times Y(c_t)$ of the
cross-class grid $\mathcal{P}_0\times\mathcal{P}_1$, and $d_t$ the rectangle
$X(d_t)\times Y(d_t)$. Accordingly $S^{(2)}_x=\{c_t:x\in X(c_t)\}\cup\{d_t:x\in X(d_t)\}$
for $x\in\mathcal{P}_0$, and symmetrically on the other side.

\emph{Sizes~(i) and incidence~(ii).} Fix $x\in\mathcal{P}_0$. Then $x\in X(c_t)$
iff $t=x-2hi$ for some $i\in\{0,\dots,g-1\}$, giving exactly $g$ values of $t$,
and $x\in X(d_t)$ iff $t=x-2gj$ for some $j\in\{0,\dots,h-1\}$, giving $h$; hence
$|S^{(2)}_x|=g+h=\lambda$, and symmetrically $|S^{(2)}_y|=h+g=\lambda$ for
$y\in\mathcal{P}_1$. Each account's incidence is the size of its rectangle's two
sides, $g+h=\lambda$ for $c_t$ and $h+g=\lambda$ for $d_t$.

\emph{Cross-pair cover~(iii).} We show the stronger statement that every pair
$(x,y)\in\mathcal{P}_0\times\mathcal{P}_1$ lies in exactly one account. The
argument for both parities of the difference $y-x$ uses the same observation:
as $\gcd(g,h)=1$, the
map $(i,j)\mapsto gj-hi \bmod gh$ is a bijection $[g]\times[h]\to\mathbb{Z}_{gh}$,
since reducing it modulo $g$ leaves $-hi$ with $h$ invertible, recovering $i$,
symmetrically modulo $h$ recovers $j$, and the Chinese remainder theorem
combines the two.

The differences realized inside $c_t$ are $\{2(gj-hi)\}$ and those inside $d_t$
are $\{1+2(hi-gj)\}$, for $i\in[g]$, $j\in[h]$. Doubling the bijection therefore
makes the former a bijection onto the even residues of $\mathbb{Z}_{2gh}$ and the
latter one onto the odd residues, so $c_t$ realizes each even difference exactly
once and $d_t$ each odd difference exactly once. As $t$ ranges over
$\mathbb{Z}_n$, the unique pair inside $c_t$ realizing a given even difference
shifts cyclically through all $n$ pairs of that difference, and likewise for
$d_t$ on odd differences. Every pair has one parity or the other, so it lies in
exactly one account, and in particular $S^{(2)}_x\cap S^{(2)}_y\ne\emptyset$.
Figure~\ref{fig:cross-cover-cyclic} shows this cover for $\lambda=3$.
\end{proof}

\begin{figure}[ht]
  \centering
  \begin{tikzpicture}[x=1.15cm,y=1.05cm, every node/.style={font=\small}]
    \foreach \x in {0,...,3} {
      \foreach \y in {0,...,3} {
        \pgfmathtruncatemacro{\parity}{mod(\y-\x+4,2)}
        \pgfmathsetmacro{\yy}{3-\x}
        \ifnum\parity=0
          \def\cellcolor{blue!22}
        \else
          \def\cellcolor{orange!35}
        \fi
        \draw[fill=\cellcolor] (\y,\yy) rectangle ++(1,1);
      }
    }
    \node at (.5,3.5) {$c_0$};
    \node at (1.5,3.5) {$d_0$};
    \node at (2.5,3.5) {$c_0$};
    \node at (3.5,3.5) {$d_2$};
    \node at (.5,2.5) {$d_3$};
    \node at (1.5,2.5) {$c_1$};
    \node at (2.5,2.5) {$d_1$};
    \node at (3.5,2.5) {$c_1$};
    \node at (.5,1.5) {$c_2$};
    \node at (1.5,1.5) {$d_0$};
    \node at (2.5,1.5) {$c_2$};
    \node at (3.5,1.5) {$d_2$};
    \node at (.5,.5) {$d_3$};
    \node at (1.5,.5) {$c_3$};
    \node at (2.5,.5) {$d_1$};
    \node at (3.5,.5) {$c_3$};
    \foreach \x in {0,...,3} {
      \pgfmathsetmacro{\yy}{3.5-\x}
      \node[anchor=east] at (-.18,\yy) {$x=\x$};
    }
    \foreach \y in {0,...,3} {
      \pgfmathsetmacro{\xx}{\y+.5}
      \node[anchor=south] at (\xx,4.12) {$y=\y$};
    }
    \node[anchor=east,font=\small] at (-.18,4.38) {$\mathcal P_0$};
    \node[anchor=west,font=\small] at (4.18,4.38) {$\mathcal P_1$};
    \draw[fill=blue!22] (4.55,2.75) rectangle ++(.35,.35);
    \node[anchor=west] at (5.02,2.925) {$c_t$: even $y-x$};
    \draw[fill=orange!35] (4.55,2.15) rectangle ++(.35,.35);
    \node[anchor=west] at (5.02,2.325) {$d_t$: odd $y-x$};
  \end{tikzpicture}
  \caption{The odd cyclic cross-cover for $\lambda=3$, where $g=1$, $h=2$,
  and $n=4$. Rows and columns represent processes $x\in\mathcal P_0$ and
  $y\in\mathcal P_1$. Each cell names the unique account shared by the
  corresponding pair. Accounts $c_t$ cover even cyclic differences and
  accounts $d_t$ cover odd differences. Each process is incident to three
  accounts, and each account occurs in three masking sets.}
  \label{fig:cross-cover-cyclic}
\end{figure}

The construction is available for every $\lambda\ge3$ and, by
Lemma~\ref{lem:cross-cover-compose}, matches the upper bound.

\begin{theorem}[Consensus number under strong untraceability]
\label{thm:multi-cuat-strong-tight}
For every $\lambda\ge 2$,
$\consnum{\cuat_\lambda^+}\le 4\lfloor\lambda/2\rfloor\lceil\lambda/2\rceil$, and
for every $\lambda\ge 3$ equality holds:
\[
  \consnum{\cuat_\lambda^+}= 4\lfloor\lambda/2\rfloor\lceil\lambda/2\rceil
  =\begin{cases}\lambda^2 & \text{if }\lambda\text{ is even,}\\
   \lambda^2-1 & \text{if }\lambda\text{ is odd.}\end{cases}
\]
\end{theorem}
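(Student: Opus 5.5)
The theorem combines results already in place, so the plan is to assemble them and check that the orders match. For the upper bound, valid for every $\lambda\ge2$, I will invoke Theorem~\ref{thm:multi-phase-strong-refined} directly. It bounds the number of processes of any wait-free consensus protocol on $\cuat_\lambda^+$ objects and atomic registers by $4\lfloor\lambda/2\rfloor\lceil\lambda/2\rceil$. By the definition of consensus number, this gives $\consnum{\cuat_\lambda^+}\le 4\lfloor\lambda/2\rfloor\lceil\lambda/2\rceil$. The parity split is the arithmetic identity $\lfloor\lambda/2\rfloor\lceil\lambda/2\rceil=\lambda^2/4$ for even $\lambda$ and $(\lambda^2-1)/4$ for odd $\lambda$, already recorded in that proof.

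For the lower bound at $\lambda\ge3$, I will apply Lemma~\ref{lem:cross-cover-compose} with $n=\lfloor\lambda^2/2\rfloor$. Its first hypothesis, a wait-free one-round $n$-consensus protocol on $\cuat_\lambda^+$, is Theorem~\ref{thm:cyclic-one-round}. Its second, a cross-cover family of the same order $n$, is Lemma~\ref{lem:cross-cover-exists}. The lemma then gives $\consnum{\cuat_\lambda^+}\ge 2n=2\lfloor\lambda^2/2\rfloor$. It remains to check $2\lfloor\lambda^2/2\rfloor=4\lfloor\lambda/2\rfloor\lceil\lambda/2\rceil$, which is equation~\eqref{eq:gh-half-n} ($gh=n/2$) from the proof of Theorem~\ref{thm:cyclic-one-round}. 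Explicitly, $\lambda=2k$ gives $2\cdot2k^2=4k^2$, and $\lambda=2k+1$ gives $2\cdot 2k(k+1)=4k(k+1)$.

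The one point that needs care is admissibility of the composed execution under strong untraceability as a whole. Each round uses its own objects: two round-one objects, one per class, and the fresh object $O$ for round two. So the history on each object is either the cyclic family with uniform incidence $\lambda$, or the cross-cover family with incidence $\lambda$. Accounts created by transfers appear in no later masking set, so they are excluded from $\mathcal{A}(H)$. Theorem~\ref{thm:untraceability-uniform} needs incidence to be constant only on each masking set, and masking sets never straddle objects, so the local criterion holds everywhere. I expect this to be the only step needing comment: Corollary~\ref{cor:uniform-incidence} cannot be applied to the whole multi-object history, since the masking sets of different objects are disjoint, but the local form of the theorem suffices. Combining the two bounds gives equality for $\lambda\ge3$.
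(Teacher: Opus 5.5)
Your proposal is correct and follows the paper's proof essentially verbatim. The upper bound is Theorem~\ref{thm:multi-phase-strong-refined}. For the lower bound, Theorem~\ref{thm:cyclic-one-round} and Lemma~\ref{lem:cross-cover-exists} are fed into Lemma~\ref{lem:cross-cover-compose} at $n=\lfloor\lambda^2/2\rfloor$, followed by the check $2n=4\lfloor\lambda/2\rfloor\lceil\lambda/2\rceil$. Your per-object admissibility remark is sound, but it is already discharged inside the proof of Lemma~\ref{lem:cross-cover-compose}, which checks each round's object separately.
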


\begin{proof}
The upper bound is Theorem~\ref{thm:multi-phase-strong-refined}. For the lower
bound, let $n=\lfloor\lambda^2/2\rfloor$. Lemma~\ref{lem:cross-cover-exists}
supplies a cross-cover family of order $n$ and
Theorem~\ref{thm:cyclic-one-round} the one-round $n$-consensus protocol each
class runs, so Lemma~\ref{lem:cross-cover-compose} yields a wait-free
$2n$-consensus protocol on $\cuat_\lambda^+$. Finally
$2n=2\lfloor\lambda^2/2\rfloor=4\lfloor\lambda/2\rfloor\lceil\lambda/2\rceil$
matches the upper bound.
\end{proof}

The prime-power condition is therefore confined to attaining the one-round
upper bound through a projective plane, the exact multi-round value carrying no
arithmetic condition on $\lambda$. Moreover, for every $\lambda\ge3$,
$4\lfloor\lambda/2\rfloor\lceil\lambda/2\rceil>\lambda^2-\lambda+1$; hence
multi-round protocols are strictly more powerful than one-round protocols under
strong untraceability.

Having determined $\consnum{\cuat_\lambda^+}$ through the conflict graph
structure, we now examine how that structure affects parallelization
opportunities and fairness guarantees.

\section{Parallelization and fairness}
\label{sec:parallel-fairness}

Sections~\ref{sec:luat} and~\ref{sec:ruat} read the conflict graph at a single
critical configuration, where it bounds how many processes can agree. Taken over
a whole workload instead, the same graph governs how that workload executes: its
chromatic number is the number of sequential rounds a batch of transfers
requires, and its independence number is how many of them commit in one round.
The conflict structure also settles fairness. No $\cuat_\lambda$ implementation
is starvation-free under an unfair scheduler, while $\luat_\lambda$ is, which
locates the asymmetry between the two objects in the scheduler rather than in
the consensus number. The section closes with the attack surface these
quantities expose and the mitigations available against it.

Fix a workload of transfers $\mathcal{T}$ and let $G_{\mathcal T}=(V,E)$ denote
its workload conflict graph: $V$ contains one vertex per transfer (including
distinct transfers that use the same masking set), and two vertices are adjacent
when their masking sets share an index. Adjacent transfers cannot both succeed
concurrently, since the second to commit finds at least one account invalidated
by the first; non-adjacent transfers act on disjoint account indices and
commute. Between sequential rounds, a pending transfer is \emph{rebased} by
replacing every account representation in its masking set with the current
representation of the same index. This preserves the intended logical transfer
while making its masking set valid after earlier rounds.

\begin{theorem}[Chromatic number and sequential complexity]
  \label{thm:chromatic-sequential}
  In an execution without external interference in which every transfer remains
  balance-eligible when rebased, the chromatic number
  $\chi(G_{\mathcal T})$ equals the minimum number of sequential rounds required
  to execute every transfer in $\mathcal{T}$ successfully, when pending
  transfers may be rebased between rounds.
\end{theorem}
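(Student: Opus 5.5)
The proof will establish the two inequalities separately, by identifying a schedule of sequential rounds with a proper colouring of $G_{\mathcal T}$.

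For the upper bound, fix a proper colouring $c:V\to\{1,\dots,\chi(G_{\mathcal T})\}$ and schedule round $k$ to execute the colour class $c^{-1}(k)$. The invariant is that at the start of round $k$ every pending transfer has been rebased, so each index in its masking set maps to a currently valid representation.

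Within a colour class the masking sets pairwise share no index, since the class is independent. By the same argument used for non-conflicting pairs in Lemma~\ref{lem:bipartite-intersection}, the transfers act on disjoint state and commute. Each therefore succeeds in any linearization order, given its preconditions: $S_\beta\subseteq\mathcal{A}$ by the rebasing invariant, fresh representations by account-representation uniqueness, and the balance condition by the hypothesis that transfers remain balance-eligible when rebased. The absence of external interference ensures that no foreign transfer invalidates these indices mid-round.

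After round $k$, the field-preserving bijection $\phi$ of each committed transfer keeps every index alive. Rebasing the remaining transfers therefore re-establishes the invariant, and induction over $k$ shows that all of $\mathcal T$ commits within $\chi(G_{\mathcal T})$ rounds.

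For the lower bound, take any schedule that executes every transfer of $\mathcal T$ successfully in $R$ rounds, and colour each transfer by the round in which it commits. The required fact is that two adjacent transfers cannot both commit in the same round. Suppose they share index $i$ and are both issued in round $k$ against the same rebased representation $a$ of $i$. Whichever linearizes first removes $a$ from $\mathcal{A}$, replacing it with $\phi(a)$. The second then has $S_\beta\not\subseteq\mathcal{A}$ and fails by the transfer-failure clause, and it cannot be rebased until the next round. The colouring is therefore proper, so $R\ge\chi(G_{\mathcal T})$.

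The main obstacle is formalising "round" precisely enough that the lower bound is sound. A round must be defined as a phase in which every transfer is issued against the representations current at the round's start, with rebasing allowed only between rounds. Without this, a transfer could be rebased mid-round after its neighbour commits and both would count as one round. I would fix this definition at the start of the proof, noting that it is exactly the sense of rebasing stated before the theorem. Transfers reusing the same masking set are distinct vertices that share every index, so the same argument places them in different rounds.
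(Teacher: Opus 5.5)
Your proposal is correct and follows the paper's own proof. The upper bound schedules the classes of a proper colouring as successive rounds, with rebasing in between. The lower bound colours each transfer by the round in which it commits and uses the fact that two index-sharing transfers issued in the same round cannot both succeed. Your explicit convention that rebasing happens only between rounds, and your check of each precondition of the transfer-success clause, spell out what the paper's shorter argument leaves implicit.
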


\begin{proof}
  ($\le$) A proper $k$-coloring of $G_{\mathcal T}$ partitions $V$ into $k$
  independent sets $V_1,\dots,V_k$. Before round $i$, rebase the transfers in
  $V_i$ onto the current account representations and schedule them concurrently.
  No two transfers in $V_i$ share an index, so their $\tau$-invocations act on
  pairwise disjoint accounts and all succeed.

  ($\ge$) Conversely, take any successful schedule using $k$ rounds. Assign to
  each transfer the index of its round. Two transfers in the same round cannot
  share an index, since rebasing preserves indices and one of the two would
  otherwise fail. The assignment is therefore a proper $k$-coloring, so
  $\chi(G_{\mathcal T})\le k$. Hence no schedule uses fewer than
  $\chi(G_{\mathcal T})$ rounds.
\end{proof}

The chromatic number quantifies sequential rounds; the dual quantity, the maximum independent set, quantifies parallel throughput.

\begin{theorem}[Maximum independent set and parallel throughput]
  \label{thm:mis-parallel}
  The independence number $\alpha(G_{\mathcal T})$ equals the maximum number of
  transfers from $\mathcal{T}$ that can execute in parallel in a single round
  with all succeeding.
\end{theorem}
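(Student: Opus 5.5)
The proof mirrors Theorem~\ref{thm:chromatic-sequential}, with independent sets playing the role that color classes played there, and has two directions.

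\emph{Lower bound} ($\ge$). Take a maximum independent set $I\subseteq V$ with $|I|=\alpha(G_{\mathcal T})$. Schedule exactly the transfers of $I$ concurrently in one round, first rebasing them onto the current account representations if necessary. No two of them are adjacent, so no two masking sets share an index. Their $\tau$-invocations therefore act on pairwise disjoint accounts. By account-representation uniqueness, no transfer's newly created accounts appear in another's masking set. Hence, in any linearization order, each transfer finds its whole masking set still valid when it is linearized, and all $|I|$ transfers succeed. Balance eligibility is untouched, because disjoint transfers leave each other's balances unchanged. So $\alpha(G_{\mathcal T})$ transfers can execute in one round with all succeeding.

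\emph{Upper bound} ($\le$). Take any set $T\subseteq\mathcal T$ of transfers that execute in one round with all succeeding. Suppose two of them, $\tau,\tau'\in T$, were adjacent in $G_{\mathcal T}$, so their masking sets share an index. Fix the linearization order and say $\tau$ commits first. The $\phi$ of $\tau$ replaces the account carrying that index by a fresh representation. The old account then leaves $\mathcal A$, so $S_\beta\subseteq\mathcal A$ fails for $\tau'$, and $\tau'$ returns \false. This is exactly the argument already used for Lemma~\ref{lem:single-tau-lower}. Rebasing cannot help here, because within a single round there is no opportunity to rebase between the two commits. Hence $T$ is an independent set and $|T|\le\alpha(G_{\mathcal T})$.

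The step needing the most care is the modeling point in the upper bound. "Single round" must be read as in Theorem~\ref{thm:chromatic-sequential}: all transfers are prepared against the same account representations, and none is rebased mid-round. Under that reading the conflict argument goes through. Two further points must be stated explicitly. First, duplicate vertices that use the same masking set are adjacent, since they trivially share indices, so at most one of them fits in $T$. Second, we assume the same hypotheses as the previous theorem: no external interference, and every transfer remains balance-eligible.
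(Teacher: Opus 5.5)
Your proposal is correct and follows the paper's proof essentially verbatim. An independent set gives index-disjoint, commuting transfers that all succeed; conversely, two adjacent transfers in one round cannot both commit, because the first invalidates the shared index, which is the argument of Theorem~\ref{thm:chromatic-sequential} restricted to one round. Your extra remarks only make explicit what the paper leaves implicit: no rebasing within a round, duplicate masking sets being adjacent, and carrying over the no-interference and balance-eligibility hypotheses.
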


\begin{proof}
  ($\ge$) Any independent set $I\subseteq V$ is a set of pairwise
  non-conflicting transfers, so the masking sets of $I$ are pairwise disjoint in
  the index sense; their $\tau$-invocations commute and all succeed when
  scheduled concurrently. Hence $\alpha(G_{\mathcal T})$ transfers fit in one
  round.

  ($\le$) Conversely, any set of transfers that all succeed concurrently must
  be pairwise non-conflicting (the preceding theorem's argument applied to one
  round) and so forms an independent set in $G_{\mathcal T}$. The maximum such
  set has size $\alpha(G_{\mathcal T})$.
\end{proof}

The conflict graph also exposes a fundamental limitation: when
processes use overlapping masking sets, at most one transfer
succeeds, forcing coordination. While CUAT provides
\emph{wait-freedom}, this only guarantees termination. Transfers may
fail due to concurrent interference on overlapping masking sets, even
when the invoking process has sufficient balance. This structural
property enables starvation under adversarial scheduling.

We consider the standard asynchronous model with an adversarial scheduler
that may delay or advance any correct process arbitrarily, subject only to
preserving program order within each process. Under such a scheduler, we
show that CUAT cannot guarantee even minimal fairness. Specifically, no
implementation can prevent starvation: an adversarial scheduler may
indefinitely deny successful transfers to some processes while allowing
others to succeed repeatedly. While denial-of-service attacks on
randomization-based systems, specifically \emph{Quisquis}, have been conjectured
by B{\"{u}}nz \etal~\cite{DBLP:conf/fc/BunzAZB20}, we prove this
impossibility formally.

\begin{definition}[Starvation-freedom]
  \label{def:lockout}
  An implementation of an untraceable asset-transfer object is
  \emph{starvation-free} if, for every correct process $p$, every infinite
  execution in which $p$ invokes infinitely many eligible transfers contains
  infinitely many successful invocations by $p$. Eligibility is evaluated at
  invocation: the sender account is owned by $p$, its balance covers the chosen
  value, and the masking set is valid. For $\cuat_\lambda$, successive
  invocations may follow the current representations of the same logical
  accounts through rerandomization; for $\luat_\lambda$, they may spend
  successive unspent accounts owned by $p$.
\end{definition}
This notion is deliberately minimal: it requires neither bounded
waiting nor proportional success rates, only that a process which
persistently attempts transfers and is frequently eligible cannot be
starved indefinitely.
\begin{theorem}[Impossibility of starvation-freedom]
  \label{thm:no-fairness}
  No $\cuat_\lambda$ implementation can guarantee starvation-freedom
  under an asynchronous, unfair
  scheduler. Specifically, there exists a schedule in which one correct
  process $p^\star$ succeeds infinitely often along the rerandomized lineage of
  a masking set $S$, while another correct contender on that lineage is
  starved: its infinitely many
  invocations all return $\false$. This holds even when balances pose no
  fundamental limit, since zero-value transfers ($v=0$) keep $p^\star$
  perpetually eligible.
\end{theorem}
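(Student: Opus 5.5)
We prove the statement by exhibiting an explicit adversarial schedule, not by a valency argument. Starvation-freedom (Definition~\ref{def:lockout}) quantifies over every infinite execution, so a single execution of any $\cuat_\lambda$ implementation suffices. In it, a correct process $q$ invokes infinitely many eligible transfers and all of them return $\false$. The construction uses only the sequential specification of Definition~\ref{def:ruat-lambda} and linearizability, so it applies to every implementation.

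\emph{Setup.} Take a single $\cuat_\lambda$ object and a masking set $S$ of $\lambda$ indices. Let $p^\star$ own one account of $S$ and $q$ own another; the remaining accounts may belong to anyone. Give $q$ a positive balance, and let both processes transfer the value $v=0$ within $S$. Each process repeatedly reads the current representations of the indices of $S$, which every process can observe in the published account set, and invokes $\tau$ on the corresponding rerandomized lineage. A zero-value transfer only requires $p\in\mu(a_s)$ and $\beta(a_s)\ge 0$, and it leaves all balances unchanged. Hence both processes stay balance-eligible forever, and the objection that balances might run out cannot apply.

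\emph{Schedule.} Build the execution in phases $k=1,2,\dots$. In phase $k$, let $q$ read the current representations $S^{(k)}$ and compute its transfer; at that moment its masking set is valid, so the invocation is eligible in the sense of Definition~\ref{def:lockout}. Before $q$'s $\tau$-step takes effect, schedule $p^\star$ to complete an entire zero-value transfer on $S^{(k)}$. Then let $q$'s invocation linearize. Since $p^\star$'s transfer succeeded, it removed every account of $S^{(k)}$ and replaced them by $S^{(k+1)}$ with the same indices. So $S_\beta\not\subseteq\mathcal{A}$ for $q$'s invocation, and the failure rule of Definition~\ref{def:ruat-lambda} returns $\false$.

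\emph{Validity of the schedule.} Both processes take infinitely many steps, so both are correct. The scheduler is asynchronous and unfair but preserves program order within each process. For the implementation-level claim, run $p^\star$ solo until its operation responds, which wait-freedom guarantees happens in finitely many steps. Only then let $q$ take the steps of its pending operation. This forces $p^\star$'s operation to precede $q$'s in real time, so every linearization orders it first. Therefore $p^\star$ succeeds infinitely often and $q$ fails infinitely often while always being eligible at invocation.

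The step that needs the most care is reconciling ``eligible at invocation'' with ``fails at linearization''. The argument must show that $q$'s masking set is valid when $q$ invokes the transfer, while $p^\star$'s complete operation still fits entirely between $q$'s invocation and $q$'s linearization point. I would handle this by fixing the invocation event as the moment eligibility is evaluated, since Definition~\ref{def:lockout} says exactly this. Wait-freedom then bounds $p^\star$'s solo run, so the interleaving is finite in each phase. Finally, strong untraceability is not violated: every transfer in the history uses the lineage of $S$, so all accounts of $S$ have equal incidence and Theorem~\ref{thm:untraceability-uniform} admits the execution under either notion.
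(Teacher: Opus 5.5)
Your proposal is correct and is essentially the paper's proof. In each round the scheduler lets $p^\star$'s zero-value transfer linearize between the victim's eligible invocation and its linearization point. This replaces the whole masking set, so the victim's transfer fails; the victim then rebases onto the new representations and the pattern repeats. Your extra touches are sound refinements that the paper leaves implicit: you also give the victim $v=0$, and you check admissibility under strong untraceability via Theorem~\ref{thm:untraceability-uniform}.

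There is one slip in your implementation-level step. Because $q$ invokes before $p^\star$ starts, the two operations overlap, and real-time order does \emph{not} force $p^\star$ first. The correct reason is that $q$ has not yet taken a step of its operation. So $p^\star$'s solo run is indistinguishable from one in which $q$ never invoked, and it must return \true. The deterministic specification then admits only the linearization with $p^\star$ before $q$, in which $q$ returns \false.
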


\begin{proof}
  Fix a masking set $S=\{a_1,\dots,a_\lambda\}$ with two distinct accounts $a,a^\star\in S$ owned by processes $p$ and $p^\star$, so $\mu(a)=\{p\}$ and $\mu(a^\star)=\{p^\star\}$. Let $b,b^\star\in S$ be arbitrary destinations. We construct an infinite execution in \emph{rounds} $r=1,2,\dots$ where both processes repeatedly attempt transfers on the current representations of these logical accounts, and the scheduler resolves the race in favor of $p^\star$ every round. To avoid cumbersome indices, within each round we write $S,a,a^\star,b,b^\star$ for those current representations.

  \emph{Round $r$.} Process $p$ invokes
  $\tau_r=\tr{S'_p,\pi_p}$ with $\pi_p=(S,a,b,v,p)$ for some
  $v\le\beta(a)$. At invocation, $S$ is valid and the transfer is eligible. The
  scheduler pauses $p$ before its linearization point and lets $p^\star$
  complete its zero-value transfer
  $\tau^\star_r=\tr{S'_\star,\pi^\star}$ with witness
  $\pi^\star=(S,a^\star,b^\star,0,p^\star)$. This invocation succeeds:
  $S\subseteq\mathcal{A}$ holds at its linearization point, and
  $\beta(a^\star)\ge 0$ trivially. It invalidates all of $S$ and publishes the
  newly generated masking set $S'_\star$, so the post-state has
  $S\cap\mathcal{A}=\emptyset$.

  The scheduler now resumes $p$. Because $S\cap\mathcal{A}=\emptyset$ at
  $\tau_r$'s linearization point, the precondition
  $S\subseteq\mathcal{A}$ fails and $\tau_r$ returns $\textsc{false}$.

  At the start of round $r+1$, $p$ rebases its next invocation onto the latest
  valid masking set containing the current descendant of $a$, and the pattern
  repeats. The scheduler iterates the same protocol indefinitely.

  \emph{Why $p^\star$ stays eligible.} $p^\star$'s value parameter is $v^\star=0$, so its balance is preserved across rounds. Its accounts after each round are replaced by the rerandomization of $S$. Hence $p^\star$ never runs out of eligible transfers, and the schedule is well-defined for all $r\ge 1$.

  \emph{Conclusion.} $p^\star$ succeeds infinitely often (one success per round) while every invocation of $p$ returns $\textsc{false}$. Starvation arises not from balance exhaustion or crash failure, but from the unfair scheduling of an adversary who can always insert $p^\star$'s commit before $p$'s on the shared masking set.
\end{proof}

The impossibility is specific to the constant-state object. Its linear-state
counterpart is starvation-free, for the same reason that its consensus number
does not depend on $\lambda$ (Corollary~\ref{cor:luat-untraceability-free}): a
$\luat$ transfer consumes one account rather than its whole masking set, so a
competitor cannot invalidate it.

\begin{theorem}[$\luat$ guarantees starvation-freedom]
  \label{thm:luat-fairness}
  Every $\luat_\lambda$ implementation is starvation-free. If a correct process
  repeatedly invokes transfers spending distinct unspent accounts it owns, with
  masking sets valid at invocation, then infinitely many of those invocations
  succeed, under any asynchronous scheduler.
\end{theorem}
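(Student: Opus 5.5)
The plan is to reduce starvation-freedom to the observation that, once a transfer of a correct process $p$ is eligible at invocation, no operation of any other process can make it fail. Fix an infinite execution in which $p$ invokes infinitely many eligible transfers $\tau_1,\tau_2,\dots$, each spending a distinct unspent account $a_s^{(k)}$ with $\mu(a_s^{(k)})=\{p\}$ or at least $p\in\mu(a_s^{(k)})$. The statement speaks of unspent accounts the process owns, so in the unshared setting we take $\mu(a_s^{(k)})=\{p\}$, and we handle sharing separately below. The key point is that $\tau_k$ succeeds at every state reachable from its invocation state that is produced by operations of other processes. Since $p$ is sequential, no other invocation of $p$ is concurrent with $\tau_k$. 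Then \emph{every} eligible invocation succeeds, which is stronger than infinitely many.

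The heart of the argument is a monotonicity lemma obtained by inspecting the success clause of Definition~\ref{def:uat-lambda}, in the style of Lemma~\ref{lem:luat-commute}. We check each precondition of $\tau_k$ at the invocation state $q$ and show it persists along any sequence of operations by processes other than $p$ that lies between invocation and linearization.

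\begin{itemize}
\item $S\subseteq\mathcal{A}_{\text{allow}}$ persists because the allow-set only grows.
\item $|S|\ge\lambda$, $a_s\in S$, $p\in\mu(a_s)$, and $\mathsf{Nullify}(a_s)=\nu$ are static.
\item $\nu\notin\mathcal{N}$ persists because any transfer adding $\nu$ must spend $a_s$, which requires its issuer to lie in $\mu(a_s)=\{p\}$.
\item The balance $\beta(a_s)$ is unchanged, since transfers only assign balances on new accounts and preserve $\beta$ on the existing allow-set. Hence the balance equation $\sum\beta_{\text{new}}=\beta(a_s)$ still holds.
\item $\mathcal{A}_{\text{new}}\cap\mathcal{A}_{\text{allow}}=\emptyset$ persists by account-representation uniqueness, since other processes never create $p$'s representations.
\end{itemize}

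By linearizability, $\tau_k$ takes effect at some state in this family, so it succeeds. Reads leave the state unchanged, and failed transfers leave it unchanged, so they are covered trivially.

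The step I expect to need care is pinning down what ``eligible at invocation'' must include so that this persistence argument is airtight. Eligibility as given in Definition~\ref{def:lockout} mentions ownership, balance, and masking-set validity, but the nullifier freshness condition is implicit in ``unspent''. I will state explicitly that an unspent account is one whose nullifier is absent from $\mathcal{N}$ at invocation, and that the new representations are fresh.

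For shared accounts the argument breaks, because another owner may spend $a_s$ first. I will therefore read ``accounts it owns'' as accounts with $\mu(a_s)=\{p\}$, which matches the unshared setting of the theorem. I will remark that with a $k$-shared account only the sharing co-owners can interfere, which is the $\luat$ analogue of the contention counted in Theorem~\ref{thm:k-uat-consensus-number}. Finally, I will contrast this with Theorem~\ref{thm:no-fairness}: the proof never uses the masking set beyond $S\subseteq\mathcal{A}_{\text{allow}}$, which is exactly why naming decoys cannot be invalidated.
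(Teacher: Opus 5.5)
Your proposal is correct and follows essentially the same route as the paper. Both arguments check each success precondition of the $\luat_\lambda$ transfer at its linearization point and show that none of them can be disturbed by other processes: the allow-set only grows, only an owner can add the nullifier (your step ``any transfer adding $\nu$ must spend $a_s$'' is where the paper explicitly invokes collision-resistance of $\mathsf{Nullify}$), balances on existing accounts are preserved, and account representations are unique. Your explicit restriction to $\mu(a_s)=\{p\}$ is the assumption the paper uses implicitly when it writes that no other process owns $a_r$.
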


\begin{proof}
  Let $p$ repeatedly invoke transfers, the $r$-th spending an account $a_r$ it
  owns whose nullifier $\nu_r=\mathsf{Nullify}(a_r)$ has never entered the
  deny-set. We show each such invocation succeeds. Its preconditions at the
  linearization point are: $S_r\subseteq\mathcal{A}_{\text{allow}}$, which holds
  because $S_r$ was valid at invocation and the allow-set only grows;
  $p\in\mu(a_r)$, which is local to $p$; $\nu_r\notin\mathcal{N}$, which holds
  because only a transfer spending $a_r$ can add $\nu_r$, distinct accounts have
  distinct nullifiers by collision-resistance of $\mathsf{Nullify}$, and no other
  process owns $a_r$; $\mathcal{A}_{\text{new},r}$ contains only previously
  unused representations, by account-representation uniqueness; and the
  balance condition, which holds because a transfer never
  alters balances on the existing allow-set. Every precondition is therefore
  insensitive to concurrent activity, so the invocation succeeds regardless of the
  schedule, and all infinitely many of them do.
\end{proof}

The two objects differ in what a scheduler can revoke. Under $\luat$ the deny-set
is monotone and a nullifier is consumed only by an owner of its account, so a
process that is eligible remains eligible; under $\cuat_\lambda$ a competitor's
commit invalidates the entire masking set, and eligibility is exactly what the
scheduler controls. The state growth of Section~\ref{subsec:luat-gc} is what
buys this, since spent accounts are never reclaimed and a pending transfer's
masking set cannot be invalidated. Constant state and starvation-freedom are thus
in tension, and $\lambda$ does not mediate it in either direction. Escaping the
$\cuat_\lambda$ conflict is moreover not sufficient for asynchronous progress.
Anonymous Zether~\cite{DBLP:conf/fc/BunzAZB20} makes invalidation sender-local
and so admits no competitor-driven starvation, yet bounds its deny-set by
expiring nullifiers at epoch boundaries, which lets a scheduler that defers
inclusion past every boundary starve a process just as effectively. Theorem~\ref{thm:luat-fairness} rests on eligibility being irrevocable once
held, and a deadline revokes it no less than a competitor does.

The three theorems above support three correlated attack vectors against an honest batch.
\begin{itemize}
  \item \emph{Throughput attack.} By submitting transfers whose masking sets
    intersect honest ones, an adversary can increase $\chi(G_{\mathcal T})$ and
    hence the number of sequential rounds required to commit the enlarged
    workload. Although adding vertices cannot decrease
    $\alpha(G_{\mathcal T})$, adversarial transfers that a scheduler admits in
    a round can exclude adjacent honest transfers from that round and thereby
    reduce honest throughput.
  \item \emph{Starvation attack.} On any chosen masking set, an adversary in concert with an unfair scheduler can starve the honest contenders by issuing repeated zero-value transfers. The attacker's eligibility is renewed each round without cost.
  \item \emph{Amplification through masking-set incidence.} Increasing
    $\lambda$ gives each adversarial transfer more account indices through which
    it may conflict with honest transfers, but its actual reach is determined by
    the incidence of those indices in the workload. Strong untraceability
    regularizes this incidence rather than eliminating it, so privacy and
    contention remain coupled through the conflict graph.
\end{itemize}

B\"{u}nz \etal~\cite{DBLP:conf/fc/BunzAZB20} conjectured this attack pattern against \emph{Quisquis}; the three theorems characterize it precisely.

\paragraph{Instantiations.}
In a deployment, a masking set is an anonymity set or ring, invalidation is the
publication of a nullifier or key image, and a conflict is the rejection of a
transfer whose inputs another transfer has already consumed.
Sui~\cite{suilutris2024} and Zef~\cite{zef2023} commit transfers on disjoint
accounts in parallel and invoke the BFT layer only on contention; $\chi(G)$ and
$\alpha(G)$ bound what that architecture yields once the transferred object is
untraceable rather than public, and Theorem~\ref{thm:no-fairness} denies its
fast path starvation-freedom. Anonymous Zether~\cite{DBLP:conf/fc/BunzAZB20} is a
documented case of a design that met this conflict and avoided it. Its authors
first consider incrementing a nonce for every account of the anonymity set, and
observe that a transfer is then rejected whenever another transfer touches one
of its accounts first, which is exactly an edge of $G$. They reject that scheme
and derive the nonce from the sender's secret key instead, so a transfer
consumes only its own sender's nonce and leaves the rest of the set spendable.
Invalidation becomes sender-local, transfers commute, and the object is
$\luat_\lambda$ rather than $\cuat_\lambda$. The nonce set is a deny-set,
discarded at each epoch boundary, so the state stays bounded without account
replacement. The price is a concurrency bound: the nonce set admits one nonce
per account per epoch, so each account has exactly one transfer in flight at a
time and the epoch length fixes its throughput. This is the per-account form of
the rate limiting listed below, adopted here to bound the deny-set rather than
to resolve contention. The contrast with $\luat_\lambda$
is visible in the same setting. A Monero ring or a Zcash note-commitment set is
referenced but not consumed, so two transfers sharing decoys both commit and the
conflict graph has no edges; a repeated key image or nullifier is a double spend
rather than contention. This is the deployed form of the separation of
Section~\ref{subsubsec:luat-untr-cons}: the masking set carries privacy in both
objects, but only in $\cuat_\lambda$ does it also carry synchronization.

The proof of Theorem~\ref{thm:no-fairness} uses three ingredients: a wait-free
implementation, a scheduler that decides the order of steps, and transfers that a
process may repeat at no cost. A deployment that wants starvation-freedom must
weaken one of them. The main options are the following.
\begin{itemize}
  \item \emph{Synchrony assumptions.} Under a scheduler with bounded step delays,
    or under any fair scheduler, an adversary can no longer take unboundedly many
    steps between two steps of its victim, which is what the proof needs; the
    progress guarantee then holds only as long as the timing assumption does.
  \item \emph{Per-transfer fees.} Charging for each attempt, as Bitcoin and
    Ethereum do, limits how many transfers an adversary can issue, but honest
    users pay the fee as well.
  \item \emph{Scheduler with first-contender tracking.} Recording which transfer
    reached a masking set first, by timestamp or sequence number, breaks the
    symmetric race the proof relies on and keeps wait-freedom, but the recorded
    order reveals timing.
  \item \emph{Per-set rate limiting.} Allowing a masking set to be attempted only
    a fixed number of times per time window stops an adversary from repeating
    attempts on it, but restricts honest users in the same way.
  \item \emph{Obstruction- or lock-freedom with randomized backoff.} Replacing
    wait-freedom by a weaker progress condition~\cite{DBLP:conf/icdcs/HerlihyLM03}
    gives starvation-freedom with high probability, but no longer guarantees that
    every operation terminates.
\end{itemize}
Each option drops a different requirement, and none of them keeps all three
ingredients.

\paragraph{Practical implications.}
The quantities $\chi(G_{\mathcal T})$ and $\alpha(G_{\mathcal T})$ depend on the
workload, not on the object alone, and a deployment can act on them. At a fixed
$\lambda$, it chooses which transfers are batched together and in what order, and
so changes both quantities. It does not choose the range they can take: strong
untraceability forces uniform incidence, and uniform incidence keeps the conflict
graph dense. Fairness cannot be tuned at all. By Theorem~\ref{thm:no-fairness} a
deployment has to accept one of the options above, and each of them gives up
asynchrony, wait-freedom, a fee-free interface, timing privacy, or the equal
treatment of honest users. A system that wants to keep all five should not use its anonymity
mechanism to bound the state: $\luat_\lambda$ has consensus number $2$ for every
$\lambda$, is starvation-free, and by
Theorem~\ref{thm:part-consensus-number} its storage can be reclaimed without
additional coordination. Both families pay for sender privacy; they differ in
whether they pay in storage or in synchronization.

\section{Conclusion}\label{sec:conclusion}
We have located both families of untraceable cryptocurrency in the consensus
hierarchy. The comparison amounts to a trade-off between state and
synchronization.

The linear-state object $\luat_\lambda$, which models Zcash and Monero, has
consensus number $2$ independently of $\lambda$, and $\max(k,2)$ when $k$
processes share an account. Its transfers on distinct accounts commute, but
their responses expose enough of the resulting deny-set to distinguish their
positions in the linearization order and solve two-process consensus. This
information, absent from standard asset transfer, accounts for the floor at
$2$. The object is starvation-free, and the partitioning object that
garbage-collects its state is equivalent to fetch-and-add and hence also of
consensus number $2$, so garbage collection carries no synchronization cost.
Drawing the partition from a randomness beacon, as resistance to grinding
requires, leaves that number unchanged and trades the ordering rule for an
assumption on the number of correct processes. The cost of $\luat_\lambda$ is
storage.

The constant-state object $\cuat_\lambda$, which models \emph{Quisquis} and
similar systems, removes that cost. Under weak untraceability it has unbounded
consensus number already with one-round protocols, via a hub-account family.
Under strong untraceability we prove an equivalence: the object satisfies strong
untraceability on a history exactly when any two accounts sharing a masking set
appear in the same number of the masking sets occurring in that history. This
uniformity caps the one-round upper bound at
$\lambda^2-\lambda+1$, attained by the line set of $\mathrm{PG}(2,\lambda-1)$
when $\lambda-1$ is a prime power, and caps $\consnum{\cuat_\lambda^+}$ at
$4\lfloor\lambda/2\rfloor\lceil\lambda/2\rceil$. The latter bound we attain for
every $\lambda\ge 3$, composing a per-class one-round protocol built from a
cyclic difference cover with a grid or cyclic cross-cover family, so
$\consnum{\cuat_\lambda^+}=\lambda^2$ for even $\lambda$ and $\lambda^2-1$ for
odd, with no arithmetic condition on $\lambda$. The same cyclic construction
gives the only one-round lower bound available when $\lambda-1$ is not a prime
power, placing the one-round regime between $\lfloor\lambda^2/2\rfloor$ and
$\lambda^2-\lambda+1$. Finally, $\cuat_\lambda$ cannot guarantee
starvation-freedom under asynchronous scheduling: an unfair scheduler starves any
process via repeated zero-value transfers on its masking set.

Both objects satisfy the same uniformity characterization, which locates the
difference between them. Untraceability constrains which masking sets either
object may use; $\consnum{\luat_\lambda}$ does not depend on that choice, whereas
$\consnum{\cuat_\lambda^+}$ is determined by it, because a $\luat$ transfer names
its masking set as decoys while a $\cuat$ transfer consumes it. Privacy is thus
not by itself costly in synchronization. It becomes costly when the anonymity
mechanism is also made to carry the state bound, at a price quadratic in the
privacy parameter.

Several directions remain open.
\begin{itemize}
  \item \emph{Other privacy mechanisms.} We analyze masking-set untraceability,
    in its linear-state and constant-state forms. Mix nets, ORAM-style designs
    and related mechanisms fall outside both objects, and their synchronization
    cost has to be determined separately.
  \item \emph{Designs between the two objects.} $\luat_\lambda$ has linear state
    and consensus number $2$, and $\cuat_\lambda^+$ has constant state and
    consensus number $4\lfloor\lambda/2\rfloor\lceil\lambda/2\rceil$. Designs
    whose state grows sublinearly in the number of transfers, through batched
    revocation or hierarchical commitments, lie between the two and are not yet
    placed in the hierarchy. Anonymous Zether~\cite{DBLP:conf/fc/BunzAZB20}
    occupies part of the gap: it keeps the $\luat_\lambda$ semantics and
    consensus number $2$, and bounds its deny-set by one epoch of activity. It
    pays in concurrency rather than in state, since an epoch-scoped nullifier
    admits one nonce per account per epoch, so each user has one transfer in
    flight and the epoch length fixes per-account throughput. Whether the gap
    can be closed without such a cap, and without the synchrony that epochs
    require, is open.
  \item \emph{Fairness under weaker assumptions.} Theorem~\ref{thm:no-fairness}
    assumes wait-freedom, an asynchronous scheduler, and transfers that cost
    nothing to repeat, and Section~\ref{sec:parallel-fairness} lists the ways to
    give up one of them. Which relaxation buys starvation-freedom at the smallest
    price is open, and in particular whether obstruction- or
    lock-freedom~\cite{DBLP:conf/icdcs/HerlihyLM03} with randomized backoff gives
    starvation-freedom with high probability without weakening untraceability.
  \item \emph{Authenticated data structures.} Both objects are maintained on
    structures whose concurrent behavior is itself unsettled: Merkle trees for
    the allow-set, and accumulators for membership proofs. Their updates and
    proof computations are specified sequentially, and how much of that
    maintenance can proceed without agreement is not known. Christ and
    Bonneau~\cite{DBLP:conf/fc/ChristB23} bound the storage and proof-update cost
    of such structures; the corresponding question for synchronization is open.
\end{itemize}

Our results measure privacy the way the consensus hierarchy measures any other
requirement, by where the object that provides it lands. On that scale, sender
untraceability is free only as long as it is not also made to bound the state;
where it is, the consensus number is quadratic in $\lambda$.

\printbibliography

 \end{document}